%% file: paper.tex
\documentclass[acmsmall,nonacm]{acmart}
\AtBeginDocument{%
  }
\allowdisplaybreaks
\usepackage{blindtext}
\usepackage[ruled,vlined,linesnumbered]{algorithm2e}
\usepackage{graphicx,enumitem,booktabs,amsmath}
\usepackage{psfrag}
\usepackage{marginnote}
\usepackage{tikz}
\usepackage{listings}
\usepackage[tableposition=top]{caption}
\usepackage{subfigure}
\usepackage{color}
\usepackage{hyperref}

\usepackage{mathabx}
\usepackage[utf8]{inputenc}
\usepackage{xcolor,colortbl}
\usepackage{hhline}
\usepackage{cancel}
\usepackage{tikzscale}
\usepackage{soul} 
\usepackage{balance}
\usepackage{multirow}
\renewcommand{\paragraph}[1]{\smallskip \noindent{\bf {#1. }}}
\newcommand{\randint}{\texttt{RandInt}}
\newcommand{\OUT}{\textsf{\upshape OUT}}
\newcommand{\true}{\textup{\textbf{true}}}
\newcommand{\false}{\textup{\textbf{false}}}
\newcommand{\D}{\mathcal{D}}
\newcommand{\E}{\mathcal{E}}
\newcommand{\Q}{\mathcal{Q}}
\newcommand{\V}{\mathcal{V}}

\newcommand{\dom}{\mathrm{dom}}  
\newcommand{\y}{\mathbf{y}}
\newcommand{\R}{\mathcal{R}}

\renewcommand{\O}{\Tilde{O}}
\newcommand{\cut}[1]{{}}
\newcommand{\adom}{\mathsf{adom}}
\newcommand{\heavy}{\mathsf{heavy}}
\newcommand{\light}{\mathsf{light}}
\newcommand{\expt}{\mathbb{E}}
\newcommand{\var}{\textsf{Var}}
\renewcommand{\star}{\Q_{\textsf{star}}}
\newcommand{\chain}{\Q_{\textsf{chain}}}
\renewcommand{\matrix}{\Q_{\textup{\textsf{matrix}}}}

\input{def/jc-def}

\begin{document}

    \author{Xiao Hu}
	\email{xiaohu@uwaterloo.ca}
	\orcid{0000-0002-7890-665X}
	\affiliation{
		\institution{University of Waterloo}
		\streetaddress{200 University Ave W}
		\city{Waterloo}
		\state{Ontario}
		\country{Canada}
		\postcode{N2L 3G1}
	}
    
    \author{Jinchao Huang}
	\email{jchuang@se.cuhk.edu.hk}
	\orcid{0009-0009-2902-259X}
	\affiliation{
		\institution{The Chinese University of Hong Kong}
		\streetaddress{}
		\city{Shatin}
		\country{Hong Kong SAR}
		\postcode{}
	}
    \renewcommand{\shortauthors}{X. Hu and J. Huang}

    \title{Towards Output-Optimal Uniform Sampling and Approximate Counting for Join-Project Queries}
    
    \begin{abstract}
    Uniform sampling and approximate counting are fundamental primitives for modern database applications, ranging from query optimization to approximate query processing. While recent breakthroughs have established optimal sampling and counting algorithms for full join queries, a significant gap remains for join-project queries, which are ubiquitous in real-world workloads. The state-of-the-art ``propose-and-verify'' framework \cite{chen2020random} for these queries suffers from fundamental inefficiencies, often yielding prohibitive complexity when projections significantly reduce the output size.

    In this paper, we present the first asymptotically optimal algorithms for fundamental classes of join-project queries, including matrix, star, and chain queries. By leveraging a novel rejection-based sampling strategy and a hybrid counting reduction, we achieve polynomial speedups over the state of the art. We establish the optimality of our results through matching communication complexity lower bounds, which hold even against algebraic techniques like fast matrix multiplication. Finally, we delineate the theoretical limits of the problem space. While matrix and star queries admit efficient sublinear-time algorithms, 
    we establish a significantly stronger lower bound for chain queries, demonstrating that sublinear algorithms are impossible in general.
    \end{abstract}
    
	
	
	\ccsdesc[500]{Theory of computation~Database query processing and optimization (theory)}
	
	\keywords{Join-project query, Fast matrix multiplication, Yannakakis framework}

    \maketitle
   
    \section{Introduction}
    \label{sec:intro}
    Joins constitute the backbone of relational database systems, serving as the fundamental primitive for combining data across multiple relations. However, in the era of big data, the sheer volume of join results often renders exact evaluation computationally prohibitive. While worst-case optimal join algorithms~\cite{ngo2012worst, ngo2018worst} have settled the worst-case complexity up to the AGM bound~\cite{atserias2008size}, the join size can still be polynomially larger than the input size $N$ (i.e., the number of tuples in the database), reaching up to $N^{\rho^*}$, where $\rho^*$ is the fractional edge covering number of the underlying join query. Consequently, for many downstream applications---such as approximate query processing~\cite{agarwal2013blinkdb}---computing the exact result is both expensive and unnecessary. This computational bottleneck has necessitated a shift towards \emph{uniform sampling} and \emph{approximate counting}, which share strong algorithmic connections and complexity characteristics. The goal is to generate random samples from the query result or estimate the output size without materializing the full query result.
    
    For full join queries (where all attributes are retained without projection), these two problems have been extensively studied by the database community~\cite{olken1986simple,olken1995random,acharya1999join,chaudhuri1999random,zhao2018random,deng2023join,kim2023guaranteeing} and the algorithm community \cite{eden2017approximately, assadi2019simple}.  \citet{acharya1999join} and \citet{chaudhuri1999random} initiated the investigation of uniform sampling over joins in 1999, identifying the primary barrier: the sampling operator cannot be pushed down, i.e., $\textsf{sample}(R \Join S) \neq \textsf{sample}(R) \Join \textsf{sample}(S)$. To overcome this, research focused on precomputing indices to guide sampling. \citet{zhao2018random} demonstrated that for acyclic joins, an index can be built in $O(N)$ time to support $O(1)$ uniform sampling. More recently, \citet{kim2023guaranteeing} and \citet{deng2023join} tackled general joins (including cyclic joins) by showing that an index can be built in $O(N)$ time to support $\O\left(\frac{N^{\rho^*}}{\OUT_\Join}\right)$ uniform sampling, where $\OUT_\Join$ is the size of the full join result. Parallel to uniform sampling, \citet{kim2023guaranteeing} showed that the same algorithmic principles apply to approximate counting. The size of the full join, $\OUT_\Join$, can be estimated with the same complexity once the input is preprocessed to support some basic primitives (see Section~\ref{sec:problem-definition}). Moreover, for acyclic joins, this bound can be further improved to $\O\left(\min\{N, \frac{N^{\rho^*}}{\OUT_\Join}\}\right)$ by leveraging the exact counting algorithm~\cite{yannakakis1981algorithms}.
    
    However, a significant gap remains in the literature: real-world analytical queries rarely retain all attributes; they frequently involve projections. A join-project query asks for distinct tuples projected onto a subset of attributes. Unfortunately, the optimal techniques developed for full joins do not generalize to arbitrary projections. The state-of-the-art sampling framework for join-project queries~\cite{chen2020random} relies on a ``propose-and-verify'' approach: it proposes a candidate tuple from the \emph{projection-induced join} (the join query induced by projecting input relations onto the output attributes) and subsequently verifies if this candidate extends to a valid answer in the full join. This approach faces two fundamental bottlenecks: (1) \emph{low acceptance probability}: the size of the projection-induced join space is often significantly larger than the actual projected result size $\OUT$; and (2) \emph{high verification cost}: verifying whether a candidate tuple extends to a full join result is computationally expensive, often as costly as evaluating the residual join query induced by non-output attributes. Consequently, this framework yields a sampling complexity of $\O\left( \frac{N^{\rho^*}}{\OUT}\right)$. This also remains the state-of-the-art for approximate counting via a reduction to uniform sampling~\cite{chen2020random}.
    
    In this paper, we break the stalemate for join-project queries by developing the first asymptotically output-optimal algorithms for fundamental query classes, including matrix, star, and chain queries (formally defined in Section~\ref{sec:problem-definition}). 
    Crucially, our lower bounds are derived from communication complexity, which applies to {\em any} algorithm, and therefore effectively rule out the possibility of circumvention by algebraic techniques such as fast matrix multiplication. Furthermore, our hardness result for chain queries rules out the possibility of sublinear algorithms for general join-project queries. These results collectively mark a significant step towards characterizing the fine-grained complexity of uniform sampling and approximate counting for general join-project queries.
        
    \subsection{Problem Definitions}
    \label{sec:problem-definition}
    \noindent {\bf Join-Project Query.} A join-project query can be modeled as a triple $\Q = (\V, \mathcal{E}, \y)$, 
    where $\V$ is a set of attributes, $\mathcal{E} = \{e_1, e_2, \ldots, e_k\} \subseteq 2^\V$ is a set of schemas, and $\y \subseteq \bigcup_{e\in \E} e$ is the set of output attributes. For simplicity, we assume $\V = \bigcup_{e\in \E}e$. Let $\dom(A)$ be the domain of an attribute $A \in \V$. Let $\dom(X) = \prod_{A \in X} \dom(A)$ be the domain of a subset of attributes $X \subseteq \V$.
    A \emph{tuple} $t$ defined over a subset of attributes $X \subseteq \V$ is a function that assigns a value from $\dom(A)$ to every attribute $A \in X$. 
    An \emph{instance} of $\Q$ is a set of relations $\R = \{R_e : e \in \mathcal{E}\}$, where each relation $R_e$ is a set of {\em tuples} defined over attributes $e$. 
    For a tuple defined over attributes $X$ and any subset of attributes $Y \subseteq X$,  we denote by $\pi_Y t$ the projection of $t$ onto $Y$. The query result of $\Q$ on $\R$ is defined as:
    \[
        \Q(\R) = \{t \in \dom(\y) \mid \exists t' \in \dom(\V), \pi_\y t' = t, \forall e \in \mathcal{E}, \pi_e t' \in R_e \}.
    \]
    \noindent We use $N = \sum_{e \in \E} |R_e|$ to denote the {\em input size}, i.e., the total number of tuples in input relations, and $\OUT = |\Q(\R)|$ to denote the {\em output size}, i.e., the number of tuples in the query result. 
    
    In plain language, a join-project query first computes the full join result of the underlying relations in $\Q$, which is all combinations of tuples, one from each relation, that share the same value on all common attributes, and then computes the projection (without duplication) of the full join result onto the output attributes. Join-project queries include many commonly seen database queries as special cases. For example, if we take $\V = \{A,B,C\}$ with $\y = \{A,C\}$, and $\E= \{\{A,B\}, \{B,C\}\}$, it becomes the matrix query (denoted as $\matrix$); if we take $\V = \{A_1,A_2,\cdots, A_k,B\}$ with $\y = \{A_1,A_2,\cdots,A_k\}$, and $\E= \{\{A_1,B\}, \{A_2,B\}, \cdots, \{A_k,B\}\}$, it becomes the $k$-star query (denoted as $\star$); and if we take $\V = \{A_1,A_2,\cdots, A_{k+1}\}$ with $\y = \{A_1,A_{k+1}\}$, and $\E= \{\{A_1,A_2\}, \{A_2,A_3\}, \cdots, \{A_k,A_{k+1}\}\}$, it becomes the $k$-chain query (denoted as $\chain$). 

    An important operator used together with join is semi-join. For two relations $R_e$ and $R_{e'}$, the semi-join $R_e \ltimes R_e'$ finds the set of tuples in $R_e$ that can be joined with at least one tuple from $R_{e'}$. For some tuple $t \in R_{e'}$, the semi-join $R_e\ltimes t$ finds the set of tuples in $R_e$ that can be joined with $t$.

    \paragraph{Uniform Sampling and Approximate Counting for Join-project Queries} In this paper, we are mainly interested in the following two problems. The first concerns building a data structure for generating random samples from the query result:

    \begin{problem}[Uniform Sampling over Join-Project Query]
    \label{prob:static-index}
     Given a join-project query $\Q$ and an instance $\R$, 
     the goal is to build a data structure in linear time that supports efficient generation of uniform samples from the query result $\Q(\R)$.
     Additionally, the samples returned to distinct requests must be independent. 
    \end{problem}

    \noindent The second problem concerns estimating the size of the query result. To quantify the accuracy of our estimation, we first define the standard notion of multiplicative approximation:
    For a non-negative value $V$ and a parameter $0 < \epsilon < 1$, $\widehat{V}$ is an \emph{$\epsilon$-approximation} of $V$ if $(1-\epsilon)V \le \widehat{V} \le (1+\epsilon)V.$

    \begin{problem}[Approximate Counting for Join-Project Query]
     Given parameters $0<\epsilon,\delta <1$, a join-project query $\Q$ and an instance $\R$, the goal is to return an $\epsilon$-approximation $\widehat{\OUT}$ of the output size $\OUT$ with probability at least $1-\delta$.
    \end{problem}
    For the uniform sampling problem, we are interested in the {\em sampling time} of generating a uniform sample from the preprocessed data structure.
    For the approximate counting problem, we are interested in the total runtime. For both problems, we study the data complexity by considering the query size (i.e., $k$) as constants and measuring the complexity of algorithms by data-dependent quantities, such as input size $N$ and output size $\OUT$, and additional parameters, such as approximation quality $\epsilon$ and the failure probability $\delta$.

    \paragraph{Model of Computation} We use the standard word RAM model with uniform cost measures \cite{cormen2022introduction}. For an instance of size $N$, the machine operates on words of length $w = \Omega(\log N)$ bits. 
    Accessing a memory location takes $O(1)$ time. Performing basic arithmetical operations (such as addition, multiplication, and division) on two words takes $O(1)$ time. 
    We assume the machine can generate a random integer uniformly from a range $[a, b]$ in $O(1)$ time, provided the range fits within a word. We denote this operation as $\texttt{RandInt}(a, b)$.
    
    In addition, for both problems on graphs (as a special case of self-joins), 
    researchers have adopted the {\em property testing model}~\cite{assadi2019simple}, which assumes that the graph has already been stored in some standard graph data structure (e.g., adjacency lists with hash tables), so that some basic primitives are supported in $O(1)$ time. This model was subsequently extended to the relational setting \cite{kim2023guaranteeing}, so that the following primitives are supported in $\O(1)$ time:
    \begin{itemize}[leftmargin=*]
        \item \textbf{(sample tuple)} For a relation $R_e \in \R$, it returns a uniform sample from $R_e$. 
        \item \textbf{(compute degree)} For a relation $R_e \in \R$, a tuple $t$ defined on attributes $e' \subseteq e$, it returns $|R_e \ltimes t|$, i.e., the number of tuples from $R_e$ that match $t$ on attributes $e'$.
        \item \textbf{(access neighbor)} For a relation $R_e \in \R$, a tuple $t$ defined on attributes $e' \subseteq e$, and an integer $j$, it returns the $j$-th tuple in $\pi_{e-e'} (R_e \ltimes t)$ under a fixed ordering.\footnote{The primitives in \cite{kim2023guaranteeing} are more powerful with respect to {\em access neighbor}: for a relation $R_e$, a tuple $t$ defined on attributes $e' \subseteq e$, a subset $I \subseteq e-e'$, and an integer $j$, it returns the $j$-th tuple in $\pi_I(R_e \ltimes t)$ under a fixed ordering. 
        } 
        
        \item \textbf{(test tuple)} For a relation $R_e \in \R$ and a tuple $t$, it returns true if $t \in R_e$ and false otherwise. 
    \end{itemize}

      \noindent {\bf Other Notations and Assumptions.} We denote $q = (\V,\E)$ as the full join of the join-project query $\Q = (\V, \mathcal{E}, \y)$. 
      Given a join query $q=(\V,\E)$, a {\em fractional edge covering} is a function $w$ that assigns a non-negative weight $w(e)$ to each schema $e \in \E$ such that for every attribute $A \in \V$, the sum of weights of all schema containing $A$ is at least 1, i.e., $\sum_{e \in \E: A \in e} w(e) \ge 1$. The {\em fractional edge covering number} $\rho^*$ is the minimum possible total weight, $\sum_{e \in \E} w(e)$, over all fractional edge coverings $w$.  
      Note that $\rho^*=k$ for $k$-star join, and $\rho^* = \lceil \frac{k+1}{2}\rceil$ for $k$-chain join. 
      
      For an instance $\R$, we denote the size of the full join results of $q$ as $\OUT_\Join$.
      The \emph{active domain} of an attribute $A \in \V$, denoted as $\adom_\R(A)$, is the set of values from $\dom(A)$ appearing in some tuple of $\R$, i.e., $\adom_\R(A)=\{a\in \dom(A)\mid \exists e\in \E, \exists t\in R_e, \pi_A t=a\}$. When the context is clear, we use $\adom(A)$ to denote $\adom_\R(A)$. For any $n \in \mathbb{Z}^+$, we use $[n]$ to denote $\{1,2,\cdots,n\}$. Every logarithm has base 2 by default. For a pair of sets $S_1$ and $S_2$, we use $S_1- S_2 = \{x \in S_1: x \notin S_2\}$ to denote the set minus operation. We use $\widetilde{O}(\cdot)$ to suppress any poly-logarithmic factors. 
      
      \paragraph{Handling Empty Results} For clarity of presentation, we focus on the case when $\OUT \ge 1$. The edge case $\OUT=0$ is handled using a standard timeout strategy~\cite{chen2020random, kim2023guaranteeing}. Since the complexities of our algorithms are monotonically non-increasing with respect to $\OUT$ (e.g., $O(\frac{N}{\sqrt{\OUT}})$), the worst-case runtime upper bound occurs when $\OUT=1$. Consequently, the case of $\OUT=0$ can be detected by running the algorithm with a time budget corresponding to a constant times the complexity for $\OUT=1$. If the algorithm fails to generate a sample or return an estimate within this budget (repeated for $\O(1)$ trials to satisfy error probability $\delta$), we terminate and report $\OUT=0$ with high probability. This timeout cost 
      satisfies our general complexity bounds. 
      With a slight abuse of notation, we use terms like $O(\frac{N}{\sqrt{\OUT}})$ to denote $O(\frac{N}{\sqrt{\max\{1, \OUT\}}})$ in the paper.

    \subsection{Previous Results}
    \label{sec:previous}
    Recall that full joins are special cases of join-project queries where all attributes are output attributes. An index can be built in $O(N)$ time such that one uniform sample can be generated in $O(1)$ time for acyclic joins\footnote{A join query $q=(\V,\E)$ is $\alpha$-acyclic \cite{yannakakis1981algorithms} if there exists a join tree such that (i) there is a one-to-one correspondence between tree nodes and relations; and (ii) for each attribute, the set of nodes containing it forms a connected subtree.} \cite{zhao2018random} and in
    $\O(\frac{N^{\rho^*}}{\OUT})$  
    time for cyclic joins \cite{deng2023join, kim2023guaranteeing}\footnote{
    \cite{kim2023guaranteeing} does not have a preprocessing cost in the (stronger) property testing model. However, preprocessing the extra primitive they introduce still incurs an $O(N)$ cost. For fair comparison, all results remain within the standard property testing model.
    }. The previous result for acyclic joins can be extended to the class of free-connex queries\footnote{A join-project query $\mathcal{Q}= (\V,\E,\y)$ is free-connex \cite{bagan2007acyclic} if $(\V,\E)$ is acyclic and $(\V, \E \cup \{\y\})$ (i.e., by adding another relation containing all output attributes) is also acyclic.}, since a free-connex query can be transformed into an acyclic full join in $O(N)$ time. However, this is a very limited class of join-project queries; for example, matrix, star, and chain queries are not free-connex.

    \citet{chen2020random} first investigated this problem for join-project queries with arbitrary projections. They proposed the ``propose-and-verify'' framework: first, propose a uniform sample from the \emph{projection-induced full join}; second, accept the candidate if it can be extended to a full join result of $q$. By integrating the state-of-the-art algorithms \cite{deng2023join, kim2023guaranteeing} for sampling a full join result, an index can be built in $O(N)$ time, such that one uniform sample can be generated in $\O(\frac{N^{\rho^*}}{\OUT})$ time, where $\rho^*$ is the fractional edge covering number of the underlying full join $q$. Hence, the best sampling time is $\O(\frac{N^2}{\OUT})$ for matrix queries, $\O(\frac{N^k}{\OUT})$ for $k$-star queries, and $\O(\frac{N^{\lceil \frac{k+1}{2}\rceil}}{\OUT})$ for $k$-chain queries. 
    
    \citet{chen2020random} also showed a reduction from approximate counting to uniform sampling for general join-project queries. 
    If an index can be built in $O(N)$ time and returns a uniform sample from the query results in $O(1)$ time with probability $\frac{\OUT}{N^{\rho^*}}$, then a constant-approximation of the output size can be returned with constant probability in $\O(N + \frac{N^{\rho^*}}{\OUT})$ time.
    
    Additionally, approximate counting and (near) uniform sampling\footnote{Unlike perfect uniform sampling algorithms, a near-uniform sampling algorithm may not return each query result with strictly same probability. The sampling algorithm we derive for chain queries is also near-uniform.} were previously studied by \cite{arenas2021approximate} for general join-project queries. Their work established that a fully polynomial-time randomized approximation scheme exists if and only if the query has bounded hypertree width. In contrast, our work focuses on the fine-grained complexity of these problems, specifically characterizing the runtime in terms of both the input size $N$ and the output size $\OUT$.

     \begin{figure*}[t]
  {\small
  \renewcommand{\arraystretch}{1.4} 
  \setlength{\tabcolsep}{3pt}
  \begin{tabular}{c|c|c|c|c|c}
    \toprule
     \multicolumn{2}{c|}{\multirow{2}{*}{Problem}} & \multirow{2}{*}{\textbf{Matrix}} & \multirow{2}{*}{\textbf{Star}} & \multicolumn{2}{c}{\textbf{Chain}} \\
    \cline{5-6}
    \multicolumn{2}{c|}{} &  & & $k=3$ & $k\ge 4$\\
    \hline
    \hline
    \multirow{5}{*}{\rotatebox[origin=c]{90}{\shortstack[c]{\textbf{Uniform}\\\textbf{Sampling}}}}
      & Prior Work~\cite{chen2020random, kim2023guaranteeing}& 
        $O(\frac{N^2}{\OUT})$ & 
        $O(\frac{N^k}{\OUT})$ & 
        $O(\frac{N^{2}}{\OUT})$ & $O(\frac{N^{\lceil \frac{k+1}{2}\rceil}}{\OUT})$\\
      \cline{2-6}
      & \multirow{2}{*}{Our UB} & 
        {\color{red} $O(\frac{N}{\sqrt{\OUT}})$} & 
        {\color{red} $O(\frac{N}{\OUT^{1/k}})$} & 
        {\color{red} $O(\min\{N, \frac{N^2}{\OUT}\})$} & {\color{red} $O(N)$}\\
      & & 
        {\color{red} Theorem~\ref{thm:sample-matrix-ub}} & 
        {\color{red} Theorem~\ref{thm:star-sample-ub}} & 
        {\color{red} Theorem \ref{thm:chain-sample-ub}, \cite{chen2020random}} & {\color{red} Theorem \ref{thm:chain-sample-ub}} \\
      \cline{2-6}
      & \multirow{2}{*}{Our LB} & 
        {\color{red} $\Omega(\frac{N}{\sqrt{\OUT}})$} & 
        {\color{red} $\Omega(\frac{N}{\OUT^{1/k}})$} & 
        {\color{red} $\Omega(\min\{N, \frac{N^2}{\OUT}\})$} & {\color{red} $\Omega(N)$}\\
      & & 
        {\color{red} Theorem~\ref{thm:sample-matrix-combinatorial-lb}} & 
        {\color{red} Theorem~\ref{thm:star-sample-lb}} & 
        {\color{red} Theorem~\ref{thm:chain-sample-lb-3}} & {\color{red} Theorem~\ref{thm:chain-sample-lb-4}} \\
    \hline
    \hline
    \multirow{5}{*}{\rotatebox[origin=c]{90}{\shortstack[c]{\textbf{Approximate}\\\textbf{Counting}}}} 
      & Prior Work~\cite{chen2020random, kim2023guaranteeing} & 
        $O(\frac{N^2}{\OUT})$ & 
        $O(\frac{N^k}{\OUT})$ & 
        $O(\frac{N^2}{\OUT})$ & $O( \frac{N^{\lceil \frac{k+1}{2}\rceil}}{\OUT})$ \\
      \cline{2-6}
      & \multirow{2}{*}{Our UB} & 
        {\color{red} $O(\frac{N}{\sqrt{\OUT}})$} & 
        {\color{red} $O(\frac{N}{\OUT^{1/k}})$} & 
        {\color{red} $O(\min\{N, \frac{N^2}{\OUT}\})$} & 
        {\color{red} $O(N)$}\\
      & & 
        {\color{red} Theorem~\ref{thm:count-matrix-up}} & 
        {\color{red} Theorem~\ref{thm:star-count-ub}} & 
        {\color{red} Theorem~\ref{thm:chain-count-ub}, \cite{chen2020random}} & 
        {\color{red} Theorem~\ref{thm:chain-count-ub}}\\
      \cline{2-6}
      & \multirow{2}{*}{Our LB} & 
        {\color{red} $\Omega(\frac{N}{\sqrt{\OUT}})$} & 
        {\color{red} $\Omega(\frac{N}{\OUT^{1/k}})$} & 
        {\color{red} $\Omega(\min\{N, \frac{N^2}{\OUT}\})$} & {\color{red} $\Omega(N)$} \\
      & & 
        {\color{red} Theorem~\ref{thm:count-matrix-lb}} & 
        {\color{red} Theorem~\ref{thm:star-count-lb}} & 
        {\color{red} Theorem~\ref{thm:count-chain-lb-3}} & 
        {\color{red} Theorem~\ref{thm:count-chain-lb-4}} \\
    \bottomrule
  \end{tabular}
  }
  \vspace{-1em}
  \caption{Comparison between prior results and our new results (in red) in the property testing model. $N$ is the input size, $\OUT$ is the output size, and $k$ is the number of relations. For uniform sampling, we assume $O(N)$ preprocessing time. For matrix and star queries, the uniform samplers always return every query result with the exact same probability, while for the chain query, the sampler is near-uniform (i.e., uniform with high probability). The lower bounds for uniform sampling over chain queries only apply to $W$-uniform sampling algorithms. For approximate counting, we assume $\epsilon$ is a small constant and $\delta = 1/N^{O(1)}$.}
  \label{fig:summary-main}
\end{figure*}
\footnotetext{A sampling algorithm is called {\em $W$-uniform sampling} if given an index built during preprocessing, a uniform sample from the query result can be drawn with probability $\frac{\OUT}{W}$ for some known value $W > \OUT$. Note that all sampling algorithms presented in this paper fall into this class, as do all prior sampling algorithms for join-project queries that we are aware of. }

   \subsection{Our Results} 
    \label{sec:our-results}
    Our main results are summarized in Figure~\ref{fig:summary-main}. All our indices built for uniform sampling use $O(N)$ preprocessing time. For simplicity, the lower bound shown in Figure~\ref{fig:summary-main} assumes $O(N)$ preprocessing time. Below, we focus on the sample generation time and approximate counting time. 

    \noindent
    {\bf Section~\ref{sec:sample-matrix} (uniform sampling over matrix query).} We begin with uniform sampling over matrix queries. We demonstrate that a uniform sample can be generated in $O\left(\frac{N}{\sqrt{\OUT}}\right)$ expected time. This strictly outperforms the prior best bound of $O\left(\frac{N^2}{\OUT}\right)$, yielding an improvement factor of $O\left(\frac{N}{\sqrt{\OUT}}\right)$. Since $\OUT$ ranges from $1$ to $N^2$, this implies a polynomial speedup when $\OUT = o(N^2)$.

    \noindent
    {\bf Section~\ref{sec:count-matrix} (approximate counting for matrix query).} We next address approximate counting. We introduce a novel algorithm based on a hybrid reduction strategy that partitions the input instance and applies tailored sampling algorithms to each partition, ensuring efficient and accurate estimation. This approach runs in $\O\left(\frac{N}{\sqrt{\OUT}}\right)$ time. In contrast, the previous result by the standard reduction to uniform sampling~\cite{chen2020random} requires $\O\left(\frac{N^2}{\OUT}\right)$ time. Consequently, our algorithm improves the existing result by a factor of $\O\left(\frac{N}{\sqrt{\OUT}}\right)$. 

    \noindent
    {\bf Section~\ref{sec:lb-matrix} (lower bounds for matrix query).} We establish a matching lower bound of $\Tilde{\Omega}\left(\frac{N}{\sqrt{\OUT}}\right)$ for both uniform sampling and approximate counting. First, we derive this bound via a reduction from the set disjointness problem in communication complexity to approximate counting. Notably, this lower bound is information-theoretic and thus applies to \emph{all} algorithms, including non-combinatorial approaches that might leverage fast matrix multiplication. 
    Following the reduction from approximate counting to uniform sampling \cite{chen2020random}, it translates to the same lower bound for uniform sampling, assuming $\O\left(\frac{N}{\sqrt{\OUT}}\right)$ preprocessing time, for all algorithms. Second, for combinatorial algorithms, we provide an alternative result by resorting to the reduction from query evaluation to uniform sampling \cite{deng2023join}. The intuition is that repeatedly drawing samples for $\O(\OUT)$ iterations would retrieve the full query result with high probability. Since any combinatorial algorithm for evaluating matrix query requires $\Omega\left(N \sqrt{\OUT}\right)$ time~\cite{amossen2009faster}, a sampling algorithm (assuming $O(N)$ preprocessing time) that improves to $O\left(\frac{N}{\OUT^{\frac{1}{2}+\gamma}}\right)$ for any small constant $\gamma>0$, would imply a combinatorial evaluation algorithm running in $\O\left(N \cdot \OUT^{\frac{1}{2} -\gamma}\right)$ time, which is impossible.

    \noindent {\bf Section \ref{appendix:star} (star query).} The upper and lower bounds derived for matrix query extend naturally to star queries. We achieve a tight time complexity of $\Theta\left(\frac{N}{\OUT^{1/k}}\right)$ for the $k$-star query. 

    \noindent
    {\bf Section~\ref{appendix:chain} (chain query).} We finally turn to chain queries. 
    For both the uniform sampling and approximate counting problems, we provide a new upper bound $O(N)$ by resorting to the \emph{k-minimum value summary}~\cite{cohen2008tighter, hu2020parallel,cohen2007summarizing}. Together with the existing upper bound $O(\frac{N^{\lceil \frac{k+1}{2}\rceil}}{\OUT})$ \cite{chen2020random}, we obtain a hybrid upper bound $O(\min\{N,\frac{N^2}{\OUT}\})$ for the $3$-chain query and $O(N)$ for general chain query with $k\ge4$. For the approximate counting problem, we prove a matching lower bound $\Omega(\min\{N,
    \frac{N^2}{\OUT}\})$ for the $3$-chain query and $\Omega(N)$ for general chain queries with $k \ge 4$. Following the reduction from approximate counting to uniform sampling \cite{chen2020random}, it translates to a lower bound of $\Omega(N)$ for general chain queries with $k \ge 4$ and for the $3$-chain query when $\OUT \le N$, assuming $O(N)$ preprocessing time. All these lower bounds rely on the information-theoretic hardness, thus apply to \emph{all} algorithms, including those that might leverage fast matrix multiplication.

    \section{Uniform Sampling over Matrix Query}\label{sec:sample-matrix}

    In this section, we present our uniform sampling algorithm for the basic matrix query. In Section~\ref{sec:sampling-matrix-framework}, we show a two-step framework of the sampling algorithm and also compare it with the prior framework. In Section~\ref{sec:step1} and Section~\ref{sec:step2}, we show detailed procedures to implement these two steps separately. At last, we give an analysis in Section~\ref{sec:sampling-matrix-analysis}.
        
    \begin{algorithm}[t]
        \caption{\textsc{SampleMatrix}$(\R)$}
        \label{alg:sample-matrix}
        \KwIn{An instance $\R$ for the matrix query $\matrix=\pi_{A,C} R_1(A,B) \Join R_2(B,C)$, with some auxiliary indices built for $\R$ if needed.}
        \KwOut{A uniform sample of the query result $\matrix(\R)$.}
        \While{\true}{
            $s\gets $\textsc{JoinSampleMatrix}$(\R)$; \quad \qquad \quad \quad \qquad \ \ \ \ \ \ \ \ \ \ \ \ \ $\blacktriangleright$ Algorithm~\ref{alg:heavy-sample} or Algorithm~\ref{alg:light-sample}\; 
            \lIf{$s \neq \perp$}{\textbf{break}}
        }
        \lIf{\textup{\textsc{MatrixAccept}}$(\R,s) = \true$}{\Return $(\pi_A s, \pi_C s)$ \ \ \   $\blacktriangleright$ \textup{Algorithm~\ref{alg:accept}}}
        \lElse{\Return $\perp$}
    \end{algorithm}

    \subsection{Framework}
    \label{sec:sampling-matrix-framework}
    As described in Algorithm~\ref{alg:sample-matrix}, 
    our framework is conceptually as simple as two steps. In {\bf step 1} (Lines 1-3), we sample a full join result $s = (a,b,c) \in R_1 \Join R_2$ uniformly at random. In {\bf step 2} (Lines 4-5), we accept $(a,c)$ as the final sample with probability $\frac{1}{\deg(a,c)}$, where 
        $\deg(a,c) = |\pi_B(R_1\ltimes a) \cap \pi_B(R_2 \ltimes c)|$,
    i.e., the number of full join results participated by $(a,c)$. 
    If no sample is returned (Line 1), we repeat these two steps until a successful sample is returned. We will formally prove why this framework can produce a uniform sample for $\matrix$ in Section~\ref{sec:sampling-matrix-analysis}. The intuition is that for each query result $(a,c)$, it will appear in the uniform sample of full join result with probability $\frac{\deg(a,c)}{\OUT_\Join}$, since $(a,c)$ will participate in $\deg(a,c)$ full join results, each with a distinct value in $(R_1 \ltimes a) \cap (R_2 \ltimes c)$. As $\deg(a,c)$ can be different for different query pairs $(a,c)$, the first step does not guarantee a uniform sample of the query result. Therefore, an additional acceptance step is required to rescale the probability. After the second step, every query result $(a,c)$ is accepted with probability $\frac{\deg(a,c)}{\OUT_\Join} \cdot \frac{1}{\deg(a,c)}= \frac{1}{\OUT_\Join}$, which only depends on the number of full join results.


     \begin{algorithm}[t]
        \caption{\textsc{JoinSampleMatrix-H}($\R$)}\label{alg:heavy-sample}
        \KwIn{An instance $\R$ of the matrix query $\matrix=\pi_{A,C} R_1(A,B) \Join R_2(B,C)$}
        \KwOut{A full join result of $R_1\Join R_2$.}
        \tcp{Suppose the following statistics are computed in the preprocessing step: For each value $b \in \adom(B)$,
            $W_b = |R_1 \ltimes b|\cdot |R_2 \ltimes b|$; and $W = \sum_{b \in \adom(B)} W_b$.}
        $b \gets$ a random sample from $\adom(B)$ with probability $\frac{W_b}{W}$\;
        $a \gets$ the $\randint(1,|R_1\ltimes b|)$-th value in $\pi_A (R_1 \ltimes b)$\;
        $c \gets$ the $\randint(1,|R_2\ltimes b|)$-th value in $\pi_C (R_2 \ltimes b)$\;
        \Return $(a,b,c)$\;
        \end{algorithm}

        \subsection{Step 1: Sample Full Join Result}
        \label{sec:step1}

        Throughout the paper, we will present two 
        different strategies of sampling a full join result from $R_1\Join R_2$. For uniform sampling over matrix query, Algorithm~\ref{alg:heavy-sample} would suffice; hence, we simply focus on this strategy now. Later, when using uniform sampling for approximate counting matrix query, we will introduce another strategy in Algorithm~\ref{alg:light-sample} and leave all the details to Section~\ref{sec:count-matrix-variant}.
        %

        \paragraph{Auxiliary indices} In addition to the operations supported by the property testing model, we need to build additional indices to support the following operations in $O(1)$ time: 
        \begin{itemize}[leftmargin=*]
            \item For each value $b\in \adom(B)$, return the ``weight'' $W_b = |R_1 \ltimes b| \cdot |R_2 \ltimes b|$.
            \item Return the total weight $W = \sum_{b \in \adom(B)} W_b$. 
            \item Return a random weighted sample $b$ from $\adom(B)$ with probability $\frac{W_b}{W}$.
        \end{itemize}
    
        As described in Algorithm~\ref{alg:heavy-sample}, \textsf{JoinSampleMatrix-H} first samples a value $b \in \adom(B)$ with probability proportional to its weight $W_b$. It then samples a full join tuple containing $b$ uniformly at random. More specifically, it independently samples an $A$-value $a$ from the neighbors of $b$ in $R_1$ and a $C$-value $c$ from the neighbors of $b$ in $R_2$ uniformly at random. Observe that $(a,b,c)$ is sampled with probability $\frac{W_b}{W} \cdot \frac{1}{|R_1 \ltimes b|} \cdot \frac{1}{|R_2\ltimes b|} = \frac{1}{W}$. Consequently, $(a,b,c)$ is a uniform sample from $R_1\Join R_2$.

           \begin{algorithm}[t]
        \caption{\textsc{MatrixAccept}($\R, (a, b, c)$)}\label{alg:accept}
        \KwIn{An instance $\R$ of $\matrix=\pi_{A,C} R_1(A,B) \Join R_2(B,C)$, and a full join result $(a,b,c)\in R_1(A,B) \Join R_2(B,C)$.}
        \KwOut{A Boolean value indicating acceptance.}
        $F \leftarrow 0$\;
        \lIf{$|R_1 \ltimes a| < |R_2 \ltimes c|$}{
            $S \gets \pi_B (R_1 \ltimes a)$}
        \lElse{
            $S \gets \pi_B (R_2 \ltimes c)$}
        \tcp{We sample from $S$ using uniform sampling without replacement\footnotemark.}
        \While{at least one element in $S$ is not visited yet}{
            $b' \leftarrow$ a random sample from non-visited elements in $S$\;
            \lIf{$b' = b$}{\Continue}
            \lIf{$(a,b') \notin R_1$\textup{\textbf{ or }}$(b',c) \notin R_2$}{$F \leftarrow F+1$ } 
            \lElse{\Break}
            }
            \lIf{$\randint(1, |S|) \le F+1$}{\Return \true}
        \Return \false\;
        \end{algorithm}
        \footnotetext{We can implement sampling without replacement efficiently using an online version of Fisher-Yates shuffle~\cite{fisher1953statistical,knuth1998art}, which has a cost linear only to the number of samples drawn instead of the total number of elements in the universe, which is $|S|$ in the context of Algorithm~\ref{alg:accept}. The procedure uses an initially empty hash map $H$ to store index swaps. Conceptually, $H[i]=i$ if no value pair has been stored in $H$ with key $i$. 
        To draw the $i$-th sample, we generate a uniform random integer $j$ from $\{i,i+1,\cdots, |S|\}$, and return the $H[j]$-th element in $S$. Meanwhile, we need to update $H$ by storing (key, value) pair $(j,i)$ into $H$ to effectively swap the chosen index with the one at the $i$-th position, ensuring it won't be sampled again.}

        \subsection{Step 2: Acceptance Check}
        \label{sec:step2}
        The main challenge is to implement the acceptance step. Recall that this step takes the full join tuple $(a,b,c)$ generated in Step 1 and decides whether to accept $(a,c)$ as a final sample. To ensure uniformity, we must accept $(a,c)$ with probability $\frac{1}{\deg(a,c)}$, where $\deg(a,c) = |\pi_B(R_1\ltimes a) \cap \pi_B(R_2 \ltimes c)|$. Explicitly computing $\deg(a,c)$ is too expensive ($O(N)$ worst-case). Instead, we simulate this probability using the following helper toolkit:
        \begin{proposition} \label{prop:simulation}
        Let $Y$ be a discrete random variable taking values in $[M]$, where $M$ is a positive integer. Let $U$ be a uniform random variable on $[M]$, independent of $Y$. Then $\Pr[U \le Y] = \frac{1}{M} \cdot \expt[Y]$.
        \end{proposition}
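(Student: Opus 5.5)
The plan is to condition on the value of $Y$ and use the independence of $U$ and $Y$; the identity then reduces to the elementary fact that a uniform variable on $[M]$ falls at or below a fixed $y\in[M]$ with probability $y/M$. No earlier result of the paper is needed.

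\emph{Step 1 (law of total probability).} Since $Y$ is discrete with support contained in $[M]$, partition the event $\{U \le Y\}$ according to the value of $Y$:
\[
\Pr[U \le Y] = \sum_{y=1}^{M} \Pr[Y = y]\cdot \Pr[U \le y \mid Y = y].
\]

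\emph{Step 2 (independence and uniformity).} By independence, $\Pr[U \le y \mid Y = y] = \Pr[U \le y]$. Since $U$ is uniform on $[M]$ and $1\le y\le M$, the event $\{U\le y\}$ consists of exactly the $y$ outcomes $1,\dots,y$, each of probability $1/M$. Hence $\Pr[U \le y]=y/M$.

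\emph{Step 3 (linearity).} Substituting gives
\[
\Pr[U \le Y]=\frac{1}{M}\sum_{y=1}^{M} y\,\Pr[Y=y]=\frac{1}{M}\expt[Y].
\]

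The only point needing care is Step 2. It uses $Y\in[M]$ to guarantee $y\le M$, so that $\Pr[U\le y]$ is $y/M$ and is never capped at $1$; it also uses $y \ge 1$, though $y=0$ would harmlessly give $0$. It further uses independence to drop the conditioning. In the paper's application (Algorithm~\ref{alg:accept}), these hypotheses correspond to $M=|S|$ and $Y=F+1\in[|S|]$. There $U$ is the fresh $\randint(1,|S|)$, drawn after the loop and hence independent of $F$. This allows the proposition to be invoked directly.
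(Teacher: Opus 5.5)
Your proof is correct and follows the paper's own argument: you condition on the value of $Y$ via the law of total probability, use independence and uniformity to get $\Pr[U \le y] = y/M$, and sum to obtain $\frac{1}{M}\expt[Y]$. Your additional remark that $Y = F+1 \in [|S|]$ and that the final $\randint(1,|S|)$ is independent of $F$ in Algorithm~\ref{alg:accept} is also accurate.
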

         \begin{proof}[Proof of Proposition~\ref{prop:simulation}]
            By the Law of Total Probability and the independence of $U$ and $Y$:
            $\Pr[U \le Y] = \sum_{y=1}^M \Pr[U \le y] \cdot \Pr[Y=y].$
            Since $U$ is uniform on $[M]$, $\Pr[U \le y] = \frac{y}{M}$. Substituting this yields:
            $\Pr[U \le Y] = \frac{1}{M} \sum_{y=1}^M y \cdot \Pr[Y=y] = \frac{\expt[Y]}{M}$.
        \end{proof}

        To apply this proposition, we first identify the smaller set of candidate neighbors, $S$. Wlog, assume $|\pi_B(R_1 \ltimes a)| \le |\pi_B(R_2 \ltimes c)|$, so we set $S = \pi_B(R_1 \ltimes a)$ and $M = |S|$. Note that $b \in S$ is already known to be a valid join value (a ``success'') because $(a,b,c)$ was sampled from the join.
        
        We construct the random variable $Y$ by sampling from the \emph{other} elements in $S$. Specifically, we sample without replacement from $S - \{b\}$ and count the number of ``failures'' ($F$) encountered before finding another valid join value $b' \in \pi_B(R_1 \ltimes a) \cap \pi_B(R_2 \ltimes c)$. If no other valid join value exists (i.e., $b$ is unique), we continue until $S - \{b\}$ is exhausted. We define $Y = F+1$.
        
        Algorithm~\ref{alg:accept} implements this procedure. It iterates through random samples $b'$ from $S$. If $b' = b$, it is skipped (ensuring we sample from $S - \{b\}$). If $b' \neq b$, we check if $b'$ connects $a$ and $c$ (i.e., $(b', c) \in R_2$). If it does not, we increment the failure count $F$. The process stops as soon as a valid $b'$ is found or $S$ is exhausted.
        
        As shown in Section~\ref{sec:sampling-matrix-analysis}, the expectation of this specific construction is $\expt[Y] = \frac{|S|}{\deg(a,c)}$. Finally, we generate a random integer $X$ uniformly from $[|S|]$. If $X \le F+1$, we return \true{}; otherwise, we return \false{}. By Proposition~\ref{prop:simulation}, the acceptance probability is exactly $\frac{1}{|S|} \cdot \frac{|S|}{\deg(a,c)} = \frac{1}{\deg(a,c)}$.
        \subsection{Analysis}
        \label{sec:sampling-matrix-analysis}
        We now prove the correctness and analyze the time complexity of Algorithm~\ref{alg:sample-matrix}. First, we focus on Algorithm~\ref{alg:accept} and introduce the concept of the {\em negative hypergeometric distribution}:
        
        \begin{definition}[Negative Hypergeometric Distribution~\cite{johnson2005univariate}]
            \label{def:neg-hypergeo}
            Consider an experiment of drawing balls without replacement from a box containing $P$ balls, of which $K$ are successes. Let a random variable $Z \sim \texttt{NHG}(r,P,K)$ count the number of failures drawn before obtaining the $r$-th success. Its expectation is $\expt[Z] = \frac{r(P-K)}{K+1}$.
        \end{definition}

        In Algorithm~\ref{alg:accept}, when determining whether to accept the query result $(a,c)$ that is generated from the full join result $(a,b,c)$, we are sampling from a population of $P=|S|-1$ distinct values (excluding $b$) with only $K=\deg(a,c)-1$ successes to retrieve the first ($r=1$) one. Thus, $F \sim \texttt{NHG}(1, |S|-1, \deg(a,c)-1)$. The expectation of $F$ is:
        $\expt[F] = \frac{|S|}{\deg(a,c)} - 1.$ Now we apply Proposition~\ref{prop:simulation}. Let $Y = F+1$. We know that $Y\in[|S|]$ and have established that $\expt[Y] = \frac{|S|}{\deg(a,c)}$. Algorithm~\ref{alg:accept} returns \true{} if a random integer 
        $U \in [|S|]$ satisfies $U \le Y$.
        By Proposition~\ref{prop:simulation}, this occurs with probability:
        $\Pr[\text{accept}] = \frac{\expt[Y]}{|S|} = \frac{1}{|S|} \cdot \frac{|S|}{\deg(a,c)} = \frac{1}{\deg(a,c)}.$
        
        We now return to Algorithm~\ref{alg:sample-matrix}. Consider an arbitrary query result $(a,c) \in \matrix(\R)$. In \textbf{one iteration} of the loop, the probability that the sampled full join result projects to $(a,c)$ is $\frac{\deg(a,c)}{\OUT_\Join}$. In \textbf{Step 2}, such a candidate is accepted with probability $\frac{1}{\deg(a,c)}$. Consequently, in a single iteration, $(a,c)$ is returned with probability $\frac{\deg(a,c)}{\OUT_\Join} \cdot \frac{1}{\deg(a,c)} = \frac{1}{\OUT_\Join}$. Since this probability is the same for any $(a,c) \in \matrix(\R)$, the final output (conditioned on acceptance) is a uniform sample of $\matrix(\R)$.
    
        We now analyze the time complexity of Algorithm~\ref{alg:sample-matrix}. The preprocessing step takes $O(N)$ time. After preprocessing, each invocation of Algorithm~\ref{alg:heavy-sample} takes $O(1)$ time. As Algorithm~\ref{alg:heavy-sample} always returns a successful sample (without rejection), the while-loop at 
        Lines 1-3 
        takes $O(1)$ time. The invocation of Algorithm~\ref{alg:accept} at
        Line 4
        takes $O(F+1)$ time. From our analysis above, the expectation $\expt[F+1] = \frac{|S|}{\deg(a,c)} = \frac{\min\{|R_1\ltimes a|,|R_2\ltimes c|\}}{\deg(a,c)}$. Hence, Algorithm~\ref{alg:sample-matrix} takes:
        \begin{align*}
        \expt[\text{time per invocation}] &= 1+ \sum_{(a,c)\in\matrix(\R)} \frac{\deg(a,c)}{\OUT_\Join} \cdot \frac{\min\{|R_1\ltimes a|,|R_2\ltimes c|\}}{\deg(a,c)}
        \le 1 + \frac{N\sqrt{\OUT}}{\OUT_\Join}.
        \end{align*}
        expected time (omitting the big-$O$), where the last inequality follows from the fact that $\min\{x,y\} \le \sqrt{xy}$ and the Cauchy-Schwarz inequality\footnote{
    $\sum_{(a,c)\in\matrix(\R)} \min\{|R_1\ltimes a|,|R_2\ltimes c|\} \le \sum_{(a,c)\in\matrix(\R)} \sqrt{|R_1\ltimes a|\cdot|R_2\ltimes c|} \le N \sqrt{\OUT}.$}. 
        Algorithm~\ref{alg:sample-matrix} succeeds with probability $\frac{\OUT}{\OUT_\Join}$, so the expected number of invocations before getting one successful sample is $\frac{\OUT_\Join}{\OUT}$. The total expected cost is the product of these two terms:
        $O\left(\frac{\OUT_\Join}{\OUT} \cdot \left(1 + \frac{N\sqrt{\OUT}}{\OUT_\Join}\right)\right) = O\left(\frac{\OUT_\Join}{\OUT} + \frac{N}{\sqrt{\OUT}}\right) = O\left(\frac{N}{\sqrt{\OUT}}\right)$,
        where the last inequality follows from the fact in~\cite{amossen2009faster} that for $\matrix$, $\OUT_\Join \le N\sqrt{\OUT}$.
        
        \begin{theorem}\label{thm:sample-matrix-ub}
            For $\matrix$, there is an algorithm that can take as input an arbitrary instance $\R$ of input size $N$ and output size $\OUT$, and builds an index in $O(N)$ time such that a uniform sample from the query result $\matrix(\R)$ can be returned in $O\left(\frac{N}{\sqrt{\OUT}}\right)$ expected time. 
        \end{theorem}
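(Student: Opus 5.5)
The plan is to assemble Theorem~\ref{thm:sample-matrix-ub} from the pieces already in Section~\ref{sec:sample-matrix}, in three parts: preprocessing, correctness (exact uniformity), and expected running time.

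\textbf{Preprocessing.} In $O(N)$ time we scan $R_1$ and $R_2$ once. For every $b\in\adom(B)$ this gives $|R_1\ltimes b|$ and $|R_2\ltimes b|$, hence $W_b$ and $W=\sum_b W_b=\OUT_\Join$. We then build an alias table over $\{W_b\}$, so that a weighted sample of $b$ costs $O(1)$. The remaining primitives (degree, neighbor access, membership test) are the property testing primitives, and they are also realizable by hashing in $O(N)$ time. Independence of samples across requests holds because every request uses fresh randomness on top of a fixed index.

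\textbf{Correctness.} First, Algorithm~\ref{alg:heavy-sample} returns each $(a,b,c)\in R_1\Join R_2$ with probability exactly $\frac{W_b}{W}\cdot\frac{1}{|R_1\ltimes b||R_2\ltimes b|}=\frac{1}{\OUT_\Join}$. Second, we need the acceptance probability of Algorithm~\ref{alg:accept}:
\begin{itemize}
\item the failure count satisfies $F\sim\texttt{NHG}(1,|S|-1,\deg(a,c)-1)$;
\item this requires that $S$ contains $b$, and that the successes among $S-\{b\}$ are exactly the other common neighbours of $a$ and $c$, regardless of which side $S$ was taken from;
\item Definition~\ref{def:neg-hypergeo} then gives $\expt[F+1]=|S|/\deg(a,c)$;
\item since $1\le F+1\le |S|$, Proposition~\ref{prop:simulation} applies and yields acceptance probability $1/\deg(a,c)$.
\end{itemize}
Summing over the $\deg(a,c)$ join tuples that project to $(a,c)$, each iteration outputs $(a,c)$ with probability $1/\OUT_\Join$. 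Conditioned on success, the output is therefore exactly uniform, and one iteration succeeds with probability $\OUT/\OUT_\Join$.

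\textbf{Running time.} One iteration costs $O(1)$ for Step 1 plus $O(F+1)$ for Step 2; the Fisher--Yates footnote guarantees each draw costs $O(1)$. The expected cost of one iteration is then bounded as
\[1+\sum_{(a,c)\in\matrix(\R)}\frac{\min\{|R_1\ltimes a|,|R_2\ltimes c|\}}{\OUT_\Join}.\]
We bound this sum with $\min\{x,y\}\le\sqrt{xy}$ followed by Cauchy--Schwarz. The term $\sum_{(a,c)}\sqrt{d_a d_c}$ is at most $\sqrt{\OUT}\cdot\sqrt{\sum_{a,c} d_a d_c}$. Here $d_a d_c$ should be read as counting only output pairs; then $\sum_{a}d_a\le N$ and $\sum_c d_c\le N$, so $\sum_{a,c}d_ad_c\le N^2$ and the sum is at most $N\sqrt{\OUT}$.

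The number of iterations is geometric with mean $\OUT_\Join/\OUT$. The iterations are i.i.d. and the stopping decision depends only on the current iteration, so Wald's identity gives total expected time equal to (mean number of iterations) $\times$ (mean cost per iteration):
\[\frac{\OUT_\Join}{\OUT}\Bigl(1+\frac{N\sqrt{\OUT}}{\OUT_\Join}\Bigr)=\frac{\OUT_\Join}{\OUT}+\frac{N}{\sqrt{\OUT}}.\]

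\textbf{Main obstacle.} The step I expect to be hardest is the inequality $\OUT_\Join\le N\sqrt{\OUT}$, which turns the first term $\OUT_\Join/\OUT$ into $N/\sqrt{\OUT}$. This is cited from~\cite{amossen2009faster}. To prove it directly, I would write $\OUT_\Join=\sum_{(a,c)}\deg(a,c)$ and note $\deg(a,c)\le\min\{d_a,d_c\}\le\sqrt{d_a d_c}$. The same Cauchy--Schwarz computation as above then gives $\OUT_\Join\le N\sqrt{\OUT}$, which completes the proof.
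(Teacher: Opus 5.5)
Your proposal is correct and follows the paper's proof essentially step for step: weighted sampling of $b$ to obtain a uniform full-join tuple, the negative-hypergeometric acceptance test combined with Proposition~\ref{prop:simulation} to accept with probability $1/\deg(a,c)$, and the $\min\{x,y\}\le\sqrt{xy}$ plus Cauchy--Schwarz bound on the per-iteration cost multiplied by the $\OUT_\Join/\OUT$ expected number of iterations. Your two additions are valid and make explicit what the paper takes for granted: you invoke Wald's identity to justify multiplying the two expectations, and you derive $\OUT_\Join\le N\sqrt{\OUT}$ directly from $\deg(a,c)\le\min\{d_a,d_c\}$ instead of citing~\cite{amossen2009faster}.
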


        \paragraph{Standard Amplification Techniques} Specifically, if we set a time budget of twice the expected runtime and restart the algorithm if it does not terminate within that budget, the probability of failure drops exponentially with the number of restarts. Repeating this process $O(\log N)$ times guarantees the result with high probability. This observation extends to our other sampling algorithms.
       
                
        \subsection{A variant of Step 1: Sample Full Join Result}
        \label{sec:count-matrix-variant}
        \begin{algorithm}[t]
        \caption{\textsc{JoinSampleMatrix-L}($\R$)}\label{alg:light-sample}
        \KwIn{An instance $\R$ of the matrix query $\matrix=\pi_{A,C} R_1(A,B) \Join R_2(B,C).$}
        \KwOut{A full join result of $R_1\Join R_2$ or $\perp$ indicating ``failure''.}
        \tcp{Suppose the following statistic is computed in the preprocessing step: $\Delta \ge \max_{b \in \adom(B)} |R_2\ltimes b|$.}
        $(a, b) \leftarrow $ a uniform sample from $R_1$\;
        \lIf{$|R_2\ltimes b| = 0$}{\Return $\perp$}
        $j \leftarrow \randint(1,|R_2 \ltimes b|)$\;
        $c \leftarrow$ the $j$-th value in $\pi_{C} (R_2\ltimes b)$\;
        \lIf{$\randint(1, \Delta) > |R_2\ltimes b|$}{\label{step:smooth}\Return $\perp$}
        \Return $(a,b,c)$\;
        \end{algorithm}
        To prepare for the next section on approximately counting matrix queries, we present an alternative strategy for {\bf Step 1} to sample a full join result uniformly at random. This strategy does not require preprocessing to build auxiliary indices; instead, it relies on an upper bound $\Delta$ on the maximum degree of $B$-values in one of the relations. Wlog, we assume $\Delta \ge \max_{b \in \adom(B)} |R_2 \ltimes b|$.

        As outlined in Algorithm~\ref{alg:light-sample}, \textsf{JoinSampleMatrix-L} employs rejection sampling. It first samples a tuple $(a,b)$ uniformly at random from $R_1$, and subsequently samples a value $c$ uniformly at random from the neighbors of $b$ in $R_2$ (i.e., from $\pi_C(R_2 \ltimes b)$). A specific candidate join result $(a,b,c)$ is initially proposed with probability $\frac{1}{|R_1|} \cdot \frac{1}{|R_2\ltimes b|}$. However, this distribution is not uniform, as the degree $|R_2\ltimes b|$ varies across different $B$-values. To correct for this bias, the algorithm accepts the candidate $(a,b,c)$ with probability $\frac{|R_2\ltimes b|}{\Delta}$. This is a valid probability since $\Delta \ge |R_2 \ltimes b|$. Consequently, the probability of returning a specific tuple $(a,b,c)$ becomes $\frac{1}{|R_1|} \cdot \frac{1}{|R_2\ltimes b|} \cdot \frac{|R_2\ltimes b|}{\Delta} = \frac{1}{|R_1| \cdot \Delta}$, which is uniform across all full join results. 

        The algorithm performs a constant number of primitive operations, so it runs in $O(1)$ time.
        For any specific full join tuple $t = (a,b,c) \in R_1 \bowtie R_2$, the probability that it is returned is the product of sampling $(a,b)$ from $R_1$, sampling $c$ from $R_2 \ltimes b$, and passing the rejection check:
        $
            \Pr[\text{return } t] = \frac{1}{|R_1|} \cdot \frac{1}{|R_2 \ltimes b|} \cdot \frac{|R_2 \ltimes b|}{\Delta} = \frac{1}{|R_1| \cdot \Delta}.
        $
        Since this probability is independent of $t$, the output distribution is uniform conditioned on not returning $\perp$. Summing this probability over all $t \in R_1 \bowtie R_2$ gives the total success probability $\Pr[\text{success}] = \frac{\OUT_\Join}{|R_1| \cdot \Delta}$, which implies $\Pr[\text{return } \perp] = 1 - \frac{\OUT_\Join}{|R_1| \cdot \Delta}$.
        
        \begin{theorem}
        \label{thm:light-sample-matrix}
            Algorithm~\ref{alg:light-sample} runs in $O(1)$ time. It returns $\perp$ with probability $1 - \frac{\OUT_\Join}{|R_1|\cdot \Delta}$. Conditioned on not returning $\perp$, the output $(a,b,c)$ is a uniform sample from the full join result $R_1 \Join R_2$.
        \end{theorem}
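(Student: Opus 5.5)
The plan is to verify the three claims of Theorem~\ref{thm:light-sample-matrix} in turn: running time, the exact per-tuple output probability, and then conditional uniformity together with the failure probability. For the running time, I will go through Algorithm~\ref{alg:light-sample} line by line. Line 1 is one \emph{sample tuple} call on $R_1$. Line 2 is one \emph{compute degree} call giving $|R_2\ltimes b|$. Line 3 and Line 5 are each one $\randint$ call. Line 4 is one \emph{access neighbor} call. All of these take $O(1)$ time in the property testing model and word RAM, and $\Delta$ is a stored statistic, so the total is $O(1)$.

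For the output distribution, fix an arbitrary $t=(a,b,c)\in R_1\Join R_2$. Since $(a,b)\in R_1$ and $(b,c)\in R_2$, we have $|R_2\ltimes b|\ge 1$, so the test at Line 2 does not trigger when $(a,b)$ is drawn. I will write the event ``return $t$'' as the intersection of three independent events:
\begin{itemize}
\item Line 1 draws $(a,b)$, with probability $1/|R_1|$;
\item Lines 3--4 pick the index of $c$ in $\pi_C(R_2\ltimes b)$, with probability $1/|R_2\ltimes b|$, using that the fixed ordering lists each $C$-value of $R_2\ltimes b$ exactly once since $R_2$ is a set;
\item Line 5 does not reject, with probability $|R_2\ltimes b|/\Delta$, which is a valid probability by the assumption $\Delta\ge\max_b|R_2\ltimes b|$.
\end{itemize}
Multiplying gives $\frac{1}{|R_1|\Delta}$, independent of $t$.

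Next I will check that any non-$\perp$ output is indeed a join tuple: $(a,b)\in R_1$ and $c$ is a neighbor of $b$ in $R_2$. I will also check that distinct runs of Lines 1--4 produce distinct triples, so the returned-tuple events are disjoint. Summing over all $\OUT_\Join$ join tuples then gives $\Pr[\text{not }\perp]=\frac{\OUT_\Join}{|R_1|\Delta}$, so $\Pr[\perp]=1-\frac{\OUT_\Join}{|R_1|\Delta}$. Conditioning on the non-$\perp$ event, each tuple has probability $\frac{1/(|R_1|\Delta)}{\OUT_\Join/(|R_1|\Delta)}=\frac{1}{\OUT_\Join}$, which is uniform.

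The only step I expect to need care is the injectivity/disjointness point: each join tuple arises from exactly one pair (sampled tuple, neighbor index). This follows from the set semantics of relations and the fixed ordering used by \emph{access neighbor}. Everything else is direct multiplication of probabilities.
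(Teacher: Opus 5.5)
Your proposal is correct and follows essentially the same argument as the paper. The paper likewise bounds the time by a constant number of primitive calls, computes $\Pr[\text{return } t]=\frac{1}{|R_1|}\cdot\frac{1}{|R_2\ltimes b|}\cdot\frac{|R_2\ltimes b|}{\Delta}=\frac{1}{|R_1|\Delta}$ for each join tuple, and sums over the $\OUT_\Join$ join tuples; your extra checks (Line 2 never fires for a genuine join tuple, every non-$\perp$ output lies in $R_1\Join R_2$, and the index of $c$ is unique) are details the paper uses implicitly but does not spell out.
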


        
        To distinguish between the two sampling strategies, we denote the version utilizing Algorithm~\ref{alg:heavy-sample} as {\sc SampleMatrix-H} and the version utilizing Algorithm~\ref{alg:light-sample} as {\sc SampleMatrix-L}.
        
        \section{Approximate Counting for Matrix Query}
        \label{sec:count-matrix}
        In this section, we present our approximate counting algorithm for the basic matrix query. We show the framework of a hybrid strategy of uniform sampling algorithms in Section~\ref{sec:count-matrix-framework} and the detailed algorithms in Section~\ref{sec:count-matrix-algorithm}. All missing materials are provided in Section~\ref{sec:count-matrix-analysis}.

        \subsection{Framework}
        \label{sec:count-matrix-framework}
        \citet{chen2020random} showed a reduction from approximate counting to uniform sampling for join-project queries. Following Proposition \ref{prop:reduction}, Theorem~\ref{thm:sample-matrix-ub} immediately yields an algorithm that, with constant probability, returns an \(\epsilon\)-approximation of \(\OUT\) in expected time $O\left(N + \frac{N}{\epsilon^2 \cdot \sqrt{\OUT}}\right) = O(N)$, by the following facts: (1) Algorithm~\ref{alg:sample-matrix} returns a query result with probability $\frac{\OUT}{\OUT_\Join}$; (2)
        Algorithm~\ref{alg:sample-matrix} runs in $O(1+\frac{N\sqrt{\OUT}}{\OUT_\Join})$ expected time. 
    However, in the property-testing model, we seek a truly sublinear algorithm whose cost decreases as the output size $\OUT$ grows. This motivates a refined reduction from approximate counting to uniform sampling that explicitly leverages $\OUT$.

        \begin{proposition}[\cite{chen2020random}]
        \label{prop:reduction}
         Suppose there is a uniform sampling algorithm $\mathcal{A}$ that can successfully return a query result with probability $\frac{\OUT}{W}$. If repeating $Y$ independent invocations of $\mathcal{A}$ and ending up with $X$ successes, then $\frac{X\cdot W}{Y}$ is an unbiased estimator of $\OUT$. If $Y = \Omega(\frac{W}{\epsilon^2 \cdot \OUT})$, $\frac{X\cdot W}{Y}$ is an $\epsilon$-approximation of $\OUT$ with constant probability. 
         \end{proposition}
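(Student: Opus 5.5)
The estimator $\frac{X\cdot W}{Y}$ is a scaled sum of independent Bernoulli variables, so I would compute its mean directly and then apply Chebyshev's inequality to get the constant-probability guarantee.

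\textbf{Setup.} Let $p = \frac{\OUT}{W}$ be the success probability of one invocation of $\mathcal{A}$. For $i \in [Y]$, let $X_i \in \{0,1\}$ indicate that the $i$-th invocation succeeds, and write $X = \sum_{i=1}^{Y} X_i$. Because the invocations are independent, $X \sim \mathrm{Bin}(Y,p)$.

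\textbf{Unbiasedness.} By linearity of expectation, $\expt[X] = Yp$. Hence
\[
\expt\left[\tfrac{X W}{Y}\right] = \tfrac{W}{Y}\cdot Y\cdot \tfrac{\OUT}{W} = \OUT .
\]
This step uses only that each invocation succeeds with probability exactly $\frac{\OUT}{W}$. Uniformity of the returned sample plays no role in this part.

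\textbf{Concentration.} Independence gives $\var[X] = Yp(1-p) \le Yp$. Therefore
\[
\var\left[\tfrac{XW}{Y}\right] = \tfrac{W^2}{Y^2}\,\var[X] \le \tfrac{W^2 p}{Y} = \tfrac{W\cdot \OUT}{Y}.
\]
Chebyshev's inequality then bounds the probability that the estimate is not an $\epsilon$-approximation:
\[
\Pr\left[\left|\tfrac{XW}{Y} - \OUT\right| > \epsilon \OUT\right] \le \frac{W\cdot\OUT}{Y\,\epsilon^2\,\OUT^2} = \frac{W}{Y\epsilon^2\OUT}.
\]
Now take $Y \ge \frac{3W}{\epsilon^2 \OUT}$, which is of the form $\Omega\!\left(\frac{W}{\epsilon^2\cdot\OUT}\right)$. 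The right-hand side is then at most $1/3$. So with probability at least $2/3$, $\frac{XW}{Y}$ is an $\epsilon$-approximation of $\OUT$.

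\textbf{Main obstacle.} Setting $Y$ requires knowing $\OUT$, which is exactly the unknown being estimated. The proposition sidesteps this by stating the condition on $Y$ as a hypothesis. In an actual algorithm I would remove the dependence by not fixing $Y$ in advance: keep invoking $\mathcal{A}$ until $X$ reaches $\Theta(1/\epsilon^2)$ successes, then output $\frac{XW}{Y}$.

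The expected number of invocations this uses is $\Theta\!\left(\frac{W}{\epsilon^2\OUT}\right)$, matching the bound above. Its concentration follows from a standard negative-binomial (stopping-time) argument. For the proposition as stated, however, the Chebyshev bound above is all that is needed.
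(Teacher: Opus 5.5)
Your proof is correct and takes the route the paper relies on. The paper cites this proposition from \cite{chen2020random} without proving it, but its proof of Lemma~\ref{lem:approximate-count-with-threshold} carries out the same computation, with $k_\lambda$ in place of $Y$: $\expt[X] = Y\cdot\frac{\OUT}{W}$, then $\var[X] \le Y\cdot\frac{\OUT}{W}$, then Chebyshev, while handling the unknown $\OUT$ by geometric guessing of $\lambda$ with median amplification rather than your stopping rule (which the statement does not need).
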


    Our new reduction employs a hybrid strategy using two distinct instantiations of Algorithm~\ref{alg:sample-matrix}. Specifically, we partition the values in attribute $B$ into \emph{heavy} and \emph{light} sets based on their degrees in the input relations. This effectively divides the full join space into two parts. For query results witnessed by heavy $B$-values, we invoke {\sc SampleMatrix-H}. Recall that {\sc SampleMatrix-H} requires precomputed statistics for all $B$-values; this is feasible here because the number of heavy values is bounded due to the input size constraint. For query results witnessed by light $B$-values, we invoke {\sc SampleMatrix-L}. Recall that {\sc SampleMatrix-L} requires a prior upper bound on the degrees, which is naturally provided by the threshold defining the light set. Since the sets of projected query results from these two parts may overlap, our final estimator combines them using a specialized intersection estimation technique. Implementing this idea entails several key challenges:

    \begin{itemize}[leftmargin=*]
        \item First, we do not know $\OUT$ unless we get an estimator for it, but our hybrid strategy depends on it. To overcome this barrier, we adopt the common strategy of geometric search for $\OUT$ from $N^2, \frac{N^2}{2}, \frac{N^2}{4}$ and so on, until we can get an estimator (almost) matching our guess. 
        \item Second, precisely partitioning all $B$-values into heavy and light sets is infeasible, as computing the degree of every value requires $O(N)$ time. To overcome this bottleneck, we employ random sampling. Specifically, by drawing a subset of $\O(N/\Delta)$ random tuples (with replacement) from $R_2$, we ensure that every $B$-value with degree at least $\Delta$ is detected with high probability. We then verify the true degree of each detected candidate to confirm if it is a heavy $B$-value.
        \item Third, by Proposition \ref{prop:reduction}, we can obtain accurate estimators for the query results in these two parts separately. However, the cost of each estimator scales inversely with the true result size of its respective part. If one part is too small, a blind application would blow up the cost. We therefore seek an approximate counting protocol with a limited set of calibrated guesses: in the ideal case (neither part too small), it yields accurate estimates for both; otherwise, it returns a safe upper bound without significant loss in overall accuracy. 
        \item Finally, we combine the two estimators. Since the heavy and light parts may produce overlapping query results, we must estimate the size of their intersection to avoid double counting. As neither result set is materialized, naive sampling and membership testing are too expensive. Therefore, we require a more efficient intersection estimation technique. Furthermore, we must account for the error inherent in this intersection estimate when constructing the final estimator.
    \end{itemize}

    \subsection{Algorithm}
    \label{sec:count-matrix-algorithm}
    \begin{algorithm}[t]\caption{$\textsc{ApproxCountMatrix}_{\epsilon,\delta}(\R)$}
        \label{alg:approx-counting-matrix}
         \KwIn{An instance $\R$ of $\matrix$, approximation quality $\epsilon$ and error $\delta$.}
    \KwOut{An estimate of $|\matrix(\R)|$.}
    $\Lambda \gets N^2$, $k \gets \log \frac{2}{\delta}$\;
    \While{$\Lambda \ge 1$}{
    \lFor{$i \in [k]$}{$s_i \gets \textsc{ApproxCountMatrixWithGuess}_{\epsilon, \delta/2}(\R, \Lambda)$; \qquad\ \ \   $\blacktriangleright$ \textup{Algorithm~\ref{alg:approx-counting-matrix-with-guess}}}
    $s \gets $ the median of $s_1,s_2,\cdots,s_k$\;
    \lIf{$s \ge \Lambda$}{\Return $s$}
    $\Lambda \gets \frac{\Lambda}{2}$\;
    }
    \Return $0$\;
    \end{algorithm}

    \begin{algorithm}[t]\caption{$\textsc{ApproxCountMatrixWithGuess}_{\epsilon,\delta}(\R, \Lambda)$}\label{alg:approx-counting-matrix-with-guess}
    \KwIn{An instance $\R$ of the matrix query $\matrix=\pi_{A,C} R_1(A,B) \Join R_2(B,C)$, parameter $\Lambda \in [1,N^2]$, approximation quality $\epsilon$ and error $\delta$.}
    \KwOut{An estimate of $|\matrix(\R)|$.}
    $\epsilon' \gets \epsilon/5$\;
    $B^\heavy \gets \textsc{DetectHeavy}_{ \delta/4}(\R,\Lambda)$; \qquad \qquad \qquad \qquad  \qquad \qquad  \qquad \qquad \ \ \ \ $\blacktriangleright$ \textup{Algorithm~\ref{alg:heavy-detect}}\;
    $\R^\heavy \gets \{\sigma_{B\in B^\heavy}R_1, \sigma_{B \in B^\heavy}R_2\}$; \ \ \ \ \ \ \ \ \ \qquad   \tcp{conceptual; simulated by filtering}
    $\R^\light \gets \{R_1 -\sigma_{B\in B^\heavy}R_1, R_2- \sigma_{B \in B^\heavy}R_2\}$;   \ \tcp{conceptual; simulated by filtering}
     $\widehat{\OUT}_\Join^\heavy \leftarrow \sum_{b \in B^\heavy}|R_1 \ltimes b| \cdot |R_2\ltimes b|$, $\widehat{\OUT}_\Join^\light \leftarrow N \cdot \sqrt{\Lambda}$\;
    $s^\heavy \gets \textsc{ApproxCountWithThreshold}_{\epsilon', \delta/4}\left(\R^\heavy, \widehat{\OUT}_\Join^\heavy,\frac{\epsilon'\Lambda}{16}\right)$;\ \ \ \ \ \ \ \ $\blacktriangleright$ \textup{Algorithm~\ref{alg:approximate-count-with-threshold}}\;
    $s^\light \gets \textsc{ApproxCountWithThreshold}_{\epsilon', \delta/4}\left(\R^\light, \widehat{\OUT}_\Join^\light,\frac{\epsilon'\Lambda}{16}\right)$; \ \ \ \ \ \ \ \ \ \ \ \ $\blacktriangleright$ \textup{Algorithm~\ref{alg:approximate-count-with-threshold}}\;
    \If{$s^\heavy \neq \perp$ and $s^\light \neq \perp$}{
        $\beta \gets \textsc{IntersectEstimate}_{\epsilon', \delta/4}(\R,\R^\heavy, \R^\light)$;\qquad \qquad\ \ \ \ \ \ \ \ \ \ \ \  \quad \qquad  $\blacktriangleright$ \textup{Algorithm~\ref{alg:intersect-estimate}}\;
        \Return $s^\heavy + s^\light - \min\{s^\heavy, s^\light \cdot \beta\}$\;
    }
    \lElseIf{$s^\heavy \neq \perp$ and $s^\light = \perp$}{\Return $s^\heavy$}
    \lElseIf{$s^\heavy =\perp$ and $s^\light \neq \perp$}{\Return $s^\light$}
    \lElse{\Return $0$}
    \end{algorithm}

    {\bf \textsc{ApproxCountMatrix}.} Our outermost procedure is presented in Algorithm \ref{alg:approx-counting-matrix}. To address the first challenge, we try a sequence of guesses for the true output size, \(\Lambda = N^2, \tfrac{N^2}{2}, \tfrac{N^2}{4}, \ldots, 1\). For each guess $\Lambda$, we invoke our core routine, \textsc{ApproxCountMatrixWithGuess}, to obtain an estimator conditioned on $\Lambda$. To amplify the success probability (as shown in our analysis), we repeat this estimation process $O(\log \frac{1}{\delta})$ times and take the median of these estimators. Given an estimator $s$ based on the guess $\Lambda$, we compare it with $\Lambda$. If $s$ exceeds $\Lambda$, we return it immediately; otherwise, we halve $\Lambda$ and proceed to the next guess. This process continues until a valid estimate is found or all guesses have been exhausted.

    \paragraph{\textsc{ApproxCountMatrixWithGuess}} As described in Algorithm~\ref{alg:approx-counting-matrix-with-guess}, it starts with identifying the heavy values in $B$, using a primitive called {\sc DetectHeavy}. A value $b \in \adom(B)$ is {\em heavy} if it appears in more than $\sqrt{\Lambda}$ tuples in $R_2$, and {\em light} otherwise. Since with high probability {\sc DetectHeavy} correctly finds all heavy values, we simply assume it does so. We denote the set of heavy values as $B^\heavy$.
    We {\em conceptually} use $B^\light = \adom(B) - B^\heavy$ to denote the light values, but we do not explicitly list $B^\light$, because there might be too many of them. For the same reason, we {\em conceptually} partition $\R$ into two sub-instances (whose query results may overlap): $\R^\heavy$ contains all tuples with heavy $B$-values and $\R^\light$ contains the remaining input tuples. For $\R^\heavy$, we can exactly calculate the number of full join results. For $\R^\light$, we know that the number of full join results is at most $N\sqrt{\Lambda}$, since there are at most $N$ tuples in $R_1$, and each of them joins with at most $\Lambda$ tuples in $R_2$. That is essentially how we set up  $\widehat{\OUT}^\heavy_\Join$ 
    and $\widehat{\OUT}^\light_\Join$ as the upper bounds of full join results in each part. 
  
    Then, we rely on a separate procedure, {\sc ApproxCountWithThreshold}, to estimate the number of query results in each part. This procedure works like this: if the actual count is larger than a predefined threshold $\tau$, it successfully returns an $\epsilon$-approximation estimator $s$. If the procedure reports a failure ($\perp$), it means the actual count must be less than the threshold $\tau$. Using the feedback ($s^\heavy, s^\light$) on the two parts from {\sc ApproxCountWithThreshold}, we combine them into a final estimator. Let $\OUT^?$ be the number of query results produced by $\R^?$, where $?$ is either heavy or light. We further distinguish the three cases. 

    \begin{itemize}[leftmargin=*]
    \item {\bf Case 1: both estimates are successful ($s^\heavy\neq \perp$ and $s^\light\neq \perp$).} By the design of {\sc ApproxCountWithThreshold}, $s^\heavy$ and $s^\light$ are accurate estimates of $\OUT^\heavy$ and $\OUT^\light$, respectively, and both counts exceed the threshold. Since the result sets $\matrix(\R^\heavy)$ and $\matrix(\R^\light)$ may overlap, their sum would double-count the intersection. We employ the procedure {\sc IntersectEstimate} to estimate $\beta$, defined as the fraction of $\matrix(\R^\light)$ that is also present in $\matrix(\R^\heavy)$. Consequently, $s^\light \cdot \beta$ estimates the size of this intersection. However, stochastic errors in estimating $\beta$ could yield an intersection estimate larger than $\OUT^\heavy$, which would destabilize the final result. To mitigate this, we leverage the fact that the intersection size cannot exceed $\OUT^\heavy$, yielding the robust combined estimator $s^\heavy + s^\light - \min\{s^\light \cdot \beta, s^\heavy\}$.
    
    \item {\bf Case 2: Only one estimate fails.} 
    Without loss of generality, assume $s^\light =\perp$. This failure implies that $\OUT^\light$ is negligible (below the threshold). Consequently, $s^\heavy$, which accurately estimates $\OUT^\heavy$, serves as a sufficient approximation for the total output size $\OUT$.
    
    \item {\bf Case 3: Both estimates fail ($s^\heavy = \perp$ and $s^\light = \perp$).} 
    These failures indicate that both $\OUT^\heavy$ and $\OUT^\light$ are below the threshold, implying that their union is at most $\frac{\epsilon'\Lambda}{8}$. As this is significantly smaller than the current guess $\Lambda$, we conclude that $\Lambda$ is an overestimate and return $0$ (signaling the algorithm to proceed to the next guess).
    \end{itemize}

       \begin{algorithm}[t]
        \caption{$\textsc{DetectHeavy}_{\delta}(\R, \Lambda)$}
        \label{alg:heavy-detect}
        \KwIn{An instance $\R$ of $\matrix=\pi_{A,C} R_1(A,B) \Join R_2(B,C)$, a parameter $\Lambda$ and error $\delta$.}
        \KwOut{A subset of values in $\adom(B)$ that appears in more than $\sqrt{\Lambda}$ tuples in $R_2$.}
        $k \gets \frac{N\cdot \log \frac{N}{\delta}}{\sqrt{\Lambda}}$, $B^\heavy \leftarrow \emptyset$\;
        \For{$i \in [k]$}{
        $(b,c) \leftarrow$ a sample drawn from $R_2$ uniformly at random\;
        \lIf{$|R_2\ltimes b| > \sqrt{\Lambda}$}{
            $B^\heavy \gets B^\heavy \cup \{b\}$\label{heavy-check}}
    }
    \Return $B^\heavy$\;
    \end{algorithm}

    \noindent {\bf \textsc{DetectHeavy.}} As described in Algorithm~\ref{alg:heavy-detect}, it randomly samples $\O(\frac{N}{\sqrt{\Lambda}})$ tuples from $R_2$, and collects $B$-values of all sampled tuples as the set $B^\heavy$. Also, it checks the true degree of values in $B^\heavy$, so that $B^\heavy \subseteq \{b \in \adom(B): |R_2 \ltimes b| \ge \sqrt{\Lambda}\}$. Hence, we can bound the size of $B^\heavy$ as $O(\frac{N}{\sqrt{\Lambda}})$. As proved later, with high probability, every $B$-value with degree larger than $\sqrt{\Lambda}$ in $R_2$ is detected, hence $B^\heavy=\{b \in \adom(B): |R_2 \ltimes b| \ge \sqrt{\Lambda}\}$.

      \begin{algorithm}[t]    
    \caption{$\textsc{ApproxCountWithThreshold}_{\epsilon,\delta}(\R,\widehat{\OUT}_\Join,\tau)$}
    \label{alg:approximate-count-with-threshold}
    \KwIn{An instance $\R$ of $\matrix = \pi_{A,C} R_1(A,B) \Join R_2(B,C)$, an upper bound $\widehat{\OUT}_\Join$ of the full join size $|R_1\Join R_2|$ 
    and threshold $\tau$.} 
    \KwOut{An estimator of $|\matrix(\R)|$ or $\perp$ indicating failure.}
        
    $\lambda \gets N^2$\label{initial}, $k_\delta \gets 3\log \frac{2}{\delta}$\;
    \While{$\lambda \ge \tau$\label{threshold}}{
       
    \For{$i \in [k_\delta]$}{               
        $Z \leftarrow  0$, $k_\lambda \leftarrow \frac{6\widehat{\OUT}_\Join}{\epsilon^2\lambda}$\;
        \For{$j \in [k_\lambda]$}{
            \tcp{\textsc{SampleMatrix-H} for $\R^\heavy$ and \textsc{SampleMatrix-L} for $\R^\light$;}
            $s \gets\textsc{SampleMatrix}\left(\R\right)$;\qquad \qquad \qquad \qquad \qquad \qquad \qquad \qquad $\blacktriangleright$ \textup{Algorithm~\ref{alg:sample-matrix}}\;
            \lIf{$s \neq \perp$}{
                $Z \leftarrow  Z+\widehat{\OUT}_\Join$
                }
        }
         $s_i \gets \frac{Z}{k_\lambda}$\;     
        }
        $s \gets $ the median of $s_1, s_2, \cdots, s_{k_\delta}$\;
        \lIf{$s \ge \lambda$\label{break}}{\Return $s$}
        $\lambda \gets \frac{\lambda}{2}$\label{halve}\;
        }
        \Return $\perp$\;
    \end{algorithm}
    
    \paragraph{Simulating Heavy and Light Sub-instances} 
    As $\R^\heavy$ and $\R^\light$ are not materialized, we simulate access to them by reusing the original indices and storing $B^\heavy$ in a hash table to support membership checks.
    When executing \textsc{SampleMatrix} on $\R^\light$, we enforce the following:
    \begin{itemize}[leftmargin=*]
        \item In \textbf{Step 1} (Algorithm~\ref{alg:light-sample}), if a tuple $(a,b)$ is sampled from $R_1$ such that $b \in B^\heavy$, we treat it as a rejection (return $\perp$). This results in a sampling probability of $\frac{1}{N\Delta}$ for each full join result, 
        but this still suffices for our analysis. 
        \item In \textbf{Step 2} (Algorithm~\ref{alg:accept}), when sampling a neighbor $b'$ from the neighbor list $S = \pi_B (\sigma_{B \notin B^\heavy}(R_1\ltimes a))$, 
        we simply sample values from $\pi_B (R_1\ltimes a)$ and skip any $b'$ if $b' \in B^\heavy$. As analyzed later, this won't increase our complexity asymptotically.
    \end{itemize}
    When executing \textsc{SampleMatrix} on $\R^\heavy$, we enforce the following:
     \begin{itemize}[leftmargin=*]
        \item In \textbf{Step 1} (Algorithm~\ref{alg:heavy-sample}) We simply focus on $B^\heavy$ instead of $\adom(B)$. More specifically, we compute $W_b$ for each $b \in B^\heavy$ and $W = \sum_{b \in B^\heavy} W_b$ in the preprocessing step. 
        \item In \textbf{Step 2} (Algorithm~\ref{alg:accept}), when sampling a neighbor $b'$ from the neighbor list $S = \pi_B (\sigma_{B \in B^\heavy}(R_1\ltimes a))$, 
        we simply sample values from $\pi_B (R_1\ltimes a)$ and skip any $b'$ if $b' \notin B^\heavy$. As analyzed later, this won't increase our complexity asymptotically.
    \end{itemize}
    All these adaptations make the probabilities correct while still having nice time complexities.
    
    \paragraph{\textsc{ApproxCountWithThreshold}} As described in Algorithm~\ref{alg:approximate-count-with-threshold}, the high-level idea is very similar to Algorithm~\ref{alg:approx-counting-matrix}. We try a sequence of guesses for the true output size, \(\lambda = N^2, \tfrac{N^2}{2}, \tfrac{N^2}{4}, \ldots, \tau\), where $\tau$ is the smallest guess we can try. This threshold is enforced to satisfy the time complexity constraints. Note that this guess $\lambda$ is independent of the guess $\Lambda$ in the outermost procedure Algorithm~\ref{alg:approx-counting-matrix}, since this is for estimating $\OUT^\heavy, \OUT^\light$ separately, which can be very different from $\OUT$. For each guess $\lambda$, we simply apply the reduction in Proposition~\ref{prop:reduction} as Lines 3-9. Recall that the {\sc MatrixSample} procedure returns a uniform sample of the query result from $\matrix(\R)$ with probability 
    $\frac{\OUT}{\widehat{\OUT}_\Join}$.
    Repeating this procedure $O(\frac{\widehat{\OUT}_\Join}{\epsilon^2 \lambda})$ independent times should give us a constant-approximation with at least constant probability when $\lambda = O(\OUT)$. To amplify the success probability (as shown in our analysis), we repeat this estimation process $O(\log \frac{1}{\delta})$ times and take the median of these estimators. If $s$ exceeds $\lambda$, we return it immediately; otherwise, we halve $\lambda$ and proceed to the next guess. This process continues until a valid estimate is found or all guesses have been exhausted.

     \begin{algorithm}[t]
    \caption{$\textsc{IntersectEstimate}_{\epsilon, \delta}(\R, \R^\heavy, \R^\light)$}
        \label{alg:intersect-estimate}
        \KwIn{An instance $\R$ of the matrix query $\matrix=\pi_{A,C} R_1(A,B) \Join R_2(B,C)$, and a partition of $\R = (\R^\heavy, \R^\light)$ based on a partition of attribute $B = (B^\heavy, B^\light)$.}
        \KwOut{An estimator of $\frac{|\matrix(\R^\heavy) \cap \matrix(\R^\light)|}{|\matrix(\R^\light)|}$.}
        $S,S'\gets \emptyset$\;
        \While{$|S| \le \frac{1}{2\epsilon^2} \log \frac{2}{\delta}$}{
            \While{\textup{\textbf{true}}}{$s \gets \textsc{SampleMatrix-L}(\R^\light)$; \qquad \qquad \qquad \qquad \qquad \qquad \ \ \ \qquad $\blacktriangleright$\
            \textup{Algorithm~\ref{alg:sample-matrix}}\;
        \lIf{$s \neq \perp$}{$S \gets S \cup \{s\}$ and \textbf{break}}
        }

        \For{each $b\in B^\heavy$}{
            \lIf{$(\pi_A s,b) \in R_1$ and $(b,\pi_C s) \in R_2$}{$S' \gets S'\cup \{s\}$}
            } 
        }
        \Return $\frac{|S'|}{|S|}$\;
    \end{algorithm}

    \paragraph{\textsc{IntersectEstimate}} As described in Algorithm~\ref{alg:intersect-estimate}, it takes a set ($S$) of uniform samples from $\matrix(\R^\light)$ through Algorithm~\ref{alg:sample-matrix}, and then checks the subset $(S')$ of sampled query results that are also part of the query results \(\matrix(\R^\heavy)\), by iterating through every value in $B^\heavy$. When taking a uniform sample from $\matrix(\R^\light)$, it hits the overlap with probability \(\frac{|\matrix(\R^\heavy) \cap \matrix(\R^\light)|}{\OUT^\light}\). Hence, \(\frac{|S'|}{|S|} \cdot \OUT^\light\) 
    serves as an unbiased estimator of the overlap. As we prove later, as long as a sufficiently large number of query results are sampled from \(\matrix(\R^\light)\), we can ensure an $\epsilon$ additive error on the approximation with probability at least $1-\delta$.

     \subsection{Analysis}
    \label{sec:count-matrix-analysis}
    
    We begin by analyzing the approximation quality of our counting algorithm by establishing the key properties of its helper procedures. Specifically, we prove Lemmas \ref{lem:heavy-detect} through \ref{lem:approx-counting-matrix-with-guess}, and finally derive Lemma \ref{lem:approx-counting-matrix} as a natural culmination.
    \begin{lemma}
        \label{lem:heavy-detect}
        In Algorithm~\ref{alg:heavy-detect}, with probability at least $1-\delta$, $B^\heavy = \{b \in \adom(B): |R_2\ltimes b| \ge \sqrt{\Lambda}\}$.
    \end{lemma}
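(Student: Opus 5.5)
Since Algorithm~\ref{alg:heavy-detect} adds a sampled value $b$ only after checking its exact degree, there are no false positives: every $b \in B^\heavy$ satisfies $|R_2\ltimes b| > \sqrt{\Lambda}$. The lemma therefore reduces to showing that every heavy value is sampled at least once. (There is a slight mismatch between ``$>$'' in the algorithm and ``$\ge$'' in the statement. Values with degree exactly $\sqrt{\Lambda}$ can be handled by the same bound below, or the statement read with the algorithm's strict threshold; I will note this and treat the heavy set as the one defined by the check on Line~\ref{heavy-check}.)

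To show completeness, fix any $b$ with $|R_2\ltimes b| \ge \sqrt{\Lambda}$. Each of the $k$ independent uniform samples from $R_2$ has $B$-value $b$ with probability $p_b = |R_2\ltimes b|/|R_2| \ge \sqrt{\Lambda}/N$, using $|R_2| \le N$. The probability that $b$ is never hit in $k = \frac{N\log(N/\delta)}{\sqrt{\Lambda}}$ trials is
\[
(1-p_b)^k \le e^{-p_b k} \le e^{-\log(N/\delta)} \le \frac{\delta}{N}.
\]
The last step holds because $\log$ is base $2$, so $e^{-\log x} \le 1/x$ for $x\ge 1$; alternatively one can absorb a constant into $k$.

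Next, take a union bound over heavy values. Heavy values are distinct elements of $\adom(B)$ appearing in $R_2$, and each accounts for at least $\sqrt{\Lambda}\ge 1$ tuples, so there are at most $N/\sqrt{\Lambda} \le N$ of them. Hence the probability that some heavy value is missed is at most $N\cdot\delta/N = \delta$. Combined with the absence of false positives, this gives $B^\heavy$ equal to the heavy set with probability at least $1-\delta$.

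The only delicate point is the constant in the exponent, namely whether $\log$ base $2$ versus $\ln$ yields exactly $\delta/N$. The direction $e^{-\log_2 x} \le 1/x$ is favorable, so this goes through without adjustment.
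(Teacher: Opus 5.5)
Your proof is correct and matches the paper's argument step for step: the exact degree check rules out false positives, each value of degree at least $\sqrt{\Lambda}$ is missed with probability at most $(1-\sqrt{\Lambda}/N)^k \le \delta/N$, and a union bound over at most $N$ such values finishes. One aside needs correcting: a value of degree exactly $\sqrt{\Lambda}$ is sampled with high probability but is always rejected by the strict check on Line~\ref{heavy-check}, so it cannot be ``handled by the same bound.'' The lemma therefore holds as you ultimately read it, with $>$; it holds with $\ge$ only if the check is changed to $\ge$, an inconsistency the paper's own proof also passes over.
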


    \begin{proof}[Proof of Lemma~\ref{lem:heavy-detect}]
        As mentioned,  $B^\heavy \subseteq  \{b \in \adom(B): |R_2\ltimes b| \ge \sqrt{\Lambda}\}$ always holds. It suffices to show that $\{b \in \adom(B): |R_2\ltimes b| \ge \sqrt{\Lambda}\} \subseteq B^\heavy$. Consider an arbitrary value $b \in \adom(B)$ with $|R_2\ltimes b| \ge \sqrt{\Lambda}$. In each attempt, the probability that no tuple in $R_2\ltimes b$ is sampled is $1- \frac{|R_2\ltimes b|}{|R_2|}$. Then, in $k = \frac{N\cdot \log\frac{N}{\delta}}{\sqrt{\Lambda}}$ independent attempts, the probability that no tuple in $R_2\ltimes b$ is sampled is $\left(1-\frac{|R_2\ltimes b|}{|R_2|}\right)^{\frac{N\cdot \log\frac{N}{\delta}}{\sqrt{\Lambda}}} \le \left(1-\frac{\sqrt{\Lambda}}{N}\right)^{\frac{N\cdot \log\frac{N}{\delta}}{\sqrt{\Lambda}}} \le \textsf{exp}(-\log \frac{N}{\delta}) \le \frac{\delta}{N}$. So, $b$ is not detected with probability at most $\frac{\delta}{N}$. As there are at most $N$ such values, by union bound, the probability that at least one of them is not detected is at most $\delta$.
        \end{proof}

    \begin{lemma}
    \label{lem:approximate-count-with-threshold}
    In Algorithm~\ref{alg:approximate-count-with-threshold}, the following holds for an arbitrary input instance $\R$ of input size $N$ and output size $\OUT$: 
    \begin{itemize}[leftmargin=*]
        \item  When $\tau \le \frac{1}{2} \cdot \OUT$, Algorithm~\ref{alg:approximate-count-with-threshold} returns $s\neq \perp$ with probability at least $1-\delta$;
        \item When  $s\neq \perp$ is returned, we always have $\expt[s] = \OUT$.
        \item When $s\neq \perp$ is returned, $s$ is an $\epsilon$-approximation of $\OUT$ with probability at least $1-\delta$.
    \end{itemize}
    \end{lemma}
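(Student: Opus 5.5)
We will analyze a single round of Algorithm~\ref{alg:approximate-count-with-threshold} for a fixed guess $\lambda$, and then combine rounds via the median trick and a union bound over the $O(\log N)$ guesses.

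\textbf{Single estimator.} Each invocation of \textsc{SampleMatrix} succeeds independently with probability exactly $p=\OUT/\widehat{\OUT}_\Join$. This holds for both instantiations: for \textsc{SampleMatrix-H} we have $W=\widehat{\OUT}_\Join$ exactly, and for \textsc{SampleMatrix-L} we use Theorem~\ref{thm:light-sample-matrix} together with the acceptance analysis of Section~\ref{sec:sampling-matrix-analysis}, with the normalizer set to $\widehat{\OUT}_\Join$. Hence $s_i=\widehat{\OUT}_\Join\cdot X/k_\lambda$ with $X\sim\mathrm{Bin}(k_\lambda,p)$. It follows that $\expt[s_i]=\OUT$, as in Proposition~\ref{prop:reduction}, and $\var[s_i]\le \widehat{\OUT}_\Join\OUT/k_\lambda=\epsilon^2\lambda\OUT/6$. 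By Chebyshev, $\Pr[|s_i-\OUT|>\epsilon\OUT]\le \lambda/(6\OUT)$.

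\textbf{Accuracy when an estimate is returned.} Consider a guess with $\lambda\le 2\OUT$. Then each $s_i$ is an $\epsilon$-approximation with probability at least $2/3$. A Chernoff bound over the $k_\delta=3\log\frac{2}{\delta}$ copies shows the median is an $\epsilon$-approximation except with probability $\mathrm{poly}(\delta)$.

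The delicate case is an early stop at a guess $\lambda\gg\OUT$, and this is the main obstacle. Here Chebyshev around $\OUT$ is useless, so we instead bound the upper tail directly. Since $s_i\ge0$ and $\expt[s_i]=\OUT$, Markov gives $\Pr[s_i\ge\lambda]\le\OUT/\lambda$. For $\lambda\ge 4\OUT$ this is at most $1/4$, so the median reaches $\lambda$ only with probability $e^{-\Omega(k_\delta)}$. We can sharpen this with a Chernoff bound on $X$ if needed. A union bound over the $O(\log N)$ guesses then shows that with probability at least $1-\delta/2$ no return happens while $\lambda>4\OUT$. For guesses $\lambda\in(2\OUT,4\OUT]$, Chebyshev still gives failure probability at most $2/3$ per copy, which is too weak. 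We would handle this by tightening the variance bound to $\epsilon^2\lambda\OUT/6\le \frac{2}{3}\epsilon^2\OUT^2$, which yields a per-copy failure of at most $2/3$. Since that bound is too weak, we would add Bernstein/Chernoff on the binomial $X$. Its mean $k_\lambda p=6\OUT/(\epsilon^2\lambda)\ge 3/(2\epsilon^2)$ gives relative deviation $\epsilon$ with probability at least $3/4$ once constants are checked, and we would adjust the constant in $k_\lambda$ if necessary. Granting per-copy success at least $3/4$ for all $\lambda\le4\OUT$, the median of $k_\delta$ copies is $\epsilon$-accurate with probability at least $1-\delta/(2\log N)$, and a union bound gives the third bullet.

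\textbf{Unbiasedness and non-failure.} For the second bullet, we argue that conditioned on the loop reaching a particular $\lambda$, the fresh samples used at that $\lambda$ are independent of the stopping history. Each $s_i$ is therefore unbiased for $\OUT$, and we interpret the claim as unbiasedness of the per-round estimator, in the sense of Proposition~\ref{prop:reduction}.

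For the first bullet, assume $\tau\le\OUT/2$. The halving sequence from $N^2$ then contains some $\lambda^*\in(\OUT/2,\OUT]$ with $\lambda^*\ge\tau$, unless the loop has already returned earlier. At $\lambda^*$, the median is at least $(1-\epsilon)\OUT\ge\lambda^*$ whenever it is $\epsilon$-accurate and $\lambda^*\le(1-\epsilon)\OUT$. To guarantee this, we note that the next halving gives $\lambda^*/2\le\OUT/2\le(1-\epsilon)\OUT$ and is still at least $\tau/2$. We may need to use the guess $\lambda'\in(\OUT/4,\OUT/2]$ and ensure it is at least $\tau$, which may require tightening $\tau\le\OUT/4$ or arguing with slack. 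Under this, the algorithm returns a non-$\perp$ value with probability at least $1-\delta$.
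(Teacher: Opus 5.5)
Your single-copy analysis ($\expt[s_i]=\OUT$, $\var[s_i]\le\epsilon^2\lambda\OUT/6$, Chebyshev failure at most $\lambda/(6\OUT)$) is the paper's. You are also right that neither Markov at $2\OUT$ (per-copy $1/2$) nor Chebyshev at $4\OUT$ (per-copy $2/3$) can be amplified by a median. The paper's proof only appears to do so because it bounds the median's failure by $q^{k_\delta}$, which is the probability that \emph{all} copies fail. However, your patch for $\lambda\in(2\OUT,4\OUT]$ is a genuine gap. There $\mu=\expt[X]=6\OUT/(\epsilon^2\lambda)$ can be as small as $3/(2\epsilon^2)$, so a relative deviation of $\epsilon$ is only about $1.22$ standard deviations. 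Multiplicative Chernoff gives $2e^{-\epsilon^2\mu/3}=2e^{-1/2}>1$, and Bernstein gives at best $2e^{-3/4}\approx0.94$. For $\epsilon$ near $1$, with $X$ essentially Poisson of mean below $2$, the true per-copy success probability drops below $3/4$, to roughly $0.72$. So the step you grant is false for the algorithm's $k_\lambda$, and changing $k_\lambda$ means proving the lemma for a different algorithm. Your union bound over all $O(\log N)$ guesses has the same problem: it needs per-guess failure $\delta/(2\log N)$, which $k_\delta=3\log\frac{2}{\delta}$ copies do not provide.

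The missing idea is that for $\lambda>2\OUT$ accuracy is beside the point. Since $\lambda>(1+\epsilon)\OUT$, any return there is inaccurate anyway, so you only need the median to stay below $\lambda$. That is a deviation of $\lambda-\OUT\ge\lambda/2$ rather than $\epsilon\OUT$. Chebyshev with that deviation gives $\Pr[s_i\ge\lambda]\le\frac{\epsilon^2\lambda\OUT}{6(\lambda-\OUT)^2}\le\frac{2\epsilon^2\OUT}{3\lambda}\le\frac{\epsilon^2}{3}$ for every $\lambda\ge2\OUT$. This replaces both your Markov step and the $(2\OUT,4\OUT]$ patch. Because this bound halves from one guess to the next, as does the inaccuracy bound $\lambda/(6\OUT)$ for guesses at most $2\OUT$, the failure probabilities over all guesses form a geometric series with no $\log N$ factor. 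The constant in $k_\delta$ must still be enlarged: per-copy failure $1/3$ over $3\log\frac{2}{\delta}$ copies gives a Chernoff bound of only about $(\delta/2)^{1/4}$ for the median. That change is harmless, and the paper's argument silently needs it too.

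Your doubts about the other two bullets are justified; they reflect defects of the statement rather than of your argument. The second bullet is false as written: if $\OUT<\tau$, every returned value is at least $\tau>\OUT$. The paper itself only proves $\expt[s_i]=\OUT$. For the first bullet, the paper relies on the reversed inequality $\Pr[s\ge\OUT]\ge\Pr[s\ge(1-\epsilon)\OUT]$. When the guess in $[\OUT/2,\OUT)$ lies just below $\OUT$ and $\tau$ exceeds half of it, each copy clears that guess with probability only about $1/2$, so $\perp$ is returned with probability much larger than $\delta$. Your repair $\tau\le\OUT/4$ works: at the guess $\lambda'\in[\OUT/4,\OUT/2)$, Chebyshev with deviation $\OUT-\lambda'>\OUT/2$ gives $\Pr[s_i<\lambda']\le\epsilon^2/3$. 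The use in Lemma~\ref{lem:approx-counting-matrix-with-guess} survives with adjusted constants, since $\Lambda/8\ge4\cdot\frac{\epsilon'\Lambda}{16}$.
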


     \begin{proof}[Proof of Lemma~\ref{lem:approximate-count-with-threshold}]
        Now, consider an arbitrary iteration of the while-loop with guess $\lambda$. We show that $\expt[s_i] = \OUT$. Each invocation of {\sc SampleMatrix} successfully returns any sample with probability 
        $\frac{\OUT}{\widehat{\OUT}_\Join}$. 
        Let $X$ be the number of times when {\sc SampleMatrix} succeeds out of $k_\lambda$ attempts. Hence, 
        $\expt[X] =\frac{k_\lambda \cdot \OUT}{\widehat{\OUT}_\Join}$ 
        and 
        $\var[X] \le \frac{k_\lambda \cdot \OUT}{\widehat{\OUT}_\Join}$. 
        Hence, 
        $\expt[s_i] = \frac{\widehat{\OUT}_\Join}{k_\lambda} \cdot \expt[X] = \OUT$. 
        Hence, if Algorithm~\ref{alg:approximate-count-with-threshold} returns an estimator $s$, we must have $\expt[s] = \OUT$. 
        By the Markov inequality, $\Pr[s_i \ge 2\OUT] \le \frac{1}{2}$. Taking the median of $k_\delta$ such estimators, 
        \[\Pr[s \le 2\OUT] \ge 1-(\frac{1}{2})^{3\log \frac{2}{\delta}} = 1-(\delta/2)^3\ge1-\delta/2. \]
        Furthermore, the variance of $s_i$ is 
        $$\var[s_i] = (\frac{1}{k_\lambda})^2 \cdot \var[Z] = (\frac{\widehat{\OUT}_\Join}{k_\lambda})^2 \cdot \var[X] \le  (\frac{\widehat{\OUT}_\Join}{k_\lambda})^2 \cdot \frac{k_\lambda \cdot \OUT}{\widehat{\OUT}_\Join} = \frac{\epsilon^2\lambda}{6} \cdot \OUT.$$ 
        By Chebyshev inequality, 
        $\Pr[|s_i - \OUT| \ge \epsilon \OUT] \le \frac{\var[s_i]}{\epsilon^2 \cdot \OUT^2} \le \frac{\lambda}{6\OUT}$.
         Taking the median of $k_\delta$ such estimators, 
        \[\Pr[|s - \OUT| \le \epsilon \OUT] \ge 1- ( \frac{\lambda}{6\OUT})^{3 \log \frac{2}{\delta}}.\]
        If $\lambda \le 4\OUT$, then $\Pr[|s - \OUT| \le \epsilon \OUT] \ge 1- (1 - \frac{1}{3})^{3 \log \frac{2}{\delta}} 
        \ge 
        1- \textsf{exp}(-\log \frac{2}{\delta}) \ge 1-\delta/2$.

        Assume $\tau \le \frac{1}{2}\OUT$. If Algorithm~\ref{alg:approximate-count-with-threshold} returns $\perp$, there must be a choice of $\lambda^* \in [\frac{\OUT}{2}, \OUT]$ tried. Let us focus on the $s$ calculated by the iteration of the while-loop with $\lambda^*$. From the analysis above, $\Pr[|s - \OUT| \le \epsilon \OUT] \ge 1-\delta/2$. As $\lambda^* \le \OUT$, we have
        $$\Pr[s\ge \lambda^*] \ge \Pr[s \ge \OUT] \ge \Pr[s \ge (1-\epsilon)\OUT] \ge 1-\delta/2\ge 1-\delta,$$ hence $s$ should be returned by triggering Line \ref{break} with probability at least $1-\delta$. If Algorithm ~\ref{alg:approximate-count-with-threshold} returns $\perp$ with probability at least $\delta$, a contradiction would occur. Thus, the first bullet is proved.

         Now, assume an estimator $s^*$ is returned, together with the guess $\lambda^*$ in this specific iteration of the while-loop. There must be $s^* \ge \lambda^*$ implied by Line \ref{break}. From the analysis above, with probability at least $1-\delta/2$, $\lambda^* \le 2\OUT$. If this holds, $\Pr[|s - \OUT| \le \epsilon \OUT] \ge 1-\delta/2$. By the union bound, $s$ is an $\epsilon$-approximation of $\OUT$ with probability at least $1-\delta$.
    \end{proof}

    \begin{lemma}
    \label{lem:intersect-estimate}
        In Algorithm~\ref{alg:intersect-estimate}, with probability at least $1-\delta$, $\left|\frac{|S'|}{|S|} - \frac{|\matrix(\R^\heavy) \cap \matrix(\R^\light)|}{\OUT^\light}\right| \le \epsilon$. 
    \end{lemma}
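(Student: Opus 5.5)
We will treat $|S'|/|S|$ as the empirical mean of i.i.d.\ Bernoulli variables and apply Hoeffding's inequality. Write $\mathcal{I} = \matrix(\R^\heavy) \cap \matrix(\R^\light)$ and $p = |\mathcal{I}|/\OUT^\light$.

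First we check that each element added to $S$ is an independent uniform sample from $\matrix(\R^\light)$. Each iteration of the outer while-loop runs the inner loop until \textsc{SampleMatrix-L}$(\R^\light)$ returns $s \neq \perp$. By the analysis of Algorithm~\ref{alg:sample-matrix} in Section~\ref{sec:sampling-matrix-analysis}, with Step 1 instantiated by Algorithm~\ref{alg:light-sample} and the heavy-value filtering described under ``Simulating Heavy and Light Sub-instances'', each attempt returns any fixed $(a,c) \in \matrix(\R^\light)$ with the same probability. Theorem~\ref{thm:light-sample-matrix} gives the uniformity of the Step-1 join samples, and the acceptance probability $1/\deg(a,c)$ is computed with respect to $\R^\light$. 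So, conditioned on $s \neq \perp$, the output is uniform over $\matrix(\R^\light)$. Fresh randomness is used in every attempt, so successive accepted samples are i.i.d. Here $S$ should be read as a multiset, i.e.\ a list of draws, as the count $|S|$ in the algorithm intends.

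Next, the membership test for $S'$ is exact. A pair $s = (a,c)$ lies in $\matrix(\R^\heavy)$ if and only if some $b \in B^\heavy$ has $(a,b)\in R_1$ and $(b,c)\in R_2$. The for-loop checks exactly this, using the test-tuple primitive. So the indicator $X_j = \mathbf{1}[s_j \in \mathcal{I}]$ of the $j$-th sample is Bernoulli$(p)$, and the $X_j$ are independent. Moreover, $|S'|/|S| = \frac{1}{m}\sum_{j=1}^m X_j$, where $m = |S|$ is deterministic at termination: $m = \lfloor \frac{1}{2\epsilon^2}\log\frac{2}{\delta}\rfloor + 1 > \frac{1}{2\epsilon^2}\ln\frac{2}{\delta}$, using $\log = \log_2 \ge \ln$.

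Finally, Hoeffding's inequality for $[0,1]$-valued independent variables gives $\Pr\left[\left|\frac{1}{m}\sum_j X_j - p\right| > \epsilon\right] \le 2\exp(-2m\epsilon^2) \le 2\exp(-\ln\frac{2}{\delta}) = \delta$. This is the claim.

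The only subtle step is the first one. We must make sure the filtering that simulates $\R^\light$ preserves exact uniformity conditioned on success. Rejecting heavy $b$ in Step 1 keeps the join sample uniform over $R_1^\light \Join R_2^\light$. Skipping heavy $b'$ in Step 2 amounts to sampling without replacement from $\pi_B(\sigma_{B\notin B^\heavy}(R_1\ltimes a))$, so the negative-hypergeometric argument applies verbatim with $S$ replaced by this set. Its size must be known for the final $\randint$; this can be obtained by computing it or by using the count-and-skip convention. We will state this explicitly rather than re-derive it.
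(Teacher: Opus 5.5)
Your proposal is correct and follows the same route as the paper: the indicators $X_j=\mathbf{1}[s_j\in\matrix(\R^\heavy)\cap\matrix(\R^\light)]$ are i.i.d.\ Bernoulli$(p)$, and Hoeffding with $t=\epsilon$ finishes the argument. Your extra bookkeeping just makes explicit what the paper's proof takes for granted: reading $S$ and $S'$ as lists of draws, the terminal value $m=\lfloor\frac{1}{2\epsilon^2}\log\frac{2}{\delta}\rfloor+1$, the $\log_2\ge\ln$ step, and the check that the heavy-value filtering keeps the samples exactly uniform (which, as in the paper, tacitly requires $\Delta\ge\max_{b\notin B^\heavy}|R_2\ltimes b|$, i.e., that \textsc{DetectHeavy} succeeded).
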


    \begin{proof}[Proof of Lemma~\ref{lem:intersect-estimate}]
        Let $J = \matrix(\R^\light)$ be the set of query results from the light part, and let $I = \matrix(\R^\heavy) \cap \matrix(\R^\light)$ be the intersection. Let $p = \frac{|I|}{|J|}$ be the true fraction of the intersection. Algorithm~\ref{alg:intersect-estimate} draws a set $S$ of independent uniform samples from $J$. For each sample $s_i \in S$, let $X_i$ be an indicator random variable such that $X_i = 1$ if $s_i \in I$ (i.e., $s_i \in S'$), and $X_i = 0$ otherwise. Then $X_i$ follows a Bernoulli distribution with parameter $p$, and $\expt[X_i] = p$. The algorithm returns the estimator $\hat{p} = \frac{|S'|}{|S|} = \frac{1}{|S|} \sum_{s_i \in S} X_i$. Let $M = |S| = \frac{1}{2\epsilon^2} \log \frac{2}{\delta}$ be the sample size. 
        By Hoeffding's inequality, for any $t > 0$:
            $ \Pr[|\hat{p} - p| \ge t] \le 2 \textsf{exp}(-2 M t^2). $
            Setting $t = \epsilon$, 
            $\Pr[|\hat{p} - p| \ge \epsilon] \le 2\textsf{exp}(-2 \frac{1}{2\epsilon^2} \log \frac{2}{\delta} \cdot \epsilon^2)\le \delta. $
            This implies that $|\hat{p} - p| \le \epsilon$ with probability at least $1-\delta$.
        \end{proof}

    \begin{lemma}
    \label{lem:approx-counting-matrix-with-guess}
        Algorithm~\ref{alg:approx-counting-matrix-with-guess} returns an estimator $s_\Lambda$ with $\expt[s_\Lambda]\le 2\OUT$. Furthermore, if $\Lambda \le 4 \OUT$, $s_\Lambda$ is an $\epsilon$-approximation estimator for $\OUT$ with probability at least $1-\delta$.
    \end{lemma}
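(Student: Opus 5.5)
We will prove the lemma by conditioning on the good events of the helper procedures and then separating the three cases of Algorithm~\ref{alg:approx-counting-matrix-with-guess}. First, by Lemma~\ref{lem:heavy-detect}, with probability at least $1-\delta/4$ the set $B^\heavy$ is exactly the set of heavy values. Hence $\R^\heavy,\R^\light$ partition the full join, and $\matrix(\R)=\matrix(\R^\heavy)\cup\matrix(\R^\light)$, so $\OUT=\OUT^\heavy+\OUT^\light-|I|$ with $I=\matrix(\R^\heavy)\cap\matrix(\R^\light)$. Moreover $\widehat{\OUT}^\heavy_\Join$ is the exact heavy full-join size, and $\widehat{\OUT}^\light_\Join=N\sqrt{\Lambda}$ is a valid upper bound because light values have degree at most $\sqrt{\Lambda}$ in $R_2$. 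Therefore both invocations of \textsc{SampleMatrix} return each query result with probability $\OUT^?/\widehat{\OUT}^?_\Join$, and the hypotheses of Lemma~\ref{lem:approximate-count-with-threshold} hold.

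\textbf{Expectation bound.} The return value is always at most $s^\heavy+s^\light$: in case~1 we subtract a nonnegative quantity, in case~2 we drop a term, and in case~3 we return $0$. By the second bullet of Lemma~\ref{lem:approximate-count-with-threshold}, whenever each of $s^\heavy,s^\light$ is returned it has expectation equal to $\OUT^?$, and otherwise it contributes $0$. Hence $\expt[s_\Lambda]\le \OUT^\heavy+\OUT^\light\le 2\OUT$, since each part is a subset of $\matrix(\R)$. The delicate point is that ``conditioned on return'' is not formally the same as a bound on the expectation truncated at $0$. We would handle it by noting that $s\ge 0$ and arguing per $\lambda$-iteration: the total contribution is at most a sum over iterations of the expectation of an unbiased, nonnegative $s$ restricted to a stopping event. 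That argument is where care is needed, and we would settle for reading the lemma's second bullet as giving $\expt[s\cdot\mathbf{1}_{s\ne\perp}]\le \OUT$.

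\textbf{Accuracy when $\Lambda\le 4\OUT$.} Set $\tau=\epsilon'\Lambda/16\le \epsilon'\OUT/4$ and union-bound over four failure events, each of probability $\delta/4$: \textsc{DetectHeavy}, the two threshold calls, and \textsc{IntersectEstimate}. The cases are as follows.

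\textbf{Case 1: both succeed.} Then $s^?\in(1\pm\epsilon')\OUT^?$ and $|\beta-|I|/\OUT^\light|\le\epsilon'$ by Lemma~\ref{lem:intersect-estimate}. So $s^\light\beta$ is within $\epsilon'\OUT^\light+\epsilon'|I|\le 2\epsilon'\OUT$ of $|I|$, and clipping by $s^\heavy$ does not hurt since $|I|\le\OUT^\heavy$. Summing the errors gives at most about $4\epsilon'\OUT\le\epsilon\OUT$ with $\epsilon'=\epsilon/5$.

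\textbf{Case 2: say $s^\light=\perp$.} By the contrapositive of the first bullet of Lemma~\ref{lem:approximate-count-with-threshold}, with high probability $\OUT^\light<2\tau\le\epsilon'\OUT/2$. Hence $\OUT^\heavy\in[\OUT-\epsilon'\OUT/2,\OUT]$, and $s^\heavy$ is an $\epsilon$-approximation.

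\textbf{Case 3: both fail.} Then $\OUT\le\OUT^\heavy+\OUT^\light<4\tau\le\epsilon'\OUT$, a contradiction, so this case has negligible probability.

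The main obstacle is the bookkeeping in case~1, namely converting the additive error of $\beta$ into a relative error on $\OUT$ using $\OUT^\light\le\OUT$, together with the use of the first bullet in contrapositive form in case~2.
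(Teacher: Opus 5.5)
Your proposal is correct and follows essentially the paper's proof: you bound the expectation via $s_\Lambda\le s^\heavy+s^\light$ and the second bullet of Lemma~\ref{lem:approximate-count-with-threshold}, and you get accuracy from the same case split on which threshold calls return $\perp$, using the contrapositive of the first bullet and Lemma~\ref{lem:intersect-estimate} for the overlap. The differences are cosmetic. You treat the clipping $\min\{s^\heavy,s^\light\beta\}$ in one stroke rather than through the paper's Cases~1.1/1.2; this is valid because $|I|\le\OUT^\heavy$ and $s^\heavy\ge(1-\epsilon')\OUT^\heavy$ bound $|I|-s^\heavy$ by $\epsilon'\OUT^\heavy$. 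You also explicitly condition on \textsc{DetectHeavy} succeeding, which is in fact needed for $N\sqrt{\Lambda}$ to be a valid rejection bound in the light sampler, whereas the paper asserts that Line~2 only affects running time.
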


    \begin{proof}[Proof of Lemma~\ref{lem:approx-counting-matrix-with-guess}]
        Notice that Line 2 succeeds or not only affects the running time, not the accuracy.
        Implied by Lemma~\ref{lem:approximate-count-with-threshold}, if $s^\heavy$ (resp., $s^\light$) is not assigned $\perp$ at Line 6 (resp., Line 7), then $\expt[s^\heavy] = \OUT^\heavy$ (resp., $\expt[s^\light] = \OUT^\light$). 
        Even if both $s^\heavy$ and $s^\light$ are not $\perp$, we have $\expt[s_\Lambda] \le 
        \expt[s^\heavy +s^\light]= 
        \OUT^\heavy + \OUT^\light \le 2 \OUT$. Analyze the other $3$ cases of $s^\heavy$ and $s^\light$ analogously, we also have $\expt[s_\Lambda] \le2\cdot \OUT$.
        
        From above, we have $\Pr[s_\Lambda > 4\OUT] < 1/2$. Now, suppose $\Lambda \le 4 \OUT$. We point out the fact that $\max\{\OUT^\heavy, \OUT^\light\} \ge \frac{\OUT}{2} \ge \frac{\Lambda}{8} \ge \frac{\epsilon'\Lambda}{8}$. Thus, at least one of $\OUT^\heavy, \OUT^\light$ is larger than $2\cdot\frac{\epsilon'\Lambda}{16}=\frac{\epsilon'\Lambda}{8}$. Implied by Lemma~\ref{lem:approximate-count-with-threshold}, at least one of $s^\heavy, s^\light$ is not $\perp$ with probability at least $1-\delta/4$. Conditioned on this, $s_\Lambda$ is returned from the first three cases:

        \item {\bf Case 1: $s^\heavy \neq \perp$ and $s^\light \neq \perp$.}
        \begin{itemize}[leftmargin=*]
        Conditioned on Line 9 succeeding, we further analyze:
        \item {\bf Case 1.1: $s^
                \heavy \ge \beta\cdot s^\light$.} 
In this subcase, $s_\Lambda = s^\heavy + s^\light(1-\beta)$.
                We know that $s^\light \cdot \beta$ is an estimator for the intersection size $I = |\matrix(\R^\heavy) \cap \matrix(\R^\light)|$. Specifically,
                \begin{align*}
                    |s^\light \beta - I| 
                    &\le |s^\light \beta - \OUT^\light \beta| + \left|\OUT^\light \beta - I\right| \\
                    &\le \epsilon' \cdot \OUT^\light \cdot \beta + \OUT^\light \cdot \epsilon' \le \epsilon'(1+\epsilon')\OUT^\light + \epsilon' \OUT^\light \le 3\epsilon' \OUT^\light.
                \end{align*}
                The total error of the estimator $s_\Lambda$ is:
                \begin{align*}
                    |s_\Lambda - \OUT| &= |(s^\heavy + s^\light - s^\light \beta) - (\OUT^\heavy + \OUT^\light - I)| \\
                    &\le |s^\heavy - \OUT^\heavy| + |s^\light - \OUT^\light| + |I - s^\light \beta| \\
                    &\le \epsilon' \OUT^\heavy + \epsilon' \OUT^\light + 3\epsilon' \OUT^\light \le 5\epsilon' \OUT=\epsilon \OUT.
                \end{align*}
                Thus, $s_\Lambda$ is an $\epsilon$-approximation of $\OUT$.
               
            \item {\bf Case 1.2: $s^\heavy < \beta \cdot s^\light$.}
                In this subcase, $s_\Lambda = s^\light$.
                The condition $s^\heavy < s^\light \beta$ implies that the estimated intersection is larger than the estimated heavy set. Since the true intersection $I$ is a subset of $\matrix(\R^\heavy)$, we have $I \le \OUT^\heavy$.
                Using the bounds from Case 1.1, we have:
                \[ (1-\epsilon')\OUT^\heavy \le s^\heavy < s^\light \beta \le I + 3\epsilon' \OUT^\light. \]
                This implies $\OUT^\heavy - I < \epsilon' \OUT^\heavy + 3\epsilon' \OUT^\light$.
                The true output size is $\OUT = \OUT^\light + (\OUT^\heavy - I)$. Substituting the bound above:
                \[ \OUT^\light \le \OUT < \OUT^\light + \epsilon' \OUT^\heavy + 3\epsilon' \OUT^\light \le \OUT^\light + 4\epsilon' \OUT. \]
                Since $s^\light$ is an $\epsilon'$-approximation of $\OUT^\light$, we have:
                \[ s^\light \le (1+\epsilon')\OUT^\light \le (1+\epsilon')\OUT\le (1+\epsilon)\OUT,  \textrm{ and }\]
                \[ s^\light \ge (1-\epsilon')\OUT^\light \ge (1-\epsilon')(\OUT - 4\epsilon' \OUT) \ge (1-5\epsilon')\OUT=(1-\epsilon) \OUT. \]
                Thus, $s_\Lambda = s^\light$ is an $\epsilon$-approximation of $\OUT$.
        \end{itemize}
        
        \item {\bf Case 2: $s^\heavy \neq \perp$ and $s^\light = \perp$.} 
        In this case, $s_\Lambda = s^\heavy$. Following Lemma~\ref{lem:approximate-count-with-threshold}, with probability at least $1-\delta/4$, $\OUT^\light < \frac{\epsilon'\Lambda}{8}$, and $s^\heavy$ is an $\epsilon'$-approximation of $\OUT^\heavy$. Moreover, $\OUT^\heavy \ge \frac{\OUT}{2} \ge \frac{\Lambda}{8}$.
         We work out an upper bound of $\OUT$ in terms of $\OUT^\heavy$:
        $$\OUT \le \OUT^\heavy + \OUT^\light \le  \OUT^\heavy + \frac{\epsilon'\Lambda}{8} \le  \OUT^\heavy + \frac{\epsilon'}{2} \cdot \OUT,$$
        i.e., $\OUT (1-\frac{\epsilon'}{2})\le \OUT^\heavy$.
        As $s^\heavy$ is an $\epsilon'$-approximation of $\OUT^\heavy$, $
        |s^\heavy -\OUT^\heavy| \le \epsilon'\OUT^\heavy$. Putting together with $\OUT^\heavy \le \OUT$ and $\OUT^\heavy \ge \OUT (1-\frac{\epsilon'}{2})$, we get:
        $$(1-\epsilon)\OUT\le(1-\frac{3\epsilon'}{2})\OUT \le  (1-\epsilon')(1-\frac{\epsilon'}{2})\OUT \le s_\Lambda \le (1+\epsilon')\OUT\le (1+\epsilon)\OUT.$$
        \item {\bf Case 3: $s^\heavy = \perp$ and $s^\light \neq \perp$} Similar to Case 2.
    
    Thus, if $\Lambda\le 4\OUT$, with probability at least $1-\delta$, $s_\Lambda$ is an $\epsilon$-approximation for $\OUT$.
    \end{proof}

      \begin{lemma}
        \label{lem:approx-counting-matrix}
        Algorithm~\ref{alg:approx-counting-matrix} returns an $\epsilon$-approximation of $\OUT$ with probability at least $1-\delta$.
    \end{lemma}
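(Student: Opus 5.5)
We prove Lemma~\ref{lem:approx-counting-matrix} by analyzing the geometric search in Algorithm~\ref{alg:approx-counting-matrix} with two guarantees from Lemma~\ref{lem:approx-counting-matrix-with-guess}. The first is that every single call satisfies $\expt[s_\Lambda]\le 2\OUT$, so by Markov $\Pr[s_\Lambda\ge 4\OUT]\le 1/2$. The second is that whenever $\Lambda\le 4\OUT$, each call is an $\epsilon$-approximation with probability at least $1-\delta/2$, since it is invoked with error $\delta/2$. The plan is to split the guesses into \emph{large} ones, $\Lambda>4\OUT$, and \emph{good} ones, $\Lambda\le 4\OUT$. We then show two things: the algorithm stops on no large guess, and it stops on the first good guess with an accurate median, or possibly on a slightly later good guess.

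\textbf{Large guesses.} For $\Lambda>4\OUT$, returning requires the median $s$ of $k=\log\frac{2}{\delta}$ independent copies to satisfy $s\ge\Lambda>4\OUT$. That happens only if at least half the copies exceed $4\OUT$. We bound this by a Chernoff/majority argument, using a per-copy probability at most $1/2$. To make the median amplification work we need a strict constant below $1/2$, so we use the slack $\expt[s_\Lambda]\le 2\OUT<\Lambda/2$: for $\Lambda\ge 8\OUT$, Markov gives per-copy probability at most $1/4$. The window $4\OUT<\Lambda<8\OUT$ is at most one guess. On that guess, with probability at least $1-\delta/2$ the median is an $\epsilon$-approximation, by the same median argument as for good guesses below, because each copy is accurate with probability $1-\delta/2$. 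Being accurate, it satisfies $s\le(1+\epsilon)\OUT<\Lambda$ and so does not trigger a return. There are $O(\log N)$ guesses in total. We union bound the failure events across them, absorbing the resulting $\log N$ factor into $k$; otherwise we observe that the probabilities decay geometrically as $\Lambda/\OUT$ grows, since Markov gives $2\OUT/\Lambda$ per copy.

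\textbf{Good guesses.} Let $\Lambda^*$ be the first guess with $\Lambda^*\le 4\OUT$; note $\Lambda^*>2\OUT$. If we are here, all earlier large guesses were passed. For any good guess, the median of $k$ copies, each an $\epsilon$-approximation with probability at least $1-\delta/2$, is itself an $\epsilon$-approximation with probability at least $1-\delta/2$ by the standard median trick. Hence any value returned at a good guess is accurate. Termination is also guaranteed: once $\Lambda\le(1-\epsilon)\OUT$, which is reached within $O(1)$ halvings after $\Lambda^*$, the accurate median satisfies $s\ge(1-\epsilon)\OUT\ge\Lambda$ and the algorithm returns. We union bound over these $O(1)$ good guesses and the single borderline large guess, then rescale constants in $\delta$ to get overall success probability at least $1-\delta$.

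\textbf{Main obstacle.} The delicate step is controlling the large guesses. The only guarantee there is the expectation bound $\expt[s_\Lambda]\le 2\OUT$; no concentration is available because Lemma~\ref{lem:approx-counting-matrix-with-guess} gives no accuracy guarantee when $\Lambda>4\OUT$. The approach is to first try Markov with the extra factor from $\Lambda\ge 8\OUT$ plus median amplification. The boundary guess in $(4\OUT,8\OUT)$ is handled via the accuracy clause instead. If the union over $\log N$ large guesses is not absorbed by $k=\log\frac{2}{\delta}$, we use the geometric decay of the per-guess failure probabilities in $\Lambda/\OUT$ so that the sum stays $O(\delta)$.
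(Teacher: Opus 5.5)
\textbf{Gap: the boundary guess.} The step that fails is your treatment of the guess $4\OUT<\Lambda<8\OUT$. There you claim each copy is an $\epsilon$-approximation with probability $1-\delta/2$. That is the accuracy clause of Lemma~\ref{lem:approx-counting-matrix-with-guess}, and it is only stated for $\Lambda\le 4\OUT$, as your own last paragraph acknowledges. For $\Lambda\in(4\OUT,8\OUT)$ the lemma gives only $\expt[s_\Lambda]\le 2\OUT$. Markov then yields a per-copy bound $\Pr[s_\Lambda\ge\Lambda]\le 2\OUT/\Lambda$, which tends to $1/2$ as $\Lambda\downarrow 4\OUT$. No median of $\log\frac{2}{\delta}$ copies turns a per-copy probability near $1/2$ into $O(\delta)$. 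Stopping on that guess is also not harmless: it outputs $s\ge\Lambda>4\OUT>(1+\epsilon)\OUT$. So with Lemma~\ref{lem:approx-counting-matrix-with-guess} used as a black box, your argument does not rule out a wrong answer there.

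A smaller point: for $\Lambda\ge 8\OUT$, a per-copy probability of $1/4$ only supports a majority-failure bound of about $(3/4)^{k/2}\approx(\delta/2)^{0.2}$ when $k=\log\frac{2}{\delta}$. So what must be scaled by a constant is the repetition count $k$ in Algorithm~\ref{alg:approx-counting-matrix}, not $\delta$.

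\textbf{How to close it.} You have to open up the proof of Lemma~\ref{lem:approx-counting-matrix-with-guess}. The hypothesis $\Lambda\le 4\OUT$ enters only in two places: $\max\{\OUT^\heavy,\OUT^\light\}\ge\frac{\epsilon'\Lambda}{8}$, and, in Case 2, $\frac{\epsilon'\Lambda}{8}\le\frac{\epsilon'}{2}\OUT$. For $\Lambda\le 8\OUT$ these become $\frac{\OUT}{2}\ge\frac{\Lambda}{16}\ge\frac{\epsilon'\Lambda}{8}$ and $\frac{\epsilon'\Lambda}{8}\le\epsilon'\OUT$. These still yield $(1-2\epsilon')\OUT\le s^\heavy\le(1+\epsilon')\OUT$. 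Hence the accuracy clause extends to $\Lambda\le 8\OUT$, and your boundary guess becomes a good one.

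Alternatively, the upper bounds in Cases 1--3 never use $\Lambda$, and the all-$\perp$ case returns $0$. So $s_\Lambda\le(1+\epsilon)\OUT$ holds with probability $1-O(\delta)$ for \emph{every} $\Lambda$. This disposes of any fixed number of guesses just above $4\OUT$ without Markov.

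\textbf{Comparison with the paper.} The paper's proof takes the same route as yours: Markov at level $4\OUT$, then the accuracy clause at the stopping guess $\Lambda^*$. It simply asserts that the median of copies, each exceeding $4\OUT$ with probability at most $1/2$, exceeds $4\OUT$ with probability at most $(1/2)^{\log\frac{2}{\delta}}$. Several parts of your proposal are more careful than the paper:
\begin{itemize}
\item your observation that this amplification needs slack below $1/2$;
\item your union bound over the $O(1)$ good guesses, instead of conditioning on the random $\Lambda^*$;
\item your explicit termination argument.
\end{itemize}
The patch you chose for the boundary guess, however, does not hold as written.
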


    \begin{proof}[Proof of Lemma~\ref{lem:approx-counting-matrix}]
         Consider the invocation of Algorithm~\ref{alg:approx-counting-matrix-with-guess} at Line 4. From Lemma~\ref{lem:approx-counting-matrix-with-guess}, $\expt[s_i]\le 2\OUT$. By the Markov inequality, $\Pr[s_i > 4\OUT] \le \frac{1}{2}$.
        Taking the median, $\Pr[s \le 4\OUT] \ge 1- (\frac{1}{2})^{\log \frac{2}{\delta}} \ge 1- \frac{\delta}{2}$. Let $s^*$ be the final estimator returned with the guess $\Lambda^*$. We know that $s^* \ge \Lambda^*$. Hence, $\Lambda^* \le 4 \OUT$ holds with probability at least $1-\frac{\delta}{2}$. Following Lemma~\ref{lem:approx-counting-matrix-with-guess}, when $\Lambda^* \le 4 \OUT$, each $s_i$ is an $\epsilon$-approximation of $\OUT$ with probability at least $1-\frac{\delta}{2}$. Thus, $s^*$ is an $\epsilon$-approximation of $\OUT$ with probability at least $1-\frac{\delta}{2}$. By the union bound, $s^*$ is an $\epsilon$-approximation of $\OUT$ with probability at least $1-\delta$.
    \end{proof}
    
    We next analyze the time complexity 
    of our approximate counting algorithm: 
    
    \begin{lemma}
        \label{lem:count-matrix-up}
        Algorithm~\ref{alg:approx-counting-matrix} runs in $O\left(\frac{N \cdot \log(N/\delta) \cdot \log(1/\delta)}{\epsilon^{2.5} \sqrt{\OUT}}\right)$ time with probability at least $1-\delta$.
    \end{lemma}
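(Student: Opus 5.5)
I will bound the running time by first analyzing a single call of \textsc{ApproxCountMatrixWithGuess} for a fixed guess $\Lambda$, and then summing over the geometric sequence of guesses in Algorithm~\ref{alg:approx-counting-matrix}. For a fixed $\Lambda$, the cost splits into four terms: (i) \textsc{DetectHeavy}, which draws $O(\frac{N\log(N/\delta)}{\sqrt{\Lambda}})$ samples at $O(1)$ each; (ii) the preprocessing for $\R^\heavy$ (computing $W_b$ and $\widehat{\OUT}^\heavy_\Join$), which touches only the $O(\frac{N}{\sqrt\Lambda})$ values of $B^\heavy$ (since $|B^\heavy|\le |R_2|/\sqrt{\Lambda}$); (iii) the two calls to \textsc{ApproxCountWithThreshold}; and (iv) \textsc{IntersectEstimate}. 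Terms (i) and (ii) already have the form $\O(N/\sqrt\Lambda)$.

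For (iii), I will fix a guess $\lambda\ge\tau=\epsilon'\Lambda/16$ inside Algorithm~\ref{alg:approximate-count-with-threshold}. This guess makes $k_\delta k_\lambda=O(\frac{\widehat{\OUT}_\Join\log(1/\delta)}{\epsilon^2\lambda})$ calls to \textsc{SampleMatrix}. By the analysis in Section~\ref{sec:sampling-matrix-analysis}, each call costs $O(1+\frac{\sum_{(a,c)}\min\{\cdot\}}{\widehat{\OUT}_\Join})$ in expectation. Here the skipped $b'$ values in Step 2 (those on the wrong side of the partition) need care. For the light part I will charge the scan over $\pi_B(R_1\ltimes a)$ to the same $\min$ term, since the negative hypergeometric analysis still applies with $P$ replaced by the full neighbor list. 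For the heavy part, only $O(N/\sqrt\Lambda)$ heavy values exist.

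Multiplying, the cost of one $\lambda$-level is
\[
O\!\left(\frac{\log(1/\delta)}{\epsilon^2\lambda}\bigl(\widehat{\OUT}_\Join+N\sqrt{\OUT}\bigr)\right).
\]
I then substitute $\widehat{\OUT}^\light_\Join=N\sqrt\Lambda$ and $\widehat{\OUT}^\heavy_\Join\le N\sqrt{\OUT}$. The second bound holds because the heavy part's full join is a sub-join of the matrix query, so Amossen's bound applies, or more carefully it is at most $N\sqrt{\OUT^\heavy}$. Summing geometrically over $\lambda$ down to $\tau$, the dominant term is $\lambda=\tau=\Theta(\epsilon\Lambda)$, which gives
\[
O\!\left(\frac{N(\sqrt\Lambda+\sqrt{\OUT})\log(1/\delta)}{\epsilon^3\Lambda}\right).
\]

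This is the step I expect to be the main obstacle. Naively it gives $\epsilon^{-3}$, not $\epsilon^{-2.5}$. The extra $\sqrt{\epsilon}$ has to come from noting that at level $\lambda$ the relevant part's output is at most about $2\lambda$ (otherwise the loop would have stopped earlier, with high probability). So the Cauchy--Schwarz term should be $N\sqrt{\lambda}$ rather than $N\sqrt{\OUT}$, and similarly $\widehat{\OUT}_\Join$ can be replaced by $N\sqrt{\Lambda}$. The per-level cost then becomes $\frac{N\sqrt\Lambda}{\epsilon^2\lambda}$, and at $\lambda=\epsilon\Lambda/16$ this yields $\frac{N}{\epsilon^{3}\sqrt\Lambda}$. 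To get $\epsilon^{-2.5}$ I will instead try bounding the Step-2 cost by $N\sqrt{\lambda}$ together with $\widehat{\OUT}^\light_\Join\le N\sqrt\Lambda$, balanced against $\lambda\ge\epsilon\Lambda$. I will work out exactly which combination produces $\epsilon^{-2.5}$; it is plausibly $N\sqrt{\lambda}/(\epsilon^2\lambda)$ evaluated at $\lambda=\epsilon\Lambda$.

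For (iv), each accepted light sample costs expected $O(\frac{N\sqrt\Lambda}{\OUT^\light}\cdot\ldots)$ attempts plus $|B^\heavy|=O(N/\sqrt\Lambda)$ membership tests. There are $O(\epsilon^{-2}\log(1/\delta))$ samples, and since Case 1 only occurs when $\OUT^\light\gtrsim\epsilon\Lambda$, this term is also $\O(N/(\epsilon^{O(1)}\sqrt\Lambda))$.

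Finally, with probability $1-\delta$ the outer loop stops at some $\Lambda^*\ge\Omega(\OUT)$ by the argument of Lemma~\ref{lem:approx-counting-matrix}. The last guesses are within a constant factor of $\OUT$ because the $\Lambda\le4\OUT$ guarantee fires, so it stops by $\Lambda\approx\OUT/2$. Summing $N/\sqrt\Lambda$ over $\Lambda=N^2,N^2/2,\dots,\Theta(\OUT)$ is a geometric series dominated by the last term, giving $O(N/\sqrt{\OUT})$. The outer $\log(2/\delta)$ repetitions, the $\log(N/\delta)$ from \textsc{DetectHeavy}, and Markov amplification to turn expected time into a high-probability bound assemble into the stated bound.
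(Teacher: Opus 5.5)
Your per-guess accounting follows the same skeleton as the paper's, and it is more complete: the paper's proof never charges \textsc{DetectHeavy}, the heavy preprocessing, or \textsc{IntersectEstimate}. However, the proposal does not reach $\epsilon^{-2.5}$, and the repair you sketch cannot work. At level $\lambda$ the light call makes $k_\delta k_\lambda=\Theta(\widehat{\OUT}^{\light}_{\Join}\log(1/\delta)/(\epsilon^{2}\lambda))$ sampler invocations, each costing $\Omega(1)$ however Step~2 is bounded. Since $\widehat{\OUT}^{\light}_{\Join}=N\sqrt{\Lambda}$ is hard-wired, the last level $\lambda\approx\tau=\Theta(\epsilon\Lambda)$ alone costs $\Theta(N\log(1/\delta)/(\epsilon^{3}\sqrt{\Lambda}))$. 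Replacing $N\sqrt{\OUT}$ by $N\sqrt{\lambda}$ only affects the other term. Your term (iv) hides the same exponent: $\Theta(\epsilon^{-2}\log(1/\delta))$ light samples, each needing about $N\sqrt{\Lambda}/\OUT^{\light}$ attempts with $\OUT^{\light}$ possibly only $\Theta(\epsilon\Lambda)$ in Case~1, again give $N/(\epsilon^{3}\sqrt{\Lambda})$, not merely $\epsilon^{-O(1)}$. Completed, your argument yields $\widetilde{O}(N/(\epsilon^{3}\sqrt{\OUT}))$, not the stated bound.

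The paper obtains the extra $\sqrt{\epsilon}$ by splitting the outer sum at $\Lambda\approx\OUT/\epsilon$. For $\Lambda>\OUT/\epsilon$, the per-guess bound $N/(\epsilon^{3}\sqrt{\Lambda})$ sums to $N/(\epsilon^{2.5}\sqrt{\OUT})$, because the geometric series is then dominated by $\Lambda\approx\OUT/\epsilon$ rather than $\Lambda\approx\OUT$. For $\OUT\le\Lambda\le\OUT/\epsilon$, it asserts that the inner loop halts at $\lambda\approx\OUT$ instead of descending to $\tau$, giving $N\sqrt{\Lambda}/(\epsilon^{2}\OUT)$ per guess.

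Do not simply adopt that second step, because it conflates $\OUT$ with the sub-instance's output. The light call descends to $\max\{\Theta(\OUT^{\light}),\tau\}$, and $\OUT^{\light}$ can be far below $\tau$ even when $\Lambda\approx\OUT$. The heavy call is fine, since $\widehat{\OUT}^{\heavy}_{\Join}\le N\sqrt{\OUT^{\heavy}}$ caps its cost at $N/(\epsilon^{2}\sqrt{\tau})$. For a concrete failure, take $R_1=\{a_1,\dots,a_p\}\times\{b\}$ and $R_2=\{b\}\times\{c_1,\dots,c_q\}$ with $q\ge 4p$. Then $b$ is heavy, and found by \textsc{DetectHeavy} with certainty, for every $\Lambda\le 2\OUT$. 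So $\OUT^{\light}=0$ at the guess $\Lambda\in(\OUT,2\OUT]$, which any run returning an $\epsilon$-approximation must execute. There the light call alone makes $\Omega(N/(\epsilon^{3}\sqrt{\OUT}))$ invocations, which exceeds the stated bound once $\epsilon$ is small.

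So your $\epsilon^{-3}$ is the true behaviour of the algorithm as written, and no analysis of that algorithm gives $\epsilon^{-2.5}$. The exponent $2.5$ becomes attainable after a small change: declare $b$ heavy when $|R_2\ltimes b|>\sqrt{\epsilon\Lambda}$, so that $\widehat{\OUT}^{\light}_{\Join}=N\sqrt{\epsilon\Lambda}$ and $|B^{\heavy}|\le N/\sqrt{\epsilon\Lambda}$. Then each of your terms (i)--(iv) is $\widetilde{O}(N/(\epsilon^{2.5}\sqrt{\Lambda}))$ per guess, and your geometric sum over the guesses $\Lambda\ge\Theta(\OUT)$ completes the argument.
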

    
     \begin{proof}[Proof of Lemma~\ref{lem:count-matrix-up}]
    We analyze the total time complexity by summing the cost of \textsc{ApproxCountMatrixWithGuess} over the sequence of guesses $\Lambda = N^2, N^2/2, \dots, 1$. The algorithm terminates with high probability when $\Lambda \le \OUT$. Thus, we only consider $\Lambda \gtrsim\footnote{greater than or approximately equal to.} \OUT$.
    
    For a fixed $\Lambda$, the procedure \textsc{ApproxCountWithThreshold} iterates guesses $\lambda$ from $N^2$ down to a threshold $\tau = \frac{\epsilon' \Lambda}{16} = \Theta(\epsilon \Lambda)$. The cost for a specific $\lambda$ is dominated by the sampling loop, which performs $k_\lambda = O(\frac{\widehat{\OUT}_\Join}{\epsilon^2 \lambda})$ iterations. The expected cost per iteration is $O(1 + \frac{N\cdot \sqrt{\OUT}}{\widehat{\OUT}_\Join})$. Thus, the total cost for a fixed $\lambda$ is:
    $T(\lambda) \approx \frac{\widehat{\OUT}_\Join}{\epsilon^2 \lambda} \left(1 + \frac{N\cdot\sqrt{\OUT}}{\widehat{\OUT}_\Join}\right) = \frac{1}{\epsilon^2 \lambda} (\widehat{\OUT}_\Join + N\sqrt{\OUT}).$
    For the light part, $\widehat{\OUT}_\Join^\light = N\sqrt{\Lambda}$. For the heavy part, $\widehat{\OUT}_\Join^\heavy \le N\sqrt{\OUT^\heavy} \le N\sqrt{\Lambda}$ (since $\OUT^\heavy \le \OUT \le \Lambda$ in the relevant range). Thus, $T(\lambda) \le O(\frac{N\sqrt{\Lambda}}{\epsilon^2 \lambda})$.
    
    The inner loop stops when $\lambda \approx \OUT$ (if $\OUT \ge \tau$) or reaches $\tau$ (if $\OUT < \tau$). Let $\lambda_{stop} = \max(\Theta(\OUT), \tau)$. The total cost for a fixed $\Lambda$ is dominated by the smallest $\lambda$, i.e., $\lambda_{stop}$:
    $ \text{Cost}(\Lambda) = \sum_{\lambda \ge \lambda_{stop}} T(\lambda) = O\left( \frac{N\sqrt{\Lambda}}{\epsilon^2 \lambda_{stop}} \right). $
    
    We now sum $\text{Cost}(\Lambda)$ over the geometric sequence of $\Lambda$ from $N^2$ down to $\OUT$. We split the sum into two ranges based on whether $\tau \le \OUT$ (i.e., $\epsilon \Lambda \lesssim \OUT$) or $\tau > \OUT$.
    
    \begin{itemize}[leftmargin=*]
        \item \textbf{Range 1: $\OUT \le \Lambda \le \frac{\OUT}{\epsilon}$.}
        Here, $\tau \approx \epsilon \Lambda \le \OUT$, so the inner loop stops at $\lambda_{stop} \approx \OUT$.
        $\text{Cost}(\Lambda) \approx \frac{N\sqrt{\Lambda}}{\epsilon^2 \OUT}. $
        Summing over $\Lambda = 2^k \OUT$ for $k=0 \dots \log(1/\epsilon)$:
        $\sum_{\Lambda} \text{Cost}(\Lambda) \approx \frac{N}{\epsilon^2 \OUT} \sum_{k} \sqrt{2^k \OUT} = \frac{N}{\epsilon^2 \sqrt{\OUT}} \sum_{k} (\sqrt{2})^k. $
        The sum is dominated by the largest term ($\Lambda \approx \OUT/\epsilon$):
        $ \text{s}_1 \approx \frac{N}{\epsilon^2 \sqrt{\OUT}} \cdot \sqrt{\frac{1}{\epsilon}} = \frac{N}{\epsilon^{2.5} \sqrt{\OUT}}. $
        
        \item \textbf{Range 2: $\Lambda > \frac{\OUT}{\epsilon}$.}
        Here, $\tau \approx \epsilon \Lambda > \OUT$, so the inner loop stops at $\lambda_{stop} = \tau \approx \epsilon \Lambda$.
        $ \text{Cost}(\Lambda) \approx \frac{N\sqrt{\Lambda}}{\epsilon^2 (\epsilon \Lambda)} = \frac{N}{\epsilon^3 \sqrt{\Lambda}}.  $
        Summing over $\Lambda = 2^k (\OUT/\epsilon)$:
         $ \sum_{\Lambda} \text{Cost}(\Lambda) \approx \frac{N}{\epsilon^3} \sum_{\Lambda} \frac{1}{\sqrt{\Lambda}}.  $
        The sum is dominated by the smallest term ($\Lambda \approx \OUT/\epsilon$):
         $ \text{s}_2 \approx \frac{N}{\epsilon^3 \sqrt{\OUT/\epsilon}} = \frac{N}{\epsilon^{2.5} \sqrt{\OUT}}.  $
    \end{itemize}
    
    Combining both ranges, the total time complexity is $O(\frac{N}{\epsilon^{2.5} \sqrt{\OUT}})$, multiplied by the logarithmic factors from probability amplification.
    \end{proof}


    \begin{theorem}
    \label{thm:count-matrix-up}
        For $\matrix$, given parameters $0<\epsilon, \delta <1$, there is an algorithm that can take as input an arbitrary input instance $\R$, and output an $\epsilon$-approximation of $|\matrix(\R)|$ in $O\left(\frac{N \cdot \log(N/\delta) \cdot \log(1/\delta)}{\epsilon^{2.5} \sqrt{\OUT}}\right)$ time with probability at least $1-\delta$.
    \end{theorem}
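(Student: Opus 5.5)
Theorem~\ref{thm:count-matrix-up} follows by combining the correctness guarantee of Lemma~\ref{lem:approx-counting-matrix} with the running-time bound of Lemma~\ref{lem:count-matrix-up}. The algorithm is Algorithm~\ref{alg:approx-counting-matrix}, run with parameters $\epsilon$ and $\delta/2$. The plan is to show that with probability at least $1-\delta/2$ it returns an $\epsilon$-approximation, and with probability at least $1-\delta/2$ it terminates within the stated time. A union bound then gives both simultaneously with probability at least $1-\delta$. Replacing $\delta$ by $\delta/2$ changes $\log(N/\delta)$ and $\log(1/\delta)$ only by constant factors, so the asymptotic bound is unaffected.

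The correctness half is exactly Lemma~\ref{lem:approx-counting-matrix}. Its proof needs three facts, each holding with failure probability charged to the $\delta$ budget:
\begin{itemize}[leftmargin=*]
\item \textsc{DetectHeavy} succeeds, by Lemma~\ref{lem:heavy-detect}. If it fails, only the cost suffers, not the accuracy, so this enters only the time analysis.
\item Each \textsc{ApproxCountWithThreshold} call behaves as stated in Lemma~\ref{lem:approximate-count-with-threshold}.
\item \textsc{IntersectEstimate} is $\epsilon'$-accurate, by Lemma~\ref{lem:intersect-estimate}.
\end{itemize}
Lemma~\ref{lem:approx-counting-matrix-with-guess} then gives an $\epsilon$-approximation at the terminating guess $\Lambda^*$, and Markov's inequality with median amplification ensures $\Lambda^*\le 4\OUT$.

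The time half is Lemma~\ref{lem:count-matrix-up}. Three points need checking on top of the geometric summation already carried out there.
\begin{itemize}[leftmargin=*]
\item \textbf{Early termination.} With high probability the outer loop stops once $\Lambda\le \OUT$: at such a guess the median estimate is at least $(1-\epsilon)\OUT\ge\Lambda$ up to constants. So only $\Lambda\gtrsim\OUT$ contributes.
\item \textbf{Overhead of the auxiliary steps} for a fixed $\Lambda$. \textsc{DetectHeavy} costs $\O(N/\sqrt{\Lambda})$. Building the heavy-side weights costs $O(|B^\heavy|)=O(N/\sqrt{\Lambda})$. \textsc{IntersectEstimate} draws $O(\epsilon^{-2}\log(1/\delta))$ light samples, each at expected cost $O(N\sqrt\Lambda/\OUT^\light)$ with $\OUT^\light\gtrsim\epsilon\Lambda$, and then scans $B^\heavy$ for each sample. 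All of these are dominated by $N/(\epsilon^{2.5}\sqrt{\OUT})$ after summing over $\Lambda\ge\OUT$.
\item \textbf{Skipped neighbors in Algorithm~\ref{alg:accept}.} The sub-instances $\R^\heavy,\R^\light$ are simulated by filtering, so Algorithm~\ref{alg:accept} may skip values $b'$ outside the current sub-instance. These skips must not inflate the expected cost beyond $\min\{|R_1\ltimes a|,|R_2\ltimes c|\}/\deg(a,c)$ times a constant. The plan is to bound the work by the full neighbor set of the smaller side. The Cauchy--Schwarz bound $N\sqrt{\OUT}$ uses only $\sum\min\{\cdot\}$ over results of the original instance, so it still applies.
\end{itemize}

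Finally, the expected-time bounds must become a high-probability bound. I would apply the standard amplification from the remark after Theorem~\ref{thm:sample-matrix-ub}: impose a budget of a constant times the expected cost on each phase, restart on overrun, and accept the $\log(N/\delta)$ factor already present in the bound. The extra $\log(1/\delta)$ factor comes from the median repetitions.

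I expect the main obstacle to be the filtered-acceptance step, the third point above. The skipped values might make up most of $\pi_B(R_1\ltimes a)$ while the chosen $S$ is small. If the direct argument fails, my fallback is to pick $S$ by comparing the unfiltered degrees and to charge every scanned element, skipped or not, to $\min\{|R_1\ltimes a|,|R_2\ltimes c|\}$.
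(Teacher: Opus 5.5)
Your decomposition matches the paper's: the theorem is Lemma~\ref{lem:approx-counting-matrix} plus Lemma~\ref{lem:count-matrix-up} plus a union bound. The problem is your claim that the auxiliary costs are ``dominated by $N/(\epsilon^{2.5}\sqrt{\OUT})$''. It is false, and the failure is not confined to the steps you add.

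Your own estimate for \textsc{IntersectEstimate} is $O(\epsilon^{-2}\log(1/\delta))\cdot N\sqrt{\Lambda}/\OUT^\light$ with $\OUT^\light\gtrsim\epsilon\Lambda$, i.e.\ $O(N\log(1/\delta)/(\epsilon^{3}\sqrt{\Lambda}))$. This is attained, for example when $\OUT^\heavy\approx\OUT$ and $\OUT^\light\approx\epsilon\OUT/20$ at the terminating guess, so that both $s^\heavy,s^\light\neq\perp$. Summed over $\Lambda\ge\OUT$, the total is dominated by the \emph{smallest} guess $\Lambda=\Theta(\OUT)$, which gives $\epsilon^{-3}$, not $\epsilon^{-2.5}$.

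The same defect sits inside the ``geometric summation already carried out'' that you import. Range~1 of Lemma~\ref{lem:count-matrix-up} assumes each \textsc{ApproxCountWithThreshold} call stops at $\lambda\approx\OUT$. The call on $\R^\light$ actually stops at $\max\{\Theta(\OUT^\light),\tau\}$. Because $\widehat{\OUT}^\light_\Join=N\sqrt{\Lambda}$ does not shrink with $\OUT^\light$, every $\lambda$ that call tries costs $\Omega(N\sqrt{\Lambda}\log(1/\delta)/(\epsilon^{2}\lambda))$ regardless. Concretely, let one $B$-value $b_0$ have $x$ neighbors in $R_1$ and $y\ge 4x$ neighbors in $R_2$, plus arbitrary dangling tuples. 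Then $b_0$ is heavy for every guess $\Lambda<y^2$, including the terminating one $\Lambda=\Theta(\OUT)$. So $\OUT^\light=0$, the light call runs all the way down to $\tau=\epsilon\Lambda/80$, and it alone costs $\Theta(N\log(1/\delta)/(\epsilon^{3}\sqrt{\OUT}))$. That exceeds the stated bound as $\epsilon\to 0$. The heavy call is fine, since $\widehat{\OUT}^\heavy_\Join\le N\sqrt{\OUT^\heavy}$ caps its cost at $O(N/(\epsilon^{2}\sqrt{\tau}))$.

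So both your argument and the imported one give the theorem only with $\epsilon^{3}$ in place of $\epsilon^{2.5}$, which is immaterial for constant $\epsilon$. Reaching $\epsilon^{2.5}$ needs a different light-part estimator, for instance one targeting additive error $\Theta(\epsilon\Lambda)$ rather than relative error, so the number of trials no longer scales with $1/\max\{\OUT^\light,\tau\}$.

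Two smaller remarks.
\begin{enumerate}
\item Your ``main obstacle'' disappears if out-of-part values $b'$ are counted as failures rather than skipped, keeping $M=|S|$ for the unfiltered smaller list $S$. Then $F\sim\texttt{NHG}(1,|S|-1,\deg'(a,c)-1)$, where $\deg'(a,c)$ counts witnesses inside the sub-instance. Acceptance is exactly $1/\deg'(a,c)$, and Cauchy--Schwarz over the sub-instance's results still bounds the work. Skipping them while drawing $\randint(1,|S|)$ would instead multiply the acceptance probability by $|S'|/|S|$, with $S'$ the filtered list, which breaks uniformity.
\item A failure of \textsc{DetectHeavy} affects accuracy, not only time. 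A missed $b$ with $|R_2\ltimes b|>\sqrt{\Lambda}=\Delta$ falls into $\R^\light$, where \textsc{SampleMatrix-L} under-weights its join tuples. That event must therefore be included in the union bound for correctness.
\end{enumerate}
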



    \section{Lower Bounds for Matrix Query}
    \label{sec:lb-matrix}
 
    In this section, we establish lower bounds for approximate counting and uniform sampling matrix query. Our lower bound for approximate counting is a reduction from the set disjointness problem in the two-party communication model. Given the universe $[m]$, Alice is given a set $S_A \subseteq [m]$ of $\frac{m}{4}$ elements and Bob is given a set $S_B \subseteq [m]$ of $\frac{m}{4}$ elements. The goal is to determine if $S_A \cap S_B = \emptyset$. It is known that any randomized communication protocol that solves this problem with constant success probability requires $\Omega(m)$ bits of communication~\cite{kalyanasundaram1992probabilistic, bar2002information,razborov1990distributional}.
    
    \begin{lemma}\label{lmm:count-matrix-lb}
        For $\matrix$, given any nonnegative integers $N, \OUT, n$, any algorithm that takes as input an arbitrary instance $\R$ of input size $N$, output size $\OUT$, and active domain size $n$ for $B$, and returns an $O(1)$ approximation of $\OUT$ with high probability, requires at least $\Omega(\min\{n, \frac{N}{\sqrt{\OUT}}\})$ time.
    \end{lemma}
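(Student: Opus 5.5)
We reduce from set disjointness over a universe $[m]$, choosing $m = \Theta(\min\{n, N/\sqrt{\OUT}\})$. The idea is to build a matrix instance in which each universe element $i\in[m]$ corresponds to a $B$-value $b_i$. Alice's bit $x_i = [i\in S_A]$ decides whether $b_i$ is wired to a block of $A$-values, and Bob's bit $y_i = [i \in S_B]$ decides whether $b_i$ is wired to a block of $C$-values. Set $d = \sqrt{\OUT}$ (rounded). For each $i$, if $x_i=1$ then $R_1$ contains $(a_{i,j}, b_i)$ for $j\in[d]$; otherwise it contains $d$ dummy tuples $(a_{i,j}, b'_i)$ attached to a fresh value $b'_i$ that never occurs in $R_2$. $R_2$ is defined symmetrically from $y_i$ with values $c_{i,j}$. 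The $A$-blocks and $C$-blocks are distinct across $i$, so $b_i$ contributes the $d^2=\OUT$ pairs $\{a_{i,\cdot}\}\times\{c_{i,\cdot}\}$ exactly when $i\in S_A\cap S_B$, and nothing otherwise.

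To keep $\OUT$ nonzero and pinned near the target independently of disjointness, we add a fixed gadget with one extra $B$-value of degree $\sqrt{\OUT}$ on each side, contributing $\OUT$ results. Then a disjoint pair gives output $\OUT$, while a pair with $|S_A\cap S_B|\ge 1$ gives output at least $2\OUT$. Since the standard hard distribution uses intersection size $0$ versus $1$, the gap is a factor of $2$. To separate it under an arbitrary $O(1)$ approximation, we scale the planted gadget down (e.g. to $\OUT/c$ for a large constant $c$), which widens the gap to a factor $c+1$.

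The input size is $2md + O(\sqrt{\OUT}) = \Theta(m\sqrt{\OUT})$, and we pad with isolated tuples up to $N$; this is valid because $m\le N/\sqrt{\OUT}$. The number of $B$-values is $O(m)$, and we pad to $n$ with isolated values; this is valid because $m\le n$. These paddings are exactly why $m$ must be the minimum of the two quantities.

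Next comes the simulation. Alice and Bob jointly run the counting algorithm. Every property-testing primitive (sample tuple, compute degree, access neighbor, test tuple) on $R_1$ depends only on $x$, and on $R_2$ only on $y$, because the layout (positions, fixed orderings, degrees of $b_i$ vs. $b'_i$) is a deterministic function of index $i$ and one bit. Hence each primitive call is answered by the owning party with $O(\log N)$ bits, or even $O(1)$ bits, since the other party can reconstruct the answer from $i$ and the bit. For example, $|R_1\ltimes b_i| \in\{0,d\}$ reveals $x_i$, and a uniform tuple from $R_1$ is a uniform index $(i,j)$ that Alice names. Shared randomness lets both parties agree on the random choices. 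An algorithm making $T$ primitive calls thus yields a protocol with $O(T)$ bits (or $O(T\log N)$) that decides disjointness with high probability. The $\Omega(m)$ lower bound then gives $T=\Omega(m)$ (up to a log factor if we charge $\log N$ per call; I would design the encoding so each answer is $O(1)$ bits to avoid it).

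The main obstacle is making each primitive answer cost $O(1)$ bits, which requires the tuple layout to be fully index-determined. The dummy-$b'_i$ trick handles degrees and sampling uniformly; relation sizes are fixed at $md$ regardless of the inputs. The second obstacle is keeping the output size exactly $\Theta(\OUT)$ in both the disjoint and intersecting cases; the planted gadget does this when the hard distribution promises $|S_A\cap S_B|\le 1$.
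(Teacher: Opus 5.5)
Your reduction is correct, but it is built differently from the paper's, and the differences matter.

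\textbf{The paper's route.} It uses the coarse embedding $R_1=\adom(A)\times S_A$, $R_2=S_B\times\adom(C)$ with $|\adom(A)|=|\adom(C)|=\sqrt{K}$. Every element of $S_A\cap S_B$ witnesses the \emph{same} $K$ pairs, so the projection collapses them and the output is exactly $0$ or exactly $K$, whatever $|S_A\cap S_B|$ is. Hence no promise on the intersection is needed, and $n=\Theta(m)=\Theta(N/\sqrt{K})$ holds automatically; arbitrary $(N,\OUT)$ is then handled in Theorem~\ref{thm:count-matrix-lb} by taking $n=\Theta(N/\sqrt{\OUT})$. The price is in the simulation. A sample from $R_1$, or the $j$-th neighbor of some $a\in\adom(A)$, is an element of $S_A$, and Alice must transmit it with $\Theta(\log m)$ bits. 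Read literally, the paper's argument therefore gives $\Omega(m/\log m)$, not the ``small number of bits'' it claims.

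\textbf{Your route.} Your per-index private blocks with dummy $B$-values make every oracle answer a function of a public index and a single private bit. You need distinct dummies $b'_i$ for $R_1$ and $b''_i$ for $R_2$, since a shared dummy would create spurious outputs when $x_i=y_i=0$. This gives $O(1)$ bits per query and a clean $\Omega(m)$. Your padding also realizes arbitrary $(N,\OUT,n)$ with $m=\Theta(\min\{n,N/\sqrt{\OUT}\})$ directly, which is closer to the lemma as stated.

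\textbf{What you pay.} Outputs add up over intersecting indices, so you really do need the unique-disjointness promise $|S_A\cap S_B|\le 1$ to keep the output $\Theta(\OUT)$ in both cases. This is fine, since Razborov's hard distribution satisfies it. For $\OUT=0$ you must drop the planted gadget and fall back to a $0$-versus-positive gap, as the paper does. Finally, your $|\adom(B)|=2m-|S_A\cap S_B|+O(1)$ differs by one between the two cases, which is harmless.
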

    \begin{proof}[Proof of Lemma~\ref{lmm:count-matrix-lb}]
    Given an input instance $(S_A,S_B)$ of the set disjointness problem, Alice and Bob construct an instance $\R$ of $\matrix$ based on their private sets $S_A$ and $S_B$ as follows. The domain of attribute $B$ is the universe $[m]$. The active domain of attribute $B$ is $S_A \cup S_B$. The active domains for attributes $A$ and $C$ are two disjoint sets, each of size $\sqrt{K}$ for some $K>1$. Alice defines relation $R_1(A,B)$ as $R_1 = \adom(A) \times S_A$. Bob defines relation $R_2(B,C)$ as $R_2 = S_B \times \adom(C)$.
    The input size of $\R$ is $N = |R_1| + |R_2| = \sqrt{K}|S_A| + \sqrt{K}|S_B| = \sqrt{K} \cdot m/2$.
    The output size of $\R$ is $\OUT = |\pi_{A,C}(R_1 \Join R_2)|$ depends on the intersection of $S_A$ and $S_B$. If $S_A \cap S_B = \emptyset$, then $\OUT = 0$. If $S_A \cap S_B \neq \emptyset$, then $\OUT = |\adom(A)| \cdot |\adom(C)| = K$.

    Any algorithm $\mathcal{A}$ for approximate counting $\R$ can be used to solve the set disjointness instance $(S_A,S_B)$ since it distinguishes between the $\OUT=0$ case and the $\OUT=K$ case. Each query made by $\mathcal{A}$ will be answered by Alice and Bob communicating a small number of bits. For any query $\mathcal{A}$ makes (e.g., a tuple check of $(a,b) \in R_1$), Alice can answer it locally by checking if $b \in S_A$ and communicating the 1-bit answer to Bob. Other primitives are handled analogously. For queries on $R_1$ involving $b$ (degree or neighbor access), Alice uses the fixed set $\adom(A)$ conditioned on $b \in S_A$. For queries involving $a$ (degree or neighbor access), Alice uses the private set $S_A$ (e.g., the degree of any $a \in \adom(A)$ is $|S_A|$, and the neighbors are the elements of $S_A$). Finally, for a sample query, Alice draws uniformly from $\adom(A) \times S_A$. Bob handles queries on $R_2$ symmetrically.
    If $\mathcal{A}$ provides an $O(1)$ approximation of $\OUT$ within $o(\min\{n,\frac{N}{\sqrt{\OUT}}\})=o(n)$ queries, the set disjointness problem is solved within $o(n)=o(m)$ communication cost, leading to a contradiction.
    \end{proof}

    With this lemma, we can derive the following lower bound that is only parameterized by $N$ and $\OUT$, as it is always feasible to construct an instance with $n = \Theta(\frac{N}{\sqrt{\OUT}})$: 

    \begin{theorem}
        \label{thm:count-matrix-lb}
        For $\matrix$, given any nonnegative integers $N, \OUT$, any randomized algorithm that takes as input an arbitrary instance $\R$ of input size $N$ and output size $\OUT$, and outputs an $O(1)$ approximation of $\OUT$ with high probability, requires at least $\Omega(\frac{N}{\sqrt{\OUT}})$ time.
    \end{theorem}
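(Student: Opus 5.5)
The plan is to derive Theorem~\ref{thm:count-matrix-lb} from Lemma~\ref{lmm:count-matrix-lb}. For any target pair $(N,\OUT)$ we instantiate the set-disjointness construction so that the active domain size of $B$ satisfies $n = \Theta(\frac{N}{\sqrt{\OUT}})$. The lemma then gives $\Omega(\min\{n, \frac{N}{\sqrt{\OUT}}\}) = \Omega(\frac{N}{\sqrt{\OUT}})$.

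For the parameters, set $K = \OUT$, so that $|\adom(A)| = |\adom(C)| = \sqrt{\OUT}$, and set the universe size to $m = \Theta(\frac{N}{\sqrt{\OUT}})$. Then $N = \sqrt{K}\cdot m/2$ holds up to constants, and $|S_A \cup S_B| \le m/2$. Concretely, take $m = \lfloor 2N/\sqrt{\OUT}\rfloor$, rounded so that $m/4$ and $\sqrt{\OUT}$ are integers; rounding $\sqrt{\OUT}$ to an integer changes $\OUT$ by at most a constant factor. If exact values of $N$ and $\OUT$ are required, we pad the instance with dummy tuples on fresh values that join with nothing. This adds to $N$ without changing $\OUT$ or the hard core of the instance.

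Two subtleties need care. The first is that in a ``no'' instance the output size is $0$ rather than $\OUT$. The statement concerns instances whose output size is $\OUT$, so the argument must distinguish two instance families. One route is to add a fixed gadget contributing exactly $\OUT$ output pairs, giving outputs $\OUT$ versus $2\OUT$ (distinguishable by an $O(1)$-approximation for small constant factors, or amplified to $c\OUT$ versus $\OUT$). The other route is to read ``arbitrary instance of size $N$ and output $\OUT$'' as requiring the algorithm to handle the $\OUT = K$ instances within the time bound. In that case a timeout argument lets an algorithm that runs in $o(\frac{N}{\sqrt{\OUT}})$ on $\OUT$-instances decide disjointness. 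I would use the gadget approach: take disjoint copies, with one copy's $B$-overlap always present. Its cost is only a constant-factor change to $N$, and the communication simulation is unaffected because the gadget is public.

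The second subtlety is the regime $\OUT > N^2/c$, where $m$ would fall below a constant. There the bound $\Omega(\frac{N}{\sqrt{\OUT}}) = \Omega(1)$ is trivial, since the algorithm must at least output. The main obstacle is the gadget/distinguishing step: making sure that both instance families have size $\Theta(N)$ and output $\Theta(\OUT)$, that they differ by a constant factor, and that every primitive (sample, degree, neighbor, test) is still answerable with $O(\log N)$ bits of communication. With that in place, an $o(\frac{N}{\sqrt{\OUT}})$-time algorithm yields an $o(m)$-communication protocol for disjointness, contradicting the known $\Omega(m)$ bound.
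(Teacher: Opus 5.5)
Your core reduction is the paper's. You instantiate the set-disjointness construction of Lemma~\ref{lmm:count-matrix-lb} with $|\adom(A)|=|\adom(C)|=\sqrt{\OUT}$ and $|\adom(B)|=\Theta(N/\sqrt{\OUT})$, so that $\min\{n,N/\sqrt{\OUT}\}=\Theta(N/\sqrt{\OUT})$. Where you depart is the output-size promise. The paper simply invokes the lemma, whose two families have outputs $0$ and $K=\OUT$. It implicitly relies on a timeout argument: run the algorithm for its budget on output-$\OUT$ instances, and declare ``disjoint'' unless it halts with a positive estimate. That argument also shows the algorithm is slow on some instance of output exactly $\OUT$. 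Your public gadget instead gives both families size $\Theta(N)$ and output $\Theta(\OUT)$, namely $\OUT$ versus $(1+c')\OUT$ with $c'$ a large enough constant, as you note. So no timeout is needed, and the simulation is unaffected because the gadget is public. This is a valid, slightly more careful variant of the same proof.

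Two points still need patching.

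First, the statement includes $\OUT=0$. Under the paper's convention $\OUT\mapsto\max\{1,\OUT\}$, it then claims $\Omega(N)$, and your construction does not cover this case. No gadget is possible there. The claim also only makes sense for algorithms that are correct on all instances, since otherwise ``output $0$'' is a valid $O(1)$-time algorithm. So you need the timeout route with $|\adom(A)|=|\adom(C)|=1$ and $|\adom(B)|=\Theta(N)$, which is exactly the paper's second case.

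Second, your accounting loses a logarithm. If each primitive costs $O(\log N)$ bits, then $o(m)$ queries give $o(m\log N)$ bits of communication, not $o(m)$. As written you therefore obtain only $\Omega\bigl(N/(\sqrt{\OUT}\log N)\bigr)$. The paper's ``small number of bits'' hides the same issue; its introduction in fact states the bound as $\widetilde{\Omega}$.

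To get the clean bound, make every answer depend on a single coordinate:
\begin{itemize}
\item Give each $a\in\adom(A)$ a neighbor slot for every $i\in[m]$. The slot points to $b_i$ if $i\in S_A$ and to a fresh dummy $b'_i$ otherwise.
\item Do the same for $R_2$ with Bob's set and dummies $b''_i$.
\end{itemize}
Then degrees of $A$- and $C$-values are public. Every sample query (a public slot choice plus one bit), neighbor query, degree query, or membership query costs $O(1)$ bits. The output is still $K$ or $0$, and $m=\Theta(N/\sqrt{\OUT})$ is preserved.
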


     \begin{proof}[Proof of Theorem~\ref{thm:count-matrix-lb}] For any given $N, \OUT$, we distinguish the following two cases:
        
        If $\OUT \ge 1$, there is a hard instance for $\matrix$ with $n=\Theta(\frac{N}{2\sqrt{\OUT}})$ as follows. The domain sizes of attribute $A,B,C$ are $\sqrt{\OUT},\frac{N}{2\sqrt{\OUT}},\sqrt{\OUT}$ respectively. $R_1$ is the Cartesian product of $\dom(A)$ and $\dom(B)$. $R_2$ is the Cartesian product of $\dom(B)$ and $\dom(C)$. So, any algorithm needs $\Omega(\min\{n, \frac{N}{\sqrt{ \OUT}}\})=\Omega(\frac{N}{\sqrt{\OUT}})$ time. 
            
        If $\OUT = 0$, there is a hard instance for $\matrix$ with $n=N$ as follows. The domain sizes of attribute $A,B,C$ are $1,N,1$ respectively. Relation $R_1$ is the Cartesian product of $\dom(A)$ and any $\frac{N}{2}$ values in $\dom(B)$. Relation $R_2$ is the Cartesian product of the remaining $\frac{N}{2}$ values in $\dom(B)$ and $\dom(C)$.
        So, any algorithm needs to spend $\Omega(\min\{n, \frac{N}{\sqrt{ \OUT}}\})=\Omega(N)$ time. 
        
        Combining these two cases yields the desired lower bound. 
        \end{proof}

    This lower bound is derived from communication complexity, which imposes an information-theoretic limit independent of the specific computational operations used. Consequently, this applies to \emph{any} algorithm, including those based on fast matrix multiplication, effectively ruling out the possibility of breaking the $\Omega(\frac{N}{\sqrt{\OUT}})$ barrier using algebraic techniques.

        Following Proposition~\ref{prop:reduction}, Theorem~\ref{thm:count-matrix-lb} automatically implies a lower bound on uniform sampling over matrix query (for any algorithms), such that no index can be preprocessed in $O\left(\frac{N}{\OUT^{\frac{1}{2}+\gamma}}\right)$ time, while generating each (independent) uniform sample in $O\left(\frac{N}{\OUT^{\frac{1}{2}+\gamma}}\right)$ time, for any constant $\gamma >0$. But this is not sufficient to show the optimality of our sampling algorithm, which uses $O(N)$ preprocessing time and generates samples in $O(\frac{N}{\sqrt{\OUT}})$ time. Below, we give a stronger lower bound for combinatorial algorithms following \cite{deng2023join}:
         
    \begin{theorem}
    \label{thm:sample-matrix-combinatorial-lb}
        For $\matrix$, assuming the combinatorial hardness of Boolean matrix multiplication\footnote{Given two Boolean matrices of sizes $n \times n$ and $n \times n$, any combinatorial algorithm for computing their multiplication requires $\Omega(n^3)$ time.}, given any nonnegative integers $N, \OUT$, there is no combinatorial algorithm that preprocesses an instance $\R$ of input size $N$ and output size $\OUT$ in $O(N \cdot \OUT^{\frac{1}{2} -\gamma})$ time and generates a uniform sample from $\matrix(\R)$ in $O\left(\frac{N}{\OUT^{\frac{1}{2} + \gamma}}\right)$ expected time, for any constant $\gamma > 0$.
    \end{theorem}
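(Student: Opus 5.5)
Suppose, for contradiction, that some combinatorial algorithm preprocesses any instance in $O(N\cdot \OUT^{\frac12-\gamma})$ time and then draws independent uniform samples from $\matrix(\R)$ in $O(N/\OUT^{\frac12+\gamma})$ expected time each. We turn it into a combinatorial evaluation algorithm for $\matrix$ that beats the $\Omega(N\sqrt{\OUT})$ bound of \cite{amossen2009faster}, which itself comes from the combinatorial hardness of Boolean matrix multiplication. This is the query-evaluation-to-sampling reduction of \cite{deng2023join}, which turns uniform sampling into enumeration.

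\emph{Step 1 (the reduction).} Assume for now that $\OUT$ is known. Build the index once, then draw $T=c\,\OUT\ln \OUT$ independent samples, store them in a hash set, and output the distinct elements. This is the coupon collector argument. A fixed result $(a,c)$ is missed with probability $(1-1/\OUT)^T\le \OUT^{-c}$, so a union bound gives that all of $\matrix(\R)$ is recovered with probability $1-\OUT^{1-c}$. The expected cost is
\[
O\!\left(N\,\OUT^{\frac12-\gamma}\right)+O\!\left(\OUT\log\OUT\cdot \frac{N}{\OUT^{\frac12+\gamma}}\right)=\O\!\left(N\,\OUT^{\frac12-\gamma}\right).
\]
Markov's inequality converts this into a worst-case time bound with constant success probability. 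Every step is combinatorial: we only use the sampler, hashing and random numbers.

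\emph{Step 2 (unknown $\OUT$).} The evaluation lower bound does not hand us $\OUT$, so the algorithm must find it. Keep a running set $D$ of distinct samples, and stop at the first point where the last $\Theta(|D|\log N)$ draws produced no new element. When $|D|<\OUT$, each draw is new with probability at least $1/\OUT$. For $|D|\le \OUT/2$ this probability is at least $1/2$, so stopping early is very unlikely. Near $|D|\approx\OUT$ the coupon-collector tail bound controls the error. The total number of draws stays $O(\OUT\log N)$, so the time bound of Step 1 is preserved up to logarithmic factors.

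A simpler alternative is to use only instances where $\OUT$ is known a priori. This works because the hard instances behind the $\Omega(N\sqrt{\OUT})$ bound have $\OUT$ of the order of $n^2$ in the Boolean matrix multiplication setting: with $N=\Theta(n^2)$, $\OUT=\Theta(n^2)$ and $N\sqrt{\OUT}=n^3$, and one can assume the output is dense by padding.

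\emph{Step 3 (contradiction).} For any constant $\gamma>0$, we would evaluate $\matrix$ combinatorially in $\O(N\cdot\OUT^{\frac12-\gamma})=o(N\sqrt{\OUT})$ time. On the Boolean matrix multiplication instances ($n\times n$ matrices, $N=\Theta(n^2)$, $\OUT=\Theta(n^2)$) this is $\O(n^{3-2\gamma})$, contradicting the assumed $\Omega(n^3)$ hardness.

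The main obstacle is Step 2. We must make sure that the unknown $\OUT$, and the possibly large variance of the per-sample running time, add only polylogarithmic overhead. We also need the instance family to hit the required $(N,\OUT)$ regime, so that the statement holds for the stated parameters rather than only for $N=\OUT=n^2$. For that we will use the padding constructions from the amossen-style bound.
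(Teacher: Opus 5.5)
Your proposal is correct and follows the paper's route. Both arguments use the \cite{deng2023join} reduction from enumeration to uniform sampling: coupon collection, plus an adaptive stopping rule for the unknown $\OUT$, gives a combinatorial evaluation algorithm running in $\widetilde{O}(N\cdot\OUT^{\frac12-\gamma})$ time, which contradicts the $\Omega(N\sqrt{\OUT})$ combinatorial bound.

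The differences are minor. You use a single-phase stopping rule: stop once $\Theta(\max\{1,|D|\}\log N)$ consecutive draws produce nothing new. The $\max\{1,\cdot\}$ is needed so you do not stop immediately at $D=\emptyset$. The paper instead uses two phases: wait for $\Theta(\log N)$ consecutive repeats, set $\Lambda=2k\in[\OUT,2\OUT]$, then make $\Theta(\Lambda\log N)$ further draws. You also instantiate the contradiction explicitly on square Boolean matrix multiplication instances, padded with a disjoint complete block contributing $n^2$ results, rather than citing the general $\Omega(N\sqrt{\OUT})$ bound.
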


       \begin{proof}[Proof of Theorem~\ref{thm:sample-matrix-combinatorial-lb}]
        We prove this by reduction from the problem of listing all query results. It is established that any combinatorial algorithm for evaluating the $\matrix$ requires $\Omega(N\sqrt{\OUT})$ time~\cite{amossen2009faster}. Suppose, for the sake of contradiction, that there exists a sampling algorithm $\mathcal{A}$ with preprocessing time $O(N \cdot \OUT^{\frac{1}{2} -\gamma})$ and expected sampling time $T = O\left(\frac{N}{\OUT^{1/2+\gamma}}\right)$ for some constant $\gamma > 0$. We construct an algorithm to compute the full join result $\matrix(\R)$ without prior knowledge of $\OUT$, using the estimation technique from~\cite{deng2023join}.
        
        We repeatedly invoke $\mathcal{A}$ to generate samples and store them in a dictionary (e.g., a hash table) to track distinct tuples. We maintain a counter $c$ for the number of \emph{consecutive} samples that are already in the dictionary. We stop this process as soon as $c$ reaches a threshold $\Delta = \Theta(\log N)$. Let $k$ be the number of distinct tuples collected when we stop. We set an estimate $\Lambda = 2k$. As shown in~\cite{deng2023join}, this stopping condition implies that with high probability, we have collected at least half of the total results, i.e., $k \ge \OUT/2$. Consequently, $\Lambda \in [\OUT, 2\OUT]$ with high probability. The number of samples drawn in this step is $\O(\OUT)$.
    
        Using the estimate $\Lambda$, we continue invoking $\mathcal{A}$ for a total of $\Theta(\Lambda \log N)$ trials. Since $\Lambda \ge \OUT$, this satisfies the requirement for the Coupon Collector's Problem ($\Theta(\OUT \log \OUT)=\Theta(\OUT \log N)$ samples) to ensure that all distinct tuples in $\matrix(\R)$ are collected with high probability.
    
        The total expected number of samples drawn across both steps is $O(\OUT \log N)$. The total running time to enumerate the full query result is:
        \begin{align*}
            O(N \cdot \OUT^{\frac{1}{2} -\gamma}) + O\left(\OUT \log N \cdot \frac{N}{\OUT^{1/2+\gamma}}\right) = O\left(N \cdot \OUT^{1/2-\gamma} \log N\right).
        \end{align*}
        For any $\OUT \ge \log N$, this runtime is strictly faster than $O(N\sqrt{\OUT})$, contradicting the combinatorial hardness of matrix multiplication. 
    \end{proof}

     \section{Star Query}
    \label{appendix:star}
    
    In this section, we generalize our results from the matrix query to the star query. A star query with $k$ relations is defined as $\star = (\V, \E, \y)$, where $\V = \{A_1, \dots, A_k, B\}$, $\E = \{e_1, \dots, e_k\}$ with $e_i = \{A_i, B\}$, and the output attributes are $\y = \{A_1, \dots, A_k\}$. The query can be written as:
    \[
        \star = \pi_{A_1, \dots, A_k} R_1(A_1, B) \bowtie R_2(A_2, B) \bowtie \dots \bowtie R_k(A_k, B).
    \]
    Note that the matrix query $\matrix$ corresponds to the case where $k=2$ (with $A_1=A, A_2=C$).
    
    \subsection{Uniform Sampling over Star Query}
    
    The sampling framework follows the same structure as Algorithm~\ref{alg:sample-matrix}: (Step 1) Sample a full join result $\vec{t} = (a_1, \dots, a_k, b)$ from $R_1 \bowtie \dots \bowtie R_k$ uniformly at random; (Step 2) Accept $\vec{a} = (a_1, \dots, a_k)$ with probability $\frac{1}{\deg(\vec{a})}$, where $\deg(\vec{a})$ is the number of $B$-values connecting $a_1, \dots, a_k$.
    
    \paragraph{Step 1: sample full join result}
    We preprocess the instance to compute the weight of each $b \in \adom(B)$ as $W_b = \prod_{i=1}^k |R_i \ltimes b|$. Let $W = \sum_{b \in \adom(B)} W_b$.
    To sample a full join tuple:
    \begin{itemize}[leftmargin=*]
        \item Sample $b \in \adom(B)$ with probability $W_b/W$.
        \item For each $i \in [k]$, sample $a_i$ uniformly from $\pi_{A_i}(R_i \ltimes b)$.
    \end{itemize}
    The tuple $(a_1, \dots, a_k, b)$ is returned with probability at least $1/W = 1/\OUT_\Join$.
    
    \paragraph{Step 2: acceptance check}
    For a candidate tuple $\vec{a} = (a_1, \dots, a_k)$, the degree is defined as:
    \[ \deg(\vec{a}) = \left| \bigcap_{i=1}^k \pi_B(R_i \ltimes a_i) \right|. \]
    To accept with probability $\frac{1}{\deg(\vec{a})}$ without fully computing the intersection, we generalize Algorithm~\ref{alg:accept}. We identify the relation with the smallest degree for the given tuple. Let $j = \arg\min_{i \in [k]} |R_i \ltimes a_i|$. We set $S = \pi_B(R_j \ltimes a_j)$.
    We then sample $b'$ from $S$ (excluding the witness $b$) and check if $b'$ connects all other $a_i$ (i.e., $\forall i \neq j, (a_i, b') \in R_i$). The number of failures $F$ before finding a valid witness follows a negative hypergeometric distribution. We accept if $\randint(1, |S|) \le F+1$.
    
    \begin{algorithm}[t]
        \caption{\textsc{SampleStar}$(\R, k)$}
        \label{alg:sample-star}
        \KwIn{An instance $\R$ for $\star$, with precomputed weights $W_b$.}
        \KwOut{A query result of $\star(\R)$.}
            $b \gets$ sample from $\adom(B)$ with prob. $W_b/W$\;
            \lFor{$i \in [k]$}{$a_i \gets$ uniform sample from $\pi_{A_i}(R_i \ltimes b)$}
            $j \gets \arg\min_{i \in [k]} |R_i \ltimes a_i|$\; 
            $S \gets \pi_B(R_j \ltimes a_j)$\;
            $F \gets 0$\;
            \While{at least one element in $S$ is not visited}{
                $b' \gets$ random sample from non-visited in $S$\;
                \lIf{$b'=b$}{\Continue}
                \lIf{$\exists i \neq j, (a_i, b') \notin R_i$}{$F \gets F+1$}
                \lElse{\Break}
            }
            \lIf{$\randint(1, |S|) \le F+1$}{\Return $(a_1, \dots, a_k)$}
        
    \end{algorithm}
    
    \paragraph{Analysis}
    The correctness follows the same logic as Section~\ref{sec:sampling-matrix-analysis}. For the time complexity, the expected cost per sample is proportional to:
    \[
        1 + \frac{1}{\OUT_\Join} \sum_{\vec{a} \in \star(\R)} \min_{i \in [k]} |R_i \ltimes a_i|.
    \]
    Using the generalized AGM bound logic for star queries, it is known that $\sum_{\vec{a}} \min_i |R_i \ltimes a_i| \le N \cdot \OUT^{1 - 1/k}$.
    Thus, the expected runtime per successful sample is:
    \[
        O\left( \frac{\OUT_\Join}{\OUT} \cdot \left( 1 + \frac{N \cdot \OUT^{1 - 1/k}}{\OUT_\Join} \right) \right) = O\left( \frac{\OUT_\Join}{\OUT} + \frac{N}{\OUT^{1/k}} \right).
    \]
    Since $\OUT_\Join \le N \cdot \OUT^{1-1/k}$ for star queries, the term simplifies to $O(N / \OUT^{1/k})$.
    
    \begin{theorem}\label{thm:star-sample-ub}
        For the star query $\star$, there is an algorithm that preprocesses an instance $\R$ of size $N$ in $O(N)$ time and generates a uniform sample from $\star(\R)$ in $O\left(\frac{N}{\OUT^{1/k}}\right)$ expected time. 
    \end{theorem}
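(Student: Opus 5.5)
We follow the two-step template of Theorem~\ref{thm:sample-matrix-ub}, with Algorithm~\ref{alg:sample-star} as the sampler.

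\textbf{Preprocessing.} In $O(N)$ time we compute, for every $b\in\adom(B)$, the product $W_b=\prod_{i}|R_i\ltimes b|$, together with $W=\sum_b W_b=\OUT_\Join$. We then build an alias table, so that $b$ can be drawn with probability $W_b/W$ in $O(1)$ time. With this, Step~1 returns each full join tuple $(a_1,\dots,a_k,b)$ with probability exactly $\frac{W_b}{W}\prod_i\frac{1}{|R_i\ltimes b|}=\frac{1}{\OUT_\Join}$. Consequently, a fixed answer $\vec a$ is proposed with probability $\deg(\vec a)/\OUT_\Join$.

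\textbf{Correctness of Step 2.} Let $S=\pi_B(R_j\ltimes a_j)$. The scan samples without replacement from $S-\{b\}$. This is a population of $|S|-1$ elements, of which $\deg(\vec a)-1$ are valid witnesses. Hence $F\sim\texttt{NHG}(1,|S|-1,\deg(\vec a)-1)$ and $\expt[F+1]=|S|/\deg(\vec a)$. Since $F+1\in[|S|]$, Proposition~\ref{prop:simulation} gives acceptance probability exactly $1/\deg(\vec a)$. Each $\vec a\in\star(\R)$ is therefore output in one round with probability $1/\OUT_\Join$. A round succeeds with probability $\OUT/\OUT_\Join$, and conditioned on success the output is uniform.

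\textbf{Time per round.} One round costs $O(1)$ for Step~1 plus $O(k(F+1))=O(F+1)$ for Step~2, since $k$ is constant. Taking expectation over the proposed $\vec a$,
\[
\expt[\text{round}]=O\Big(1+\sum_{\vec a\in\star(\R)}\frac{\deg(\vec a)}{\OUT_\Join}\cdot\frac{\min_i|R_i\ltimes a_i|}{\deg(\vec a)}\Big).
\]
Multiplying by the expected $\OUT_\Join/\OUT$ rounds gives
\[
O\Big(\frac{\OUT_\Join}{\OUT}+\frac{1}{\OUT}\sum_{\vec a\in\star(\R)}\min_i|R_i\ltimes a_i|\Big).
\]

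\textbf{The main obstacle: two combinatorial inequalities.} The bound $O(N/\OUT^{1/k})$ follows once we show
\[
\Sigma:=\sum_{\vec a\in\star(\R)}\min_i d_i(a_i)\le N\,\OUT^{1-1/k}
\qquad\text{and}\qquad
\OUT_\Join\le N\,\OUT^{1-1/k},
\]
where $d_i(a)=|R_i\ltimes a|$.

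\emph{Bound on $\Sigma$.} First use $\min_i x_i\le\prod_i x_i^{1/k}$. Then apply H\"older's inequality with $k$ exponents each equal to $k$:
\[
\Sigma\le\prod_{i}\Big(\sum_{\vec a}d_i(a_i)\Big)^{1/k}.
\]
Naively, $\sum_{\vec a}d_i(a_i)\le|R_i|\cdot\prod_{j\ne i}|\adom(A_j)|$, which is not of the required form, so a cruder argument is needed. Instead, fix $i$. Each $a_i$ appears in at most $\OUT$ answers, but we need a sharper count. We would try a heavy/light argument. Call $\vec a$ \emph{type $i$} if $i$ attains the minimum. For type-$i$ answers, charge $d_i(a_i)$ to the $d_i(a_i)$ tuples of $R_i$ incident to $a_i$. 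Then
\[
\Sigma\le\sum_i\sum_{(a,b)\in R_i}\#\{\vec a:\ a_i=a,\ \text{type } i\}.
\]
Bound the count in two ways: it is at most $\OUT$, and it is at most $\prod_{j\neq i}\big(\text{number of } a_j \text{ with } d_j(a_j)\ge d_i(a)\big)\le (N/d_i(a))^{k-1}$. Taking the minimum, $\min\{\OUT,(N/d)^{k-1}\}$, and summing over the $N$ tuples, the total is maximized at $d\approx N/\OUT^{1/(k-1)}$, which yields $O(N\,\OUT^{1-1/k})$ after a balancing computation. This balancing is where we expect the real difficulty. If the charging is too lossy, the fallback is the paper's cited generalized AGM bound.

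\emph{Bound on $\OUT_\Join$.} The same charging argument works, since every full join tuple maps to some answer and $\deg(\vec a)\le\min_i d_i(a_i)$. Hence $\OUT_\Join=\sum_{\vec a}\deg(\vec a)\le\Sigma$. Combining the two inequalities gives $O(N/\OUT^{1/k})$ expected time per sample.
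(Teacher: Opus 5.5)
Your sampler is the paper's. Its correctness is right: each full join tuple is proposed with probability exactly $1/\OUT_\Join$, $F\sim\texttt{NHG}(1,|S|-1,\deg(\vec a)-1)$ with $F+1\in[|S|]$, and the expected time is $O(1+\Sigma/\OUT_\Join)$ per round over $\OUT_\Join/\OUT$ rounds. Your observation $\OUT_\Join=\sum_{\vec a}\deg(\vec a)\le\Sigma$ correctly reduces the second inequality to the first; the paper simply asserts both.

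The gap is your proof of $\Sigma\le N\,\OUT^{1-1/k}$, the only step where the exponent $1/k$ enters. The per-tuple cap $\min\{\OUT,(N/d)^{k-1}\}$ is non-increasing in $d$ and equals $\OUT$ for every $d\le N/\OUT^{1/(k-1)}$. So there is no balancing point. Summed over the $N$ tuples it gives only $\Sigma\le N\cdot\OUT$, which yields only the trivial $O(N)$ time per sample.

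What you discarded is the global budget. Let $c_a$ be the number of type-$i$ answers with $a_i=a$. All $d_i(a)$ tuples of $a$ carry the same $c_a$, and $\sum_a c_a\le\OUT$. Split at $d^*=N/\OUT^{1/k}$:
\begin{itemize}
\item Values with $d_i(a)\le d^*$ contribute at most $d^*\sum_a c_a\le N\,\OUT^{1-1/k}$.
\item Values with $d_i(a)>d^*$ have $c_a\le (N/d_i(a))^{k-1}<\OUT^{1-1/k}$, so they contribute at most $\OUT^{1-1/k}\sum_a d_i(a)\le N\,\OUT^{1-1/k}$.
\end{itemize}
Summing over $i$, using that the types partition the answers and $\sum_i|R_i|=N$, gives $\Sigma\le 2N\,\OUT^{1-1/k}$.

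Your first H\"older attempt also works if you pair the factors differently. Do not split $\prod_i d_i(a_i)^{1/k}$ across the $k$ coordinates. Instead pair the constant $1$ (exponent $k/(k-1)$) with $g(\vec a)=\big(\prod_i d_i(a_i)\big)^{1/k}$ (exponent $k$):
\begin{align*}
\Sigma\le\sum_{\vec a\in\star(\R)}g(\vec a)&\le\OUT^{1-1/k}\Big(\sum_{\vec a\in\star(\R)}\prod_{i}d_i(a_i)\Big)^{1/k}\\
&\le\OUT^{1-1/k}\Big(\prod_{i}\sum_{a\in\adom(A_i)}d_i(a)\Big)^{1/k}=\OUT^{1-1/k}\Big(\prod_{i}|R_i|\Big)^{1/k}\le N\,\OUT^{1-1/k}.
\end{align*}
The third step extends the sum from $\star(\R)$ to all of $\adom(A_1)\times\cdots\times\adom(A_k)$, where it factorizes. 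This is the $k$-ary version of the paper's matrix footnote ($\min\{x,y\}\le\sqrt{xy}$ plus Cauchy--Schwarz).

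Falling back on ``the paper's cited generalized AGM bound'' does not close the gap by itself. The paper states this inequality without proof or a specific reference, so one of the two arguments above must be supplied to get $O(N/\OUT^{1/k})$.
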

    
    \subsection{Approximate Counting Star Query}
    
    The approximate counting algorithm for $\star$ follows the hybrid strategy in Section~\ref{sec:count-matrix}. We partition the domain of $B$ into heavy and light values based on a threshold $\tau$.
    For the star query, the optimal threshold logic shifts from $\sqrt{\OUT}$ to $\OUT^{1/k}$.
    
    \begin{itemize}[leftmargin=*]
        \item \textbf{Heavy Detect:} We identify $B^\heavy = \{b \in \adom(B) \mid \prod_{i=1}^k |R_i \ltimes b| \ge \tau \}$. This can be done by sampling or by iterating if we define heaviness based on the max degree in any single relation.
        \item \textbf{Hybrid Estimation:}
        \begin{itemize}[leftmargin=*]
            \item For the heavy part, we use the weighted sampling (Algorithm~\ref{alg:sample-star} adapted) which is efficient because the number of heavy $B$ values is small.
            \item For the light part, we use a rejection-based sampler (analogous to \textsc{SampleMatrix-L}) where we sample $a_1 \in R_1$, then $b \in R_1 \ltimes a_1$, then $a_2 \dots a_k$. We accept based on the ratio of the tuple weight to the maximum possible light weight.
        \end{itemize}
    \end{itemize}
    
    By setting the threshold parameters appropriately and applying the reduction from counting to sampling, we achieve the following bound:
    
    \begin{theorem}\label{thm:star-count-ub}
        For the star query $\star$ and a constant $0 < \epsilon < 1$, there exists an algorithm that returns an $\epsilon$-approximation of $\OUT$ with constant probability in $O\left(\frac{N}{\OUT^{1/k}}\right)$ time.
    \end{theorem}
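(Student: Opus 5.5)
We would prove Theorem~\ref{thm:star-count-ub} by transplanting the three-layer structure of Section~\ref{sec:count-matrix} (Algorithms~\ref{alg:approx-counting-matrix}--\ref{alg:intersect-estimate}) to $\star$. We replace $\sqrt{\Lambda}$ by $\Lambda^{1/k}$ everywhere and reuse the correctness lemmas (Lemmas~\ref{lem:approximate-count-with-threshold}, \ref{lem:intersect-estimate}, \ref{lem:approx-counting-matrix-with-guess}, \ref{lem:approx-counting-matrix}) essentially verbatim. Those lemmas only use three facts: the sub-samplers return a uniform query result with probability $\OUT^{?}/\widehat{\OUT}^{?}_\Join$, the heavy and light outputs cover $\OUT$, and the intersection is a subset of the heavy output. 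The new work is in the heavy/light definition and in the running-time bound.

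\textbf{Heavy/light split.} For a guess $\Lambda$ we call $b$ heavy if $\max_i |R_i\ltimes b| > \Lambda^{1/k}$. \textsc{DetectHeavy} is run on each $R_i$ with $\O(N/\Lambda^{1/k})$ samples; the union-bound argument of Lemma~\ref{lem:heavy-detect} carries over, so $|B^\heavy| = O(kN/\Lambda^{1/k})$. For the heavy part we compute $W_b=\prod_i|R_i\ltimes b|$ only for $b\in B^\heavy$ and use the weighted sampler of Algorithm~\ref{alg:sample-star}. The cost of this is $O(kN/\Lambda^{1/k})$ degree queries, which fits the budget.

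For the light part we use the $k$-ary analogue of Algorithm~\ref{alg:light-sample}. We sample $(a_1,b)$ uniformly from $R_1$ (rejecting if $b\in B^\heavy$), sample each $a_i$ uniformly from $R_i\ltimes b$ for $i\ge 2$, and accept with probability $\prod_{i\ge2}|R_i\ltimes b|/\Delta^{k-1}$, where $\Delta=\Lambda^{1/k}$. Every light full-join tuple is then returned with probability exactly $1/(|R_1|\Delta^{k-1})$. This gives $\widehat{\OUT}^\light_\Join = N\Lambda^{(k-1)/k}$.

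\textbf{Acceptance step and key inequality.} On either part, Step 2 is the generalized negative-hypergeometric acceptance of Algorithm~\ref{alg:sample-star}. Within a sub-instance, non-member $b'$ are skipped exactly as in the matrix case. The running-time analysis mirrors Lemma~\ref{lem:count-matrix-up}. A single guess $\lambda$ inside \textsc{ApproxCountWithThreshold} costs
\[ \frac{\widehat{\OUT}_\Join}{\epsilon^2\lambda}\Big(1+\frac{N\,\OUT^{1-1/k}}{\widehat{\OUT}_\Join}\Big). \]
This uses the inequality $\sum_{\vec a}\min_i|R_i\ltimes a_i|\le N\,\OUT^{1-1/k}$, which the paper invokes in the star sampling analysis. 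We would justify it by $\min_i x_i\le(\prod_i x_i)^{1/k}$ together with H\"older's inequality over the product domain. In the relevant range $\OUT\le\Lambda$, the heavy part also satisfies $\widehat{\OUT}^\heavy_\Join\le N\,\OUT^{1-1/k}\le N\Lambda^{1-1/k}$, so every guess costs $O(N\Lambda^{1-1/k}/(\epsilon^2\lambda))$.

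\textbf{Summing over guesses.} The inner loop stops at $\lambda_{\rm stop}=\max(\Theta(\OUT),\epsilon\Lambda)$. Range 1 is $\OUT\le\Lambda\le\OUT/\epsilon$, where the summed cost is dominated by its top term, $N(\OUT/\epsilon)^{1-1/k}/(\epsilon^2\OUT)$. Range 2 is the geometric tail $N\Lambda^{-1/k}/\epsilon^3$, dominated at $\Lambda\approx\OUT/\epsilon$. Both give $O(N/(\epsilon^{3-1/k}\OUT^{1/k}))$, which is $O(N/\OUT^{1/k})$ for constant $\epsilon$ and constant success probability.

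\textbf{Main obstacle.} We expect the main obstacle to be the light-part accounting. The matrix bound $N\sqrt{\Lambda}$ used that only one relation carries the degree cap. For the star query, the bound $N\Lambda^{(k-1)/k}$ is needed, and it must also dominate the acceptance cost. We would first try to verify that capping all $k$ degrees, rather than only $R_2,\dots,R_k$, keeps both the rejection rate and the intersection-estimation cost (now a scan over $O(N/\Lambda^{1/k})$ heavy values per sample) within $O(N/\Lambda^{1/k})$ per guess.
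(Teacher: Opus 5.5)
Your proposal is correct and follows the paper's own argument, which only sketches this same transplant of the matrix heavy/light hybrid with the threshold moved to $\Lambda^{1/k}$. Of the two heaviness rules the paper mentions, your max-degree rule is the one under which the $R_1$-rooted light sampler has a valid cap $\Delta^{k-1}=\Lambda^{1-1/k}$ while still keeping $|B^\heavy|\le N/\Lambda^{1/k}$. The concern in your last paragraph also resolves as you hope: for each guess $\Lambda$ it suffices to use $\lambda\ge\tau=\epsilon\Lambda/80$ (the inner loops really stop near $\OUT^\heavy$ or $\OUT^\light$ rather than $\OUT$, but this lower bound is enough), and then every part costs $O(N/(\epsilon^{3}\Lambda^{1/k}))$, including light-sample generation in \textsc{IntersectEstimate}, because $s^\light\neq\perp$ certifies $\OUT^\light=\Omega(\epsilon\Lambda)$ except with small probability, with the caveat that heavy detection adds a $\log N$ factor, so the bound is $\widetilde{O}(N/\OUT^{1/k})$, just as in Theorem~\ref{thm:count-matrix-up}.
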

    
    \subsection{Lower Bounds for Approximate Counting Star Query}
    
    We establish the lower bound for the star query $\star$ by a reduction from the $k$-party set disjointness problem in the communication complexity model. In this problem, there are $k$ players, and each player $i$ holds a private set $S_i \subseteq [m]$ of size $\frac{m}{2k}$. The goal is to determine if $\bigcap_{i=1}^k S_i = \emptyset$. It is known that any randomized communication protocol that solves this problem with constant success probability requires $\Omega(m)$ bits of communication.
    
    \begin{lemma}\label{lmm:count-star-lb}
        For $\star$, given any $N, \OUT, m \in \mathbb{Z}^+$, any randomized algorithm that takes as input an arbitrary instance $\R$ of input size $N$, output size $\OUT$, and active domain size $m$ of attribute $B$, and outputs any multiplicative-factor approximation of $\OUT$ with high probability requires at least $\Omega(\min\{m, \frac{N}{\OUT^{1/k}}\})$ time.
    \end{lemma}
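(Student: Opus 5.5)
The plan is to mirror the proof of Lemma~\ref{lmm:count-matrix-lb}, replacing two-party set disjointness by its $k$-party version. Given sets $S_1,\dots,S_k \subseteq [m]$ held by the $k$ players, we build an instance $\R$ of $\star$ as follows. The domain of $B$ is $[m]$, so the active domain of $B$ is $\bigcup_i S_i$, of size $\Theta(m)$. We fix pairwise disjoint active domains $\adom(A_i)$, each of size $K^{1/k}$ for a parameter $K \ge 1$. Player $i$ defines $R_i(A_i,B) = \adom(A_i) \times S_i$.

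Then $N = \sum_i |R_i| = K^{1/k} \cdot k \cdot \frac{m}{2k} = \Theta(K^{1/k} m)$. The output size is dichotomous:
\begin{itemize}
\item if $\bigcap_i S_i = \emptyset$, no $B$-value connects all $k$ relations, so $\OUT = 0$;
\item if some $b \in \bigcap_i S_i$ exists, every tuple of $\prod_i \adom(A_i)$ is witnessed by $b$, so $\OUT = K$.
\end{itemize}
Any multiplicative-factor approximation of $\OUT$ therefore decides disjointness.

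Next we simulate each primitive of the property testing model with $O(\log N)$ bits of communication. The key observation is that every primitive on $R_i$ depends only on $S_i$ and on the public sets $\adom(A_j)$, so player $i$ can answer it locally and broadcast the result.
\begin{itemize}
\item \textbf{Test tuple.} $(a,b) \in R_i$ iff $a \in \adom(A_i)$ and $b \in S_i$.
\item \textbf{Compute degree.} The degree of $b$ is $K^{1/k}$ if $b \in S_i$ and $0$ otherwise; the degree of $a \in \adom(A_i)$ is $|S_i|$.
\item \textbf{Access neighbor.} The $j$-th neighbor of $b$ is the $j$-th element of the public $\adom(A_i)$ whenever $b \in S_i$; the $j$-th neighbor of $a$ is the $j$-th element of $S_i$ in a fixed order.
\item \textbf{Sample tuple.} Player $i$ draws uniformly from $\adom(A_i) \times S_i$ using private randomness, which suffices for a randomized protocol.
\end{itemize}
Hence an algorithm making $T$ queries yields a protocol with $O(T \log N)$ communication. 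The $k$-party lower bound of $\Omega(m)$ then forces $T = \Omega(m / \log N)$. To match the stated bound without the log factor, I would charge one word per query, as the paper does implicitly in Lemma~\ref{lmm:count-matrix-lb}, or observe that membership answers are single bits.

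Finally we relate $m$ to the target parameters. Substituting $K = \OUT$ gives $m = \Theta(N/\OUT^{1/k})$, so the bound $\Omega(m)$ equals $\Omega(\min\{m, N/\OUT^{1/k}\})$. For a general $m$ in the statement, we argue as in the matrix case. If $m \le N/\OUT^{1/k}$, we pad with dummy tuples that use fresh $A_i$-values and a fresh $B$-value present in only one relation; this adjusts $N$ without changing $\OUT$ or the hardness. Otherwise we embed a disjointness instance on a smaller universe of size $\Theta(N/\OUT^{1/k})$.

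The main obstacle is this bookkeeping: making the instance realize an exact triple $(N, \OUT, m)$ while the YES case has $\OUT = K$ and the NO case has $\OUT = 0$. The same issue arises in the matrix lemma. I would handle it by letting the hard distribution be over instances with $\OUT \in \{0, K\}$, and noting that the $\OUT = 0$ case is covered by the convention $\max\{1,\OUT\}$. Integrality of $K^{1/k}$ is handled by rounding, which affects only constants.
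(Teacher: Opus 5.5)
Your proposal follows the paper's proof essentially step for step. It uses the same $k$-party disjointness reduction with $R_i=\adom(A_i)\times S_i$ and $|\adom(A_i)|=K^{1/k}$, the same $0$-versus-$K$ output dichotomy, local answering of each primitive by player $i$, and the identification $m=2N/\OUT^{1/k}$ once $K=\OUT$. Your padding for general $(N,\OUT,m)$ is extra bookkeeping the paper skips.

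One caveat, which the paper's ``small number of bits'' shares: charging one word per query does not remove the $\log N$ loss, because disjointness costs $\Omega(m)$ bits, not words. To get the stated bound you need $O(1)$ expected bits per query. You can get this by simulating sample-tuple and neighbor access from $a$ (with a random neighbor ordering) through public-coin rejection sampling over $[m]$ with one-bit membership replies.
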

    
    \begin{proof}
        Given an input instance $(S_1, \dots, S_k)$ of the $k$-party set disjointness problem where each $|S_i| = \frac{m}{2k}$, the $k$ players construct an instance $\R$ of $\star$ based on their private sets as follows.
        
        The domain of attribute $B$ is the universe $[m]$. The active domain of $B$ is $\bigcup_{i=1}^k S_i$. For each $i \in [k]$, the active domain of attribute $A_i$, denoted $\adom(A_i)$, is a set of size $L = K^{1/k}$ for some parameter $K \ge 1$, and all $\adom(A_i)$ sets are mutually disjoint.
        Player $i$ defines the relation $R_i(A_i, B)$ as the Cartesian product of $\adom(A_i)$ and their private set $S_i$:
        $R_i = \adom(A_i) \times S_i$.
        The input size of $\R$ is:
        \[ N = \sum_{i=1}^k |R_i| = \sum_{i=1}^k |\adom(A_i)| \cdot |S_i| = \sum_{i=1}^k K^{1/k} \cdot \frac{m}{2k} = K^{1/k} \cdot \frac{m}{2}. \]
        
        The output size $\OUT = |\star(\R)|$ depends on the intersection of the sets $S_i$:
        \begin{itemize}[leftmargin=*]
            \item If $\bigcap_{i=1}^k S_i = \emptyset$, then the full join is empty, so $\OUT = 0$.
            \item If $\bigcap_{i=1}^k S_i \neq \emptyset$, then for every $b \in \bigcap_{i=1}^k S_i$, any combination of $(a_1, \dots, a_k) \in \adom(A_1) \times \dots \times \adom(A_k)$ is a valid result. Thus, $\OUT = \prod_{i=1}^k |\adom(A_i)| = (K^{1/k})^k = K$.
        \end{itemize}
        
        Any algorithm $\mathcal{A}$ for approximate counting $\R$ can be used to solve the set disjointness instance $(S_1, \dots, S_k)$ since it distinguishes between the $\OUT=0$ case and the $\OUT=K$ case. Each query made by $\mathcal{A}$ will be answered by the players communicating a small number of bits. For example, for a tuple check $(a, b) \in R_i$, player $i$ checks if $b \in S_i$ locally (since $a \in \adom(A_i)$ is fixed and known).
        
        If $\mathcal{A}$ provides a multiplicative-factor approximation of $\OUT$ within $o(\min\{m, \frac{N}{\OUT^{1/k}}\})$ queries, the players could solve the set disjointness problem with $o(m)$ communication, which contradicts the lower bound. Note that in the case $\OUT=K$, we have $m = \frac{2N}{K^{1/k}} = \frac{2N}{\OUT^{1/k}}$.
    \end{proof}
    
    With this lemma, we can derive the following lower bound that is only parameterized by $N$ and $\OUT$, as it is always feasible to construct an instance with $m = \Theta(\frac{N}{\OUT^{1/k}})$.
    
    \begin{theorem}\label{thm:star-count-lb}
        For $\star$, given any $N, \OUT \in \mathbb{Z}^+$, any randomized algorithm that takes as input an arbitrary instance $\R$ of input size $N$ and output size $\OUT$, and outputs any multiplicative-factor approximation of $\OUT$ with high probability requires at least $\Omega\left(\frac{N}{\OUT^{1/k}}\right)$ time.
    \end{theorem}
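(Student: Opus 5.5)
The plan is to derive Theorem~\ref{thm:star-count-lb} from Lemma~\ref{lmm:count-star-lb}, in the same way Theorem~\ref{thm:count-matrix-lb} followed from Lemma~\ref{lmm:count-matrix-lb}. For every admissible pair $(N,\OUT)$ I will exhibit a family of hard instances whose $B$-active domain has size $m=\Theta(\frac{N}{\OUT^{1/k}})$. Lemma~\ref{lmm:count-star-lb} then gives a lower bound of $\Omega(\min\{m,\frac{N}{\OUT^{1/k}}\})=\Omega(\frac{N}{\OUT^{1/k}})$. I will split into the two cases $\OUT\ge 1$ and $\OUT=0$. In the second case the convention $\OUT^{1/k}\to\max\{1,\OUT\}^{1/k}$ means the target bound is $\Omega(N)$.

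\textbf{Case $\OUT\ge 1$.} I take $L=\OUT^{1/k}$, rounded to an integer, with constant-factor slack absorbed into $\Theta$. Each $\adom(A_i)$ gets size $L$, and these domains are pairwise disjoint. The $B$-domain is $[m]$ with $m=\Theta(\frac{N}{L})$, chosen so that each player's set $S_i$ has size $\frac{m}{2k}$. The total input is then $\sum_i L\cdot\frac{m}{2k}=\frac{Lm}{2}=\Theta(N)$. If the sets intersect, every $(a_1,\dots,a_k)$ is an answer and $\OUT=L^k$. If they are disjoint, $\OUT=0$. The simulation argument in the proof of Lemma~\ref{lmm:count-star-lb} applies unchanged: each oracle primitive (sample tuple, degree, neighbor access, membership) on $R_i=\adom(A_i)\times S_i$ is answered by player $i$ with $O(\log m)$ bits. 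Degree queries on $a\in\adom(A_i)$ return $|S_i|$, which is public, and neighbor access returns the $j$-th element of $S_i$. An algorithm using $o(\frac{N}{\OUT^{1/k}})=o(m)$ primitive calls would therefore yield a protocol contradicting the $\Omega(m)$ bound for $k$-party disjointness. Any multiplicative approximation separates $0$ from $L^k$, so the reduction goes through.

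\textbf{Case $\OUT=0$.} I use $|\adom(A_i)|=1$ and $m=\Theta(N)$, with the $S_i$ of size $\frac{m}{2k}$. The promise instances have $\OUT\in\{0,1\}$, so the lemma gives $\Omega(N)$, matching the bound under the $\max\{1,\OUT\}$ convention.

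The main obstacle is a subtlety in the statement. A lower bound "for instances with output size $\OUT$" is certified here by a pair of instances in which one has output $\OUT$ and the other has output $0$. I will state explicitly that the hardness is witnessed by distinguishing these two. This is exactly the interpretation already used in Theorem~\ref{thm:count-matrix-lb}, so I will follow that precedent. The remaining work is the routine integrality bookkeeping: choose $L=\lfloor \OUT^{1/k}\rfloor$ and pad $B$-values or dummy tuples so that $N$ is hit exactly. The logarithmic word size means each primitive costs $O(1)$ words, which keeps the communication-to-time translation tight up to the stated $\Omega$.
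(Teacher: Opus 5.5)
Your proposal is the paper's proof: the same two cases, with $|\adom(A_i)|=\OUT^{1/k}$ and $m=\Theta(N/\OUT^{1/k})$ when $\OUT\ge 1$, and $|\adom(A_i)|=1$, $m=\Theta(N)$ when $\OUT=0$, each plugged into Lemma~\ref{lmm:count-star-lb} so that $\min\{m, N/\OUT^{1/k}\}=\Theta(N/\OUT^{1/k})$.

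One fix to your side remark on the simulation. If each call costs $O(\log m)$ bits, then $o(m)$ calls do not contradict an $\Omega(m)$-bit bound, and counting in words does not help because the disjointness bound is in bits. Instead, answer sample-tuple and neighbor-of-$a$ queries by rejection sampling over public random indices $j\in[m]$, with one membership bit per trial. This needs $O(1)$ expected trials, since $|S_i|=m/(2k)$ and only $o(m)$ elements have been revealed, so every primitive costs $O(1)$ expected bits.
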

    
    \begin{proof}
        Consider any specific parameters $N, \OUT \in \mathbb{Z}^+$. We distinguish the following two cases:
        
        \begin{itemize}[leftmargin=*]
            \item If $\OUT \ge 1$, there is a hard instance for $\star$ with $m = \Theta(\frac{N}{\OUT^{1/k}})$ as follows. The domain size of attribute $A_i$ is set to $L = \OUT^{1/k}$. The domain size of $B$ is set to $m = \frac{2N}{L} = \frac{2N}{\OUT^{1/k}}$. Each relation $R_i$ is constructed as in Lemma~\ref{lmm:count-star-lb}. Any algorithm needs $\Omega(\min\{m, \frac{N}{\OUT^{1/k}}\}) = \Omega(\frac{N}{\OUT^{1/k}})$ time.
            
            \item If $\OUT = 0$, there is a hard instance for $\star$ with $m = \Theta(N)$ as follows. The domain size of attribute $A_i$ is set to $L=1$. The domain size of $B$ is $m = N$. Each relation $R_i$ consists of the single value in $\adom(A_i)$ paired with a subset of $B$ of size roughly $N/k$. Any algorithm needs $\Omega(\min\{m, \frac{N}{1}\}) = \Omega(N)$ time.
        \end{itemize}
        
        Combining these two cases yields the desired lower bound $\Omega\left(\frac{N}{\OUT^{1/k}}\right)$.
    \end{proof}

    \subsection{Lower Bounds for Uniform Sampling Star Query}

    Similar to the matrix query, we can establish a lower bound for combinatorial algorithms by reducing from the problem of listing all query results.

    \begin{theorem}
    \label{thm:star-sample-lb}
        For the star query $\star$, no combinatorial algorithm can preprocess an instance $\R$ of input size $N$ and output size $\OUT$ in $O(N \cdot \OUT^{\frac{1}{k} -\gamma})$ time and generate a uniform sample from $\star(\R)$ in $O\left(\frac{N}{\OUT^{\frac{1}{k} + \gamma}}\right)$ expected time, for any arbitrary small constant $\gamma > 0$, assuming the combinatorial hardness of listing star join results\footnote{It is known that any combinatorial algorithm for listing all results of a star query requires $\Omega(N \cdot \OUT^{1-1/k})$ time.}, under the property testing model.
    \end{theorem}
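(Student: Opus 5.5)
We argue by contradiction, mirroring the proof of Theorem~\ref{thm:sample-matrix-combinatorial-lb}, and reduce from listing $\star(\R)$. Suppose some combinatorial algorithm $\mathcal{A}$ preprocesses $\R$ in $O(N\cdot \OUT^{\frac{1}{k}-\gamma})$ time and returns each independent uniform sample from $\star(\R)$ in $T=O\!\left(\frac{N}{\OUT^{1/k+\gamma}}\right)$ expected time. We use $\mathcal{A}$ as a black box to enumerate all of $\star(\R)$. We then show the total cost is polynomially below the assumed combinatorial listing bound $\Omega(N\cdot\OUT^{1-1/k})$.

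The enumeration has two phases, exactly as in the matrix case, following \cite{deng2023join}.
\begin{enumerate}
\item \emph{Estimate $\OUT$.} Draw samples repeatedly and store the distinct ones in a hash table. Maintain a counter of consecutive repeats and stop once it reaches $\Theta(\log N)$. If $k'$ distinct tuples have been collected, set $\Lambda=2k'$. By the argument of \cite{deng2023join}, with high probability $k'\ge \OUT/2$, so $\Lambda\in[\OUT,2\OUT]$, and this phase uses $\O(\OUT)$ samples.
\item \emph{Collect everything.} Draw $\Theta(\Lambda\log N)$ further samples. By the coupon-collector bound, every tuple of $\star(\R)$ is collected with high probability.
\end{enumerate}

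The expected total number of samples is $O(\OUT\log N)$. Each sample costs $T$ in expectation, so by linearity of expectation the sampling cost is $O(\OUT\log N\cdot N/\OUT^{1/k+\gamma})=O(N\cdot\OUT^{1-1/k-\gamma}\log N)$. Adding preprocessing gives a total of
\[
O\!\left(N\cdot \OUT^{\frac{1}{k}-\gamma}\right)+O\!\left(N\cdot\OUT^{1-\frac{1}{k}-\gamma}\log N\right).
\]
For $k\ge 2$ we have $\frac1k\le 1-\frac1k$, so this is $O(N\cdot\OUT^{1-1/k-\gamma}\log N)$. Once $\OUT^{\gamma}=\omega(\log N)$, for instance when $\OUT\ge \log^{2/\gamma}N$, this is $o(N\cdot\OUT^{1-1/k})$. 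That contradicts the assumed combinatorial lower bound for listing star results.

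All steps are combinatorial and use only hash tables plus calls to $\mathcal{A}$, so the resulting listing algorithm is combinatorial. The high-probability failure events contribute only lower-order terms. They can be handled by the standard amplification/restart argument, converting expected time into a bound that holds with high probability.

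The main obstacle is that the claimed lower bound must hold for given $N,\OUT$. The contradiction only arises for $\OUT$ polylogarithmically large and on the hard instances where the listing bound is tight. I would handle this by invoking the listing lower bound for parameter families with $\OUT=N^{c}$ for constant $c>0$, so the $\log N$ factor is absorbed by $\OUT^{\gamma}$. I would also check that the monotonicity $\frac1k\le1-\frac1k$ keeps preprocessing from dominating. That check is where the proof would differ from the matrix case, where the two exponents coincide.
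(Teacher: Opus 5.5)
Your proposal is correct and follows the paper's proof essentially step for step. It uses the same two-phase reduction from listing: a collision-based estimate $\Lambda\in[\OUT,2\OUT]$, then $\Theta(\Lambda\log N)$ coupon-collector samples, giving total cost $O(N\cdot\OUT^{1-1/k-\gamma}\log N)$ and the same contradiction with the $\Omega(N\cdot\OUT^{1-1/k})$ listing bound for sufficiently large $\OUT$. Your explicit check that $\frac{1}{k}\le 1-\frac{1}{k}$, which lets the preprocessing term be absorbed, is exactly what the paper's final equality uses implicitly.
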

    
    \begin{proof}
        We prove this by reduction from the problem of listing all query results. Suppose, for the sake of contradiction, that there exists a sampling algorithm $\mathcal{A}$ with preprocessing time $O(N \cdot \OUT^{\frac{1}{k} -\gamma})$ and expected sampling time $T = O\left(\frac{N}{\OUT^{1/k+\gamma}}\right)$ for some constant $\gamma > 0$. We construct an algorithm to compute the full query result $\star(\R)$ without prior knowledge of $\OUT$, following the strategy in Theorem~\ref{thm:sample-matrix-combinatorial-lb}.
    
        \paragraph{Step 1: Estimate $\OUT$}
        We repeatedly invoke $\mathcal{A}$ to generate samples and track collisions. As described in Theorem~\ref{thm:sample-matrix-combinatorial-lb}, by stopping when a sufficient number of collisions occur, we can obtain an estimate $\Lambda$ such that $\Lambda \in [\OUT, 2\OUT]$ with high probability. This step requires $O(\OUT)$ samples.
    
        \paragraph{Step 2: Coupon Collection}
        Using the estimate $\Lambda$, we invoke $\mathcal{A}$ for $\Theta(\Lambda \log N)$ trials. Since $\Lambda \ge \OUT$ and $\log \OUT = O(\log N)$, this satisfies the requirement for the Coupon Collector's Problem to ensure that all distinct tuples in $\star(\R)$ are collected with high probability.
    
        \paragraph{Complexity Analysis}
        The total number of samples drawn is $O(\OUT \log N)$. The total running time to enumerate the full query result is:
        \begin{align*}
            O(N \cdot \OUT^{\frac{1}{k} -\gamma}) + O\left(\OUT \log N \cdot \frac{N}{\OUT^{1/k+\gamma}}\right) 
            &= O\left(N \cdot \OUT^{1-\frac{1}{k}-\gamma} \log N\right).
        \end{align*}
        For sufficiently large $\OUT$, this runtime is strictly faster than the combinatorial lower bound of $\Omega(N \cdot \OUT^{1-1/k})$ for listing star query results, leading to a contradiction.
    \end{proof}

 \section{Chain Query}\label{appendix:chain}
        \subsection{Approximate Counting Chain Query}
        \label{sec:count-chain}
        
        For $\chain = \pi_{A_1, A_{k+1}} (R_1(A_1, A_2) \bowtie \dots \bowtie R_k(A_k, A_{k+1}))$, we present an algorithm that runs in $\O(N)$ time, based on the $k$-Minimum Values (KMV) summary~\cite{cohen2008tighter,cohen2007summarizing}. The core idea is to estimate the number of distinct reachable values in $A_{k+1}$ for each value in $A_1$. Let $\deg(u) = |\{v \in \adom(A_{k+1}) \mid u \rightsquigarrow v\}|$ be the number of $A_{k+1}$ values reachable from $u \in \adom(A_1)$. The total output size is $\OUT = \sum_{u \in \adom(A_1)} \deg(u)$. We can estimate each $\deg(u)$ efficiently using bottom-up dynamic programming combined with min-wise hashing.
        
        \paragraph{Algorithm}
        We assign a random hash function $h: \dom(A_{k+1}) \to [0,1]$. For any node $u$ in the join graph, we wish to maintain the $K$-th smallest hash value among all $v \in \adom(A_{k+1})$ reachable from $u$. Let $H(u)$ denote the set of hash values of all $A_{k+1}$ values reachable from $u$. We maintain a summary $S(u)$ containing the $K$ smallest values in $H(u)$, where $K = O(1/\epsilon^2)$.
        
        The algorithm proceeds in a backward pass from $R_k$ to $R_1$:
        \begin{itemize}[leftmargin=*]
            \item \textbf{Initialization:} For each $v \in \adom(A_{k+1})$, $S(v) = \{h(v)\}$.
            \item \textbf{Propagation:} For $i = k$ down to $1$, for each $u \in \adom(A_i)$, we compute $S(u)$ by merging the summaries of its neighbors in $R_i$:
            $S(u) = \textsf{bottom-k} \left( \bigcup_{(u, v) \in R_i} S(v) \right)$
            where $\textsf{bottom-k}$ retains the $K$ smallest distinct values.
            \item \textbf{Estimation:} For each $u \in \adom(A_1)$, we estimate $\deg(u)$. If $|S(u)| < K$, then $S(u)$ contains all reachable hash values, so $\widehat{\deg}(u) = |S(u)| = \deg(u)$. Otherwise, let $v_{K}$ be the $K$-th smallest value in $S(u)$ (i.e., $\max(S(u))$), the  estimator is $\widehat{\deg}(u) = \frac{K}{v_{K}}$.
        \end{itemize}
        
        To achieve the $(\epsilon, \delta)$-guarantee, we repeat this process $O(\log (N/\delta))$ times. For each $u$, we take the median of its estimates to boost the success probability, and finally sum these medians.
        
        \begin{algorithm}[t]
            \caption{$\textsc{ApproxCountChain}_{\epsilon,\delta}(\R)$}
            \label{alg:approx-count-chain}
            \KwIn{An instance $\R$ of $\chain$, approximation quality $\epsilon$ and error $\delta$.}
            \KwOut{An estimate of $|\chain(\R)|$.}
            $K \gets \lceil 8/\epsilon^2 \rceil$, $m \gets \lceil 12 \log(N/\delta) \rceil$\;
            Initialize map $L$ where $L[u] \gets []$ for all $u \in \adom(A_1)$\;
            \For{$j \in [m]$}{
                Generate a pairwise independent hash function $h: \dom(A_{k+1}) \to [0,1]$\;
                Initialize map $S$ where $S[v] \gets \{h(v)\}$ for all $v \in \adom(A_{k+1})$\;
                \For{$i \gets k$ \KwTo $1$}{
                    Initialize map $S'$\;
                    \lForEach{$(u, v) \in R_i$}{$S'[u] \gets \textsf{bottom-k}(S'[u] \cup S[v])$}
                    $S \gets S'$}
                \For{$u \in \adom(A_1)$}{
                    \lIf{$|S[u]| < K$}{
                        Append $|S[u]|$ to $L[u]$}
                    \lElse{
                        Append $\frac{K}{\max(S[u])}$ to $L[u]$}
                }
            }
            $s \gets 0$\;
            \lFor{$u \in \adom(A_1)$}{
                $s \gets s + \text{Median}(L[u])$}
            \Return $s$\;
        \end{algorithm}
        
        \paragraph{Analysis}
        The merging of KMV summaries at each step corresponds to the set union of reachable values. Since the size of each summary is bounded by $K = O(1/\epsilon^2)$, the merge operation for a tuple $(u,v)$ takes $O(K)$ time. Across all relations, we process each tuple exactly once. Thus, one pass of the algorithm takes $O(N \cdot K)$ time.
        
        Regarding the correctness, for each $u \in \adom(A_1)$, the KMV estimator $\widehat{\deg}(u)$ is an $\epsilon$-approximation of $\deg(u)$ with at least constant probability. By repeating the process $m = O(\log(N/\delta))$ times and taking the median, we boost the success probability for this specific $u$ to $1 - \delta/N$. Since $|\adom(A_1)| \le N$, by applying the union bound over all $u \in \adom(A_1)$, we guarantee that with probability at least $1-\delta$, the estimates for all $u$ are simultaneously $\epsilon$-approximations. Consequently, their sum is an $\epsilon$-approximation of $\OUT$. The total time complexity is $O(N \cdot \frac{1}{\epsilon^2} \log \frac{N}{\delta})$.
        
        \begin{theorem}\label{thm:chain-count-ub}
            For $\chain$ and an arbitrary instance $\R$, Algorithm~\ref{alg:approx-count-chain} returns an $\epsilon$-approximation of $|\chain(\R)|$ with probability at least $1-\delta$ in $\O(N)$ time.
        \end{theorem}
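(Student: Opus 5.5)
The statement to prove is Theorem~\ref{thm:chain-count-ub}: Algorithm~\ref{alg:approx-count-chain} returns an $\epsilon$-approximation of $|\chain(\R)|$ in $\O(N)$ time with probability at least $1-\delta$. I will treat correctness and running time separately.

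\textbf{Decomposition and invariant.} The output decomposes as $\OUT=\sum_{u\in\adom(A_1)}\deg(u)$, where $\deg(u)$ is the number of $A_{k+1}$-values reachable from $u$ through the chain. I will prove by backward induction on $i=k+1,\dots,1$ that, for a fixed hash function $h$ and every $u\in\adom(A_i)$, the map $S[u]$ after processing $R_i$ equals the set of the $K$ smallest values of $H(u)=\{h(v): v \text{ reachable from } u\}$. This rests on two facts. First, $H(u)=\bigcup_{(u,w)\in R_i}H(w)$. Second, taking bottom-$K$ commutes with union: $\textsf{bottom-k}(X\cup Y)=\textsf{bottom-k}(\textsf{bottom-k}(X)\cup \textsf{bottom-k}(Y))$, because any element among the $K$ smallest of $X\cup Y$ is among the $K$ smallest of whichever set contains it. Incremental merging over the tuples of $R_i$ is therefore exact. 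Values $u$ with no successor are simply absent and contribute $0$, which is correct. For collisions I will assume $h$ is injective on $\adom(A_{k+1})$, either by taking the range to be $[N^3]$ scaled or by noting that collisions occur with probability $O(1/N)$ and folding this into $\delta$.

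\textbf{Per-$u$ accuracy.} If $|S[u]|<K$, the estimate is exact. Otherwise I will invoke the standard KMV analysis of \cite{cohen2008tighter}, or reprove it with Chebyshev under pairwise independence as in Bar-Yossef et al. Write $d=\deg(u)$. The estimate $K/v_K$ exceeds $(1+\epsilon)d$ exactly when at least $K$ reachable hashes fall below $K/((1+\epsilon)d)$. That count has mean at most $K/(1+\epsilon)$ and variance at most its mean, so the event has probability $O(1/(\epsilon^2K))\le 1/8$ for $K=8/\epsilon^2$ with suitable constants. The lower tail is symmetric. The main technical care goes into this step: the discreteness of the hash range, and checking that pairwise independence suffices. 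Taking the median of $m=12\log(N/\delta)$ independent repetitions, a Chernoff bound gives failure probability at most $\delta/N$ per $u$. A union bound over at most $N$ values of $u$ makes all estimates simultaneously $\epsilon$-accurate, and summing preserves the relative error.

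\textbf{Running time.} Each merge $\textsf{bottom-k}(S'[u]\cup S[v])$ of two sorted lists of length at most $K$ costs $O(K)$. Each tuple of each relation is touched once per repetition, so one pass costs $O(NK)$. With $m$ repetitions the computation costs $O(NK\log(N/\delta))$, and the final medians cost $O(Nm)$ via linear-time selection. The total is $O(\frac{N}{\epsilon^2}\log\frac{N}{\delta})=\O(N)$.
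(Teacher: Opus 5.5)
Your proposal is correct and takes the same route as the paper. Both decompose $\OUT=\sum_{u}\deg(u)$, maintain bottom-$K$ KMV summaries in a backward pass costing $O(NK)$ per repetition, boost each per-$u$ estimate to failure probability $\delta/N$ by taking the median over $O(\log(N/\delta))$ repetitions, take a union bound over $u\in\adom(A_1)$, and sum. Your explicit invariant that bottom-$K$ commutes with union, your Chebyshev tail bound for KMV under pairwise independence, and your handling of hash collisions fill in details the paper only cites as standard; collisions are best absorbed into the constant per-repetition failure probability before the median step, rather than into $\delta$, which fails for very small $\delta$.
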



    \subsection{Uniform Sampling over Chain Query}
        \label{sec:sample-chain}
        
        We next show how to use the results from approximate counting to perform uniform sampling for $\chain$. 
        Recall that for each $u \in \adom(A_1)$, we obtain an estimator $\widehat{\deg}(u)$ such that $(1-\epsilon)\deg(u) \le \widehat{\deg}(u) \le (1+\epsilon)\deg(u)$ holds with high probability. To facilitate rejection sampling, we define an upper bound proxy for the degree: $\widetilde{\deg}(u) = \frac{\widehat{\deg}(u)}{1-\epsilon}.$
        It follows that $\deg(u) \le \widetilde{\deg}(u) \le \frac{1+\epsilon}{1-\epsilon}\deg(u)$. We use these proxy degrees to guide the sampling process towards ``heavy'' starting nodes, and then correct the bias using rejection sampling.

        \paragraph{Algorithm}
        The sampling procedure consists of a preprocessing phase and a sampling phase. In the preprocessing phase, we run the approximate counting algorithm to compute $\widetilde{\deg}(u)$ for all $u \in \adom(A_1)$. We then construct a data structure (e.g., an alias table or a prefix sum array) to support weighted sampling of $u \in \adom(A_1)$ proportional to $\widetilde{\deg}(u)$.
        
        In the sampling phase, we first sample a start node $u^*$ according to the weights $\widetilde{\deg}$. Then, we compute the exact set of reachable values $V_{u^*} = \{v \in \adom(A_{k+1}) \mid u^* \rightsquigarrow v\}$ by traversing the join graph starting from $u^*$. Note that $|V_{u^*}| = \deg(u^*)$. We draw a value $v^*$ uniformly at random from $V_{u^*}$. Finally, we accept the pair $(u^*, v^*)$ as a valid sample with probability $\frac{\deg(u^*)}{\widetilde{\deg}(u^*)}$. If rejected, we restart the sampling phase.
        
        \begin{algorithm}[t]
            \caption{$\textsc{SampleChain}(\R, \epsilon, \delta)$}
            \label{alg:sample-chain}
            \KwIn{An instance $\R$ of $\chain$, parameters $\epsilon, \delta$.}
            \KwOut{A uniform random sample from $\chain(\R)$.}
            \tcp{Suppose $\textsc{ApproxCountChain}_{\epsilon,\delta}(\R)$ returns $\widehat{\deg}(u)$ for all $u \in \adom(A_1)$. We then 
            compute $\widetilde{\deg}(u) = \frac{\widehat{\deg}(u)}{1-\epsilon}$ for all $u \in \adom(A_1)$. Let
            $W = \sum_{u \in \adom(A_1)} \widetilde{\deg}(u)$.}
            Sample $u^* \in \adom(A_1)$ with probability $\frac{\widetilde{\deg}(u^*)}{W}$\;
            Compute $V_{u^*} \gets \{v \in \adom(A_{k+1}) \mid u^* \rightsquigarrow v\}$ via graph traversal\;
            $\deg(u^*) \gets |V_{u^*}|$\;
            \lIf{$\deg(u^*) > \widetilde{\deg}(u^*)$}{\Return $\perp$}
            Sample $v^*$ uniformly from $V_{u^*}$\;
            \lIf{$\unif{0}{1} \le \frac{\deg(u^*)}{\widetilde{\deg}(u^*)}$}{\Return $(u^*, v^*)$}
        \end{algorithm}
        
        \paragraph{Analysis} The probability of returning the pair $(u, v) \in \chain(\R)$ in one iteration is the product of the probability of choosing $u$, the probability of choosing $v$ given $u$, and the acceptance probability:
        \begin{align*}
            \Pr[\text{return } (u,v)] &= \underbrace{\frac{\widetilde{\deg}(u)}{W}}_{\text{choose } u} \cdot \underbrace{\frac{1}{\deg(u)}}_{\text{choose } v} \cdot \underbrace{\frac{\deg(u)}{\widetilde{\deg}(u)}}_{\text{accept}} = \frac{1}{W}.
        \end{align*}
        Since this probability is independent of the specific pair $(u,v)$, the returned sample is uniform.
        
        Regarding the running time, the preprocessing takes $\O(N)$ time as established in Section~\ref{sec:count-chain}. In the sampling loop, computing $V_{u^*}$ takes $O(N)$ time in the worst case (traversing the join path). The expected number of iterations is determined by the acceptance probability. Since $\widetilde{\deg}(u) \le \frac{1+\epsilon}{1-\epsilon}\deg(u)$, the acceptance probability is at least $\frac{1-\epsilon}{1+\epsilon} \approx 1 - 2\epsilon$. Thus, the expected number of trials is $O(1)$. The total expected sampling time is $\O(N)$.

           \begin{theorem}\label{thm:chain-sample-ub}
            For $\chain$, there is an algorithm that can take as input an arbitrary instance $\R$ of input size $N$ and builds an index in $\O(N)$ time such that with probability at least $1-1/N^{O(1)}$, a uniform sample from the query result $\Q(\R)$ can be returned in $\O(N)$ expected time. 
        \end{theorem}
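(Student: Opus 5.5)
The plan is to prove Theorem~\ref{thm:chain-sample-ub} in three parts: the preprocessing bound, near-uniformity conditioned on a good event, and the expected sampling time.

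\textbf{Preprocessing.} We run $\textsc{ApproxCountChain}_{\epsilon,\delta}(\R)$ (Algorithm~\ref{alg:approx-count-chain}) with a small constant $\epsilon$ (say $\epsilon=1/4$) and $\delta = 1/N^{c}$. We keep the per-value medians $\widehat{\deg}(u)$ for all $u \in \adom(A_1)$ instead of only their sum. By the analysis preceding Theorem~\ref{thm:chain-count-ub}, this takes $O(N\cdot\frac{1}{\epsilon^2}\log\frac{N}{\delta}) = \O(N)$ time. The same analysis gives a good event $\mathcal{G}$ of probability at least $1-\delta = 1-1/N^{O(1)}$: the median estimate for each fixed $u$ fails with probability at most $\delta/N$, and we take a union bound over at most $N$ values of $u$. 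On $\mathcal{G}$ we have $(1-\epsilon)\deg(u)\le \widehat{\deg}(u)\le(1+\epsilon)\deg(u)$ for every $u$. We then set $\widetilde{\deg}(u)=\widehat{\deg}(u)/(1-\epsilon)$ and build an alias table (or prefix-sum array with binary search) over these weights in $O(N)$ time, so that weighted sampling of $u^*$ costs $O(1)$, or $O(\log N)$ with prefix sums.

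\textbf{Uniformity on $\mathcal{G}$.} Condition on $\mathcal{G}$. Then $\deg(u)\le\widetilde{\deg}(u)$ for every $u$, so the $\perp$ branch at Line 4 of Algorithm~\ref{alg:sample-chain} never fires and the acceptance ratio $\deg(u^*)/\widetilde{\deg}(u^*)$ is a valid probability. By the product computation given in the analysis, each $(u,v)\in\chain(\R)$ is returned in one round with probability exactly $1/W$. One round therefore succeeds with probability $\OUT/W$, and conditioned on success the output is uniform. Fresh randomness in each round makes distinct requests independent, since the index is fixed. Off $\mathcal{G}$ we make no uniformity claim, which is exactly the meaning of ``with probability at least $1-1/N^{O(1)}$'' in the statement.

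\textbf{Running time per sample.} We must check that computing $V_{u^*}$ costs $O(N)$. The plan is a layered BFS: maintain the set $X_i\subseteq\adom(A_i)$ of values reachable from $u^*$, starting from $X_1=\{u^*\}$. For $i=1,\dots,k$, scan the neighbor lists of each $x\in X_i$ in $R_i$ using the access-neighbor and degree primitives, and deduplicate with a hash set. Each tuple of $R_i$ is touched at most once per layer, so the total cost is $O(\sum_i|R_i|)=O(N)$. Sampling $v^*$ uniformly from the materialized $V_{u^*}$ costs $O(1)$. On $\mathcal{G}$, $\OUT/W \ge \frac{1-\epsilon}{1+\epsilon}$ because $W\le\frac{1+\epsilon}{1-\epsilon}\OUT$, which is a constant bounded away from $0$. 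Hence the number of rounds is geometric with constant mean, and the expected sampling time is $O(N)$.

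\textbf{Main obstacle.} The delicate point is the failure event. Off $\mathcal{G}$, some $\widetilde{\deg}(u)$ could be $0$ or much larger than $\deg(u)$, which would destroy uniformity or the running time. I would first handle this by absorbing the failure into the $1/N^{O(1)}$ probability in the statement. To keep the expected time bounded even off $\mathcal{G}$, I would also add a cap: after $O(\log N)$ unsuccessful rounds, fall back to materializing $\chain(\R)$ by running the above BFS from every $u$. This fallback costs at most $\mathrm{poly}(N)$, but it is reached with probability at most $1/N^{O(1)}$ from the choice of $\delta$, so its contribution to the expectation is $O(1)$.
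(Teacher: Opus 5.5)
Your proposal is correct and follows the paper's argument step for step. You compute per-value estimates with \textsc{ApproxCountChain}, inflate them to $\widetilde{\deg}(u)=\widehat{\deg}(u)/(1-\epsilon)$, sample $u^*$ by weight, materialize $V_{u^*}$, and accept with probability $\deg(u^*)/\widetilde{\deg}(u^*)$. This gives probability $1/W$ per pair and $O(1)$ expected rounds, since $W\le\frac{1+\epsilon}{1-\epsilon}\OUT$. Your layered-traversal argument also makes explicit the $O(N)$ cost of computing $V_{u^*}$, which the paper only asserts.

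The fallback cap is not needed, because the $1-1/N^{O(1)}$ in the statement is over index construction. Its justification (that the fallback is ``reached with probability $1/N^{O(1)}$'') holds only when you average over the index randomness, not for a fixed bad index.
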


    \paragraph{Remark} It should be noted that this sampling algorithm only guarantees the uniformity with high probability, while our sampling algorithms for matrix and star queries can guarantee the uniformity for sure.

     \subsection{Lower Bounds for Approximate Counting Chain Query}
    \label{sec:lb-chain}
   
    So far, we have established matching upper and lower bounds for approximately counting and uniform sampling matrix queries, which correspond to the $2$-chain query. However, the problem becomes significantly harder when its length increases. In this section, we establish a lower bound $\Omega(\min\{N,\frac{N^2}{\OUT}\})$ for the $3$-chain query and $\Omega(N)$ for the $4$-chain query, which can also be generalized to other queries that contain this small join as a core. 

    \begin{theorem}
        \label{thm:count-chain-lb-3}
        For $\chain= \pi_{A,D}R_1(A,B) \Join R_2(B,C)\Join R_3(C,D)$, given any parameter $N, \OUT\in \mathbb{Z}^+$, any randomized algorithm that takes as input an arbitrary instance $\R$ of input size $N$ and output size $\OUT$, and returns a constant-approximation of $|\chain(\R)|$ with constant probability, requires at least $\Omega(\min\{N,\frac{N^2}{\OUT}\})$ time, under the property testing model.
    \end{theorem}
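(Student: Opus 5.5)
We reduce from two-party set disjointness, as in Lemma~\ref{lmm:count-matrix-lb}, but now the secret bits sit on the \emph{middle} relation $R_2(B,C)$. The outer relations $R_1$ and $R_3$ are public and fixed. Because of this, the algorithm cannot route around the hidden information through degree or sampling queries on the outer relations; it has to probe $R_2$ itself.

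The first step is the construction. Let $m$ be the universe size, and let Alice and Bob hold $S_A,S_B\subseteq[m]$ with $|S_A|=|S_B|=m/4$. Introduce $m$ ``slots'' $j\in[m]$. Each slot gets its own $A$-block $\mathcal{A}_j$ of size $p$ and its own $D$-block $\mathcal{D}_j$ of size $p$, plus a pair of $B$-values $b_j,b'_j$ and a pair of $C$-values $c_j,c'_j$. Set $R_1\supseteq \mathcal{A}_j\times\{b_j,b'_j\}$ and $R_3\supseteq\{c_j,c'_j\}\times\mathcal{D}_j$; both are public. In $R_2$, slot $j$ contributes exactly two edges, which encode $x_j=[j\in S_A]$ and $y_j=[j\in S_B]$. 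Alice controls the edge from $b_j$ and Bob the edge from $b'_j$. We want the edges for a slot to be $(b_j,c_j),(b'_j,c'_j)$ when $x_j\wedge y_j=0$. When $x_j=y_j=1$, we want a ``crossing'' edge that connects $\mathcal{A}_j$ to some $\mathcal{D}_{j'}$ with $j'\ne j$, and so creates $p^2$ new output pairs. The cleaner alternative is a shift: Alice's edge goes from $b_j$ to $c_j$ or to $c_{j+1}$ according to $x_j$, and Bob's edge goes from $b'_j$ to $c'_j$ or to $c'_{j+1}$ according to $y_j$. Then the slot produces a new pair $(\mathcal{A}_j,\mathcal{D}_{j+1})$ only if both bits are $1$.

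I need to verify that the new block is created only when both bits are $1$. Otherwise a single bit already reveals it, so the output counts in the two cases would not separate, and the construction would have to be adjusted (for instance, wire the connection through $b_j\to c''_j$ owned by Alice and $c''_j\to$ via a Bob-controlled $R_2$-edge, using a 2-hop inside $R_2$ is impossible). Getting this gadget right so that degrees are identical in both cases is the main obstacle. Degrees must be identical because the degree and neighbor primitives on $R_2$ must be simulable with $O(1)$ bits, and fixed degrees are what keep the answered queries local.

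With such a gadget, $N=\Theta(mp)$. The baseline output is $\Theta(mp^2)$, and intersecting sets add at least $p^2$ extra pairs. A constant-factor approximation distinguishes the two cases only if $p^2=\Omega(mp^2)$, which fails. So I would instead amplify: use $\Theta(1)$ slots of size $p$ carrying the large output, and encode disjointness on $m$ sparse gadgets that contribute only $O(m)$ baseline pairs. Under intersection, a $p\times p$ block appears. Choose $\OUT\approx p^2$ and $m\approx\min\{N,N^2/\OUT\}$; the constraint $mp\le N$ is what gives the $N^2/\OUT$ term via $p=N/m$, and $p^2\gg m$ then requires $m\le N^{2/3}$. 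If that fails I would settle for the regime bound and argue each regime separately.

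Finally, I simulate the primitives. Each query touches one tuple of one relation. Queries on $R_1,R_3$ are free, and each query on $R_2$ costs $O(\log m)$ bits. An algorithm using $o(m)$ queries therefore yields an $o(m)$-bit protocol (up to logs), which contradicts the $\Omega(m)$ disjointness bound. Instantiating $m=\Theta(\min\{N,N^2/\OUT\})$, with the $\OUT\le N$ case using $p=O(1)$ and $m=\Theta(N)$, gives the theorem.
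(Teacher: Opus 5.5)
Your overall plan (hide the disjointness bits in the middle relation $R_2$, keep $R_1,R_3$ public, and simulate each primitive with few bits) is viable. But the proposal does not carry it out, and it fails at two concrete points.

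\textbf{No working gadget.} As you note, the shift gadget lets a single bit create the new block. Wiring both $b_j,b'_j$ to $\mathcal{A}_j$ and both $c_j,c'_j$ to $\mathcal{D}_j$ is also what produces the $\Theta(mp^2)$ baseline that later forces your amplification detour. The obstruction is self-imposed: nothing requires each $R_2$ tuple to be controlled by one party.
\begin{itemize}
\item A tuple whose presence depends on $x_j\wedge y_j$ is answered by exchanging $x_j$ and $y_j$. Both players run the algorithm with public randomness, so they already know the query. The cost is therefore $O(1)$ bits, not $O(\log m)$; the latter would also lose a logarithmic factor in your final count.
\item A degree-preserving switch does the job. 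Put $\{(b_j,c_j),(b'_j,c'_j)\}$ into $R_2$ if $x_j\wedge y_j=0$ and $\{(b_j,c'_j),(b'_j,c_j)\}$ otherwise. Give $A$-neighbours only to $b_j$ and $D$-neighbours only to $c'_j$.
\item With this gadget, $|R_2|$ and all degrees are input-independent. A uniform tuple of $R_2$ is simulated by publicly picking a gadget and one of its two tuples. Neighbour access and tuple tests each depend on a single AND. The baseline output is $0$.
\end{itemize}

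\textbf{Parameter accounting (the more serious gap).} Your layout cannot reach $\min\{N,\frac{N^2}{\OUT}\}$.
\begin{itemize}
\item If every slot owns private $p$-blocks on both sides, then $N=\Theta(mp)$ and $p^2\approx\OUT$ force $m=\Theta(N/\sqrt{\OUT})$. That is only the matrix bound. Your step ``$mp\le N$ gives the $N^2/\OUT$ term via $p=N/m$'' actually yields $m\approx N/\sqrt{\OUT}$.
\item Using $\Theta(1)$ heavy slots does not help. Every witness path uses exactly one $R_2$ tuple $(b,c)$, which contributes $|R_1\ltimes b|\cdot|R_3\ltimes c|$ pairs. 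So $\Theta(1)$ heavy $B$- and $C$-values offer only $\Theta(1)$ heavy cells to carry bits, and ``settle for the regime bound'' is not an argument.
\end{itemize}

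The missing idea, which is exactly the structure of the paper's construction, is a two-dimensional grid.
\begin{itemize}
\item Take $B_\alpha,C_\alpha$ of size $N/\sqrt{\OUT}$. Give each $b\in B_\alpha$ its own $\sqrt{\OUT}$ $A$-values and each $c\in C_\alpha$ its own $\sqrt{\OUT}$ $D$-values, so the public $R_1,R_3$ have size $N$.
\item Place the universe elements on cells of $B_\alpha\times C_\alpha$. A heavy $b$ is shared by its whole row, so there are $N^2/\OUT$ cells, each worth $\OUT$ output pairs.
\item Only the switch tuples are charged to $R_2$. For cell $(b,c)$ use $\{(b,\bar c),(\bar b,c)\}$ or $\{(b,c),(\bar b,\bar c)\}$, with fresh dummies $\bar b,\bar c$ that have no $A$-/$D$-neighbours.
\item Using $m=\min\{N,\frac{N^2}{\OUT}\}$ cells keeps $|R_2|\le 2N$. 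Under the unique-intersection promise the output is $0$ versus $\OUT$, and the reduction yields $\Omega(m)$.
\item To get a constant-factor gap at output size $\Theta(\OUT)$, add a public baseline block of $O(\sqrt{\OUT})$ tuples contributing $\Theta(\OUT)$ pairs.
\end{itemize}
This also repairs your $\OUT\le N$ case, where $p=O(1)$ produces instances of output $O(1)$ rather than $\Theta(\OUT)$.

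The paper uses the same grid but no disjointness reduction. It compares distributions with $L$ versus $\theta L$ random critical tuples in $B_\alpha\times C_\alpha$, padded with noise. It then argues per primitive that finding one costs $\Omega(N/L)$ samples or neighbour probes and $\Omega(N^2/(L\Delta))$ tuple tests. A repaired version of your reduction would give the same bound with a cleaner simulation argument.
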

    
    \begin{proof}
        Consider an arbitrary constant $\theta \ge 2$ and two arbitrary integers $\Delta, L \in \mathbb{Z}^+$ such that $\Delta\cdot L \le \frac{N^2}{\theta}$ and $(\theta+1)L \le N$. We define two distributions of instances, $\D_0$ and $\D_1$. Let $\dom(A) = \{a_{ij}: i\in[3N], j\in [\sqrt{\Delta}]\}$, $ \dom(B) =\{b_i: i \in [3N]\}$, $\dom(C) =\{c_i: i \in [3N]\}$, and $\dom(D) = \{d_{ij}: i\in[3N], j\in [\sqrt{\Delta}]\}$. 
        
        First, we choose a subset $B_\alpha \subset \dom(B)$ of size $\frac{N}{\sqrt{\Delta}}$ and a subset $C_\alpha \subset \dom(C)$ of size $\frac{N}{\sqrt{\Delta}}$. Let $B_\beta = \dom(B) - B_\alpha$ and $C_\beta = \dom(C) - C_\alpha$.
        We fix $R_1 = \{(a_{ij},b_i): b_i \in B_\alpha, j\in [\sqrt{\Delta}]\}$ and $R_3 = \{(c_i,d_{ij}): c_i \in C_\alpha, j \in [\sqrt{\Delta}]\}$. Note that $|R_1| = |R_3| = |B_\alpha|\sqrt{\Delta} = N$.
        
        The two distributions differ only in the construction of $R_2$:
        \begin{itemize}[leftmargin=*]
        \item \textbf{Distribution $\D_0$:} We construct $R_2$ by adding $L$ random tuples from $B_\alpha \times C_\alpha$, $N-L$ random tuples from $B_\alpha \times C_\beta$, $N-L$ random tuples from $B_\beta \times C_\alpha$, and $L$ random tuples from $B_\beta \times C_\beta$. 
        The output size is $\OUT = L \cdot (\sqrt{\Delta})^2 = L \cdot \Delta$.
        
        \item \textbf{Distribution $\D_1$:} We construct $R_2$ by adding $\theta L$ random tuples from $B_\alpha \times C_\alpha$, $N-\theta L$ random tuples from $B_\alpha \times C_\beta$, $N-\theta L$ random tuples from $B_\beta \times C_\alpha$, and $\theta L$ random tuples from $B_\beta \times C_\beta$. 
        The output size is $\OUT_1 = \theta L \cdot \Delta$.
        \end{itemize}
        
        In both cases, $|R_2| = 2N$. The constraints on $\Delta$ and $L$ ensure that the subsets $B_\alpha \times C_\alpha$, etc., are large enough to support these selections without replacement.
        
        Let $\mathcal{A}$ be an algorithm that distinguishes $\D_0$ from $\D_1$. Since $\OUT_1 = \theta \cdot \OUT$ and $\theta \ge 2$, any constant-factor approximation algorithm must distinguish these two cases. The difficulty lies in identifying tuples in the ``critical region'' $B_\alpha \times C_\alpha$.
        Although the distributions also differ in regions involving $B_\beta$ or $C_\beta$, distinguishing based on those regions is statistically harder. The regions $B_\alpha \times C_\beta$ and $B_\beta \times C_\alpha$ contain $\Theta(N)$ tuples with a count difference of only $\Theta(L)$, requiring $\Omega((N/L)^2)$ queries to distinguish the bias. The region $B_\beta \times C_\beta$ has a domain size of $\Theta(N^2)$ with only $\Theta(L)$ tuples, requiring $\Omega(N^2/L)$ queries to find a witness. Thus, the most efficient strategy for $\mathcal{A}$ is to focus on $B_\alpha \times C_\alpha$. Consider the queries made to $R_2$:
        \begin{itemize}[leftmargin=*]
            \item \textbf{test tuple:} $\mathcal{A}$ can check if $(b,c) \in R_2$. To distinguish the distributions, $\mathcal{A}$ must find tuples in $B_\alpha \times C_\alpha$. The size of this domain is $|B_\alpha||C_\alpha| = \frac{N^2}{\Delta}$. In $\D_0$, the density of tuples is $\rho_0 = \frac{L}{N^2/\Delta}$, and in $\D_1$, it is $\rho_1 = \frac{\theta L}{N^2/\Delta}$. To distinguish these densities, $\mathcal{A}$ requires $\Omega(\frac{1}{\rho_1}) = \Omega(\frac{N^2}{\theta L \Delta})$ queries.
            
            \item \textbf{sample tuple:} $\mathcal{A}$ can sample uniformly from $R_2$. The probability of sampling a tuple from $B_\alpha \times C_\alpha$ is $\frac{L}{2N}$ in $\D_0$ and $\frac{\theta L}{2N}$ in $\D_1$. To distinguish these biases, $\mathcal{A}$ requires $\Omega(\frac{2N}{\theta L}) = \Omega(\frac{N}{L})$ samples.
            
            \item \textbf{compute degree:} In both distributions, the total number of edges incident to $B_\alpha$ in $R_2$ is exactly $N$. Since the endpoints are chosen uniformly at random, the marginal distribution of the degree of any specific node $b \in B_\alpha$ is identical in both $\D_0$ and $\D_1$. Thus, the degree oracle provides no information to distinguish the distributions.
    
            \item \textbf{access neighbor:} $\mathcal{A}$ can query the $j$-th neighbor of a value $b \in B_\alpha$. Across all values in $B_\alpha$, there are exactly $N$ incident edges, creating a total of $N$ valid query slots (pairs of $(b, j)$). In our construction, the $L$ (in $\D_0$) or $\theta L$ (in $\D_1$) critical edges connecting $B_\alpha$ to $C_\alpha$ are assigned to these slots uniformly at random (effectively by randomizing the neighbor orderings). Consequently, querying a specific slot $(b, j)$ is statistically equivalent to sampling an edge uniformly at random from the $N$ edges incident to $B_\alpha$. The probability that such a query reveals a critical edge is $\frac{L}{N}$ in $\D_0$ and $\frac{\theta L}{N}$ in $\D_1$. As with the \textbf{sample tuple} oracle, distinguishing these cases requires $\Omega(\frac{N}{L})$ queries.
        \end{itemize}
        
        Combining these above, any algorithm requires $\Omega\left(\min\left\{\frac{N}{L}, \frac{N^2}{L \Delta}\right\}\right)$ time to distinguish $\mathcal{D}_0$ from $\mathcal{D}_1$. The lower bound $\Omega\left(\min\{N, \frac{N^2}{\OUT}\}\right)$ follows by choosing $\Delta=\Theta(\OUT)$ and $L=O(1)$.
    \end{proof}

       \begin{theorem}
        \label{thm:count-chain-lb-4}
        For $\chain= \pi_{A,E}R_1(A,B) \Join R_2(B,C)\Join R_3(C,D) \Join R_4(D,E)$, given any parameter $N, \OUT\in \mathbb{Z}^+$, any randomized algorithm that takes as input an arbitrary instance $\R$ of input size $N$ and output size $\OUT$, and returns a constant-approximation of $|\chain(\R)|$ with constant probability, requires at least $\Omega(N)$ time, under the property testing model.
    \end{theorem}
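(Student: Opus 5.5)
We plan to prove Theorem~\ref{thm:count-chain-lb-4} with the same two-distribution indistinguishability template as Theorem~\ref{thm:count-chain-lb-3}. The extra relation lets us hide the critical region so that neither \emph{test tuple} nor \emph{sample tuple} can find it cheaply. The construction separates the heavy endpoints from the part that encodes the difference. Concretely, $R_1$ is a Cartesian product $\adom(A)\times\{b^*\}$ with $|\adom(A)|=\sqrt{\OUT}$, and symmetrically $R_4=\{d^*\}\times\adom(E)$ with $|\adom(E)|=\sqrt{\OUT}$. Then $\OUT$ equals either $0$ or $\OUT$ (more generally $\Theta(\OUT)$), depending only on whether $b^*$ reaches $d^*$ through $R_2\Join R_3$.

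First we fix the middle layer. We let $R_2=\{b^*\}\times C$ with $|C|=N/4$, so $b^*$ is connected to every $C$-value. We let $R_3$ be a set of $N/4$ tuples that form a perfect matching between $C$ and a set $D$ of size $N/4$, with $d^*\in D$. This is exactly the configuration in which the answer is nonzero.

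Second, we define the two distributions by how $d^*$ is treated. In $\D_1$, the value $d^*$ is matched to a uniformly random $c\in C$, so $\OUT=|\adom(A)||\adom(E)|$. In $\D_0$, $d^*$ is matched to a fresh value $c'\notin C$, with $c'$ adjacent to nothing in $R_2$, so $\OUT=0$. To keep degree queries on $d^*$ identical in both distributions, we give $d^*$ exactly one neighbor in each case. To keep the input sizes identical up to constants, we pad with dummy tuples as needed.

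Third, we argue that every primitive reveals information only when it touches the unique critical edge $(c_{\text{crit}},d^*)\in R_3$, or the unique $c$ it matters for. Uniform samples from $R_3$ hit that edge with probability $O(1/N)$. Degree queries on $d^*$ are identical in the two distributions, and degree queries on any $c\in C$ return $1$ in both. An access-neighbor query at $d^*$ returns some $c$-value, and the algorithm must then decide whether $c\in C$. To prevent this single query from deciding the question, we make membership in $C$ itself hidden: we randomize the names so that $\adom(C)$ also contains $N/4$ decoy values matched in $R_3$ but absent from $R_2$. Then a $c$ obtained from $d^*$ is in $C$ or is a decoy, each with probability $1/2$ a priori. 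Resolving this requires testing $(b^*,c)\in R_2$, which is a single query. So the decoy idea alone does not suffice.

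The main obstacle is the following. With $k\ge 4$, one access-neighbor call from $d^*$ followed by one test-tuple call resolves the instance, so the hiding must be robust against such navigation. The first thing we will try is to hide $d^*$ itself among $N/4$ indistinguishable $D$-values. Each $D$-value $d_i$ carries its own fan-out $R_4=\{d_i\}\times E_i$, and all $R_4$ neighbor lists are identical in shape. Exactly one $d_i$ (in $\D_1$), or none (in $\D_0$), connects back to $C$ via $R_3$. Symmetrically, we hide $b^*$ among many $B$-values with copies of $R_1$. The uniformity of $\OUT$ is achieved by making the $A$- and $E$-side fan-outs of size $\sqrt{\OUT}$ attached only to one planted pair and small elsewhere.

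We will then reduce to a planted-edge search. Finding the unique connecting $(c,d)$ edge, among $\Theta(N)$ candidate edges whose endpoints have identical local views, requires $\Omega(N)$ queries. This holds by the same per-primitive accounting as in Theorem~\ref{thm:count-chain-lb-3}: each query hits the planted edge with probability $O(1/N)$, so a union bound gives a constant distinguishing advantage only after $\Omega(N)$ queries. Finally, we will choose $N\ge\Omega(\sqrt{\OUT})$ so that the padding fits.
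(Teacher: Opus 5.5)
Your proposal has a genuine gap. The construction you finally commit to does not survive the navigation attack you identified yourself. Your closing accounting, that each query hits the planted edge with probability $O(1/N)$, only holds for probes at statistically indistinguishable locations; it fails for adaptive ones. You attach the $\sqrt{\OUT}$-size fan-outs only to the planted pair $(b^*,d^*)$ and keep them small elsewhere. That makes $b^*$ and $d^*$ the unique values of degree $\sqrt{\OUT}$ in $R_1$ and $R_4$, so one compute-degree call certifies each of them. Because they carry a $\Theta(\sqrt{\OUT}/N)$ fraction of the tuples of $R_1$ and $R_4$, $O(N/\sqrt{\OUT})$ sample-tuple calls locate them. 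After that, one access-neighbor call at $d^*$ in $R_3$ (you still give $d^*$ exactly one neighbor) and one test-tuple call $(b^*,c)\in R_2$ separate $\D_1$ from $\D_0$. The total is $O(N/\sqrt{\OUT})=o(N)$ whenever $\OUT=\omega(1)$. The other reading of your plan, where all $N/4$ values of $D$ have identical $R_4$-lists, requires every list to have size $\sqrt{\OUT}$. Then $|R_4|=\Theta(N\sqrt{\OUT})$, which breaks the input-size budget. So with a single planted pair and identical fan-outs, your argument covers at most $\OUT=O(1)$. The refined $R_2$ is also never specified.

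What is missing is a mechanism that keeps the middle connection expensive to certify even when the heavy endpoints are easy to find. The paper gets this as follows:
\begin{itemize}
\item $\Theta(N/\sqrt{\OUT})$ statistically identical heavy endpoints $B_\alpha$ and $D_\alpha$ on each side;
\item only $L=O(1)$ versus $\theta L$ planted paths $b\to c\to d$ among the $N^2/\OUT$ candidate heavy pairs;
\item each witness $c$ drawn from a private group of size $\OUT/N$;
\item random noise in $R_2$ and $R_3$, so every heavy endpoint has $\Theta(\sqrt{\OUT})$ mostly-noise middle neighbors.
\end{itemize}
Then degrees are masked, a neighbor slot is critical with probability $O(1/N)$, and test-tuple search costs $\frac{N^2}{\OUT}\cdot\frac{\OUT}{N}=N$. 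The case $\OUT\le N$ is reduced to the 3-chain.

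A smaller repair of your first construction also works. Keep $b^*$ and $d^*$ unique and public, but give \emph{both} of them $\Theta(N)$ middle neighbors: $R_2=\{b^*\}\times S_B$ and $R_3=S_D\times\{d^*\}$ with $S_B,S_D\subseteq\dom(C)$ of size $\Theta(N)$. Then $\OUT$ is $0$ or $|\adom(A)|\cdot|\adom(E)|$ according to whether $S_B\cap S_D=\emptyset$. The two-party set-disjointness simulation of Lemma~\ref{lmm:count-matrix-lb} then gives the bound for all $\OUT\le N^2/16$: Alice answers primitives on $R_2$, Bob on $R_3$, and $R_1$, $R_4$ are public. With a blockwise encoding of $S_B$ and $S_D$, each primitive costs $O(1)$ bits. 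This exploits exactly what separates $k=4$ from $k=3$, namely two middle relations that carry no fan-out. Giving $d^*$ a single $R_3$-neighbor is the step that made your instance easy.
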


          \begin{proof}
            If $\OUT \le N$, the lower bound follows from the 3-chain case (Theorem~\ref{thm:count-chain-lb-3}) by setting $R_1$ as a one-to-one mapping. We focus on the case where $\OUT > N$.
    
            Consider any constant $\theta \ge 2$. We choose parameters $L = \Theta(1)$ and $\Delta = \Theta(\OUT)$ such that  $\Delta \cdot L \le \frac{N^2}{\theta}$ and $(\theta+1)L \le N$ hold. We define two distributions of instances, $\D_0$ and $\D_1$. Let $\dom(A) = \{a_{ij}: i\in[3N], j\in [\sqrt{\Delta}]\}$, $\dom(B) =\{b_i: i \in [3N]\}$, $\dom(C) = \{c_i: i\in [3N]\}$, $\dom(D) =\{d_i: i \in [3N]\}$, and $\dom(E) = \{e_{ij}: i\in[3N], j\in [\sqrt{\Delta}]\}$. We fix subsets $B_\alpha \subset \dom(B)$ and $D_\alpha \subset \dom(D)$, each of size $\frac{N}{\sqrt{\Delta}}$. Let $B_\beta = \dom(B) - B_\alpha$ and $D_\beta = \dom(D) - D_\alpha$. We also choose a subset $C_\alpha \subset \dom(C)$ of size $N$ and partition it into $\ell = \frac{N^2}{\OUT}$ groups $C_1, \dots, C_\ell$, each of size $\frac{\OUT}{N}$. We fix $R_1 = \{(a_{ij},b_i): b_i \in B_\alpha, j\in [\sqrt{\Delta}]\}$ and $R_4 = \{(d_i,e_{ij}): d_i \in D_\alpha, j \in [\sqrt{\Delta}]\}$. Note that $|R_1| = |R_4| = N$. The distributions differ in $R_2$ and $R_3$:
            \begin{itemize}[leftmargin=*]
             \item \textbf{Distribution $\D_0$:} We select $L$ distinct pairs from $B_\alpha \times D_\alpha$ uniformly at random. For each selected pair $(b_i,d_j)$, we exclusively assign a distinct group $C_h$ from the partition of $C_\alpha$, pick a value $c_m$ uniformly at random from $C_h$, and add $(b_i,c_m)$ to $R_2$ and $(c_m, d_j)$ to $R_3$. 
             To mask these critical tuples, we add random noise tuples (e.g., from $B_\alpha \times C_\beta$, etc.) until $|R_2|=|R_3|=2N$, similar to the construction in Theorem~\ref{thm:count-chain-lb-3}.
             The output size is $\OUT_0 = L \cdot (\sqrt{\Delta})^2 = L \cdot \Delta$.
           
            \item \textbf{Distribution $\D_1$:} The construction is identical, except we select $\theta L$ pairs instead of $L$. The output size is $\OUT_1 = \theta L \cdot \Delta$.
            \end{itemize}
            
            Since $\OUT_1 = \theta \cdot \OUT_0$, any constant-factor approximation must distinguish $\D_0$ from $\D_1$. Consider the queries made:
        \begin{itemize}[leftmargin=*]
            \item \textbf{test tuple:} To distinguish the distributions, the algorithm must identify the ``critical'' paths $b \to c \to d$. There are $|B_\alpha|\cdot|D_\alpha| = \frac{N^2}{\Delta}$ candidate pairs in $B_\alpha \times D_\alpha$. The probability that a random pair is critical is $\rho = \frac{L}{N^2/\Delta}$ (in $\D_0$). 
            Even if the algorithm successfully guesses a critical pair $(b,d)$ (which requires $\Omega(1/\rho)$ trials), and even if it knows the specific group $C_h$ assigned to this pair, it must still find the specific witness $c_m \in C_h$. Since $|C_h| = \frac{\OUT}{N}$, finding the witness requires $\Omega(\frac{\OUT}{N})$ queries.
            The total complexity is:
            $\Omega\left(\frac{1}{\rho} \cdot \frac{\OUT}{N}\right) = \Omega\left(\frac{N^2}{L \Delta} \cdot \frac{\OUT}{N}\right).$
            Substituting $\OUT = \Theta(L \Delta)$, the cost is $\Omega(N)$.
    
            \item \textbf{sample tuple:} The critical tuples in $R_2$ (those in $B_\alpha \times C_\alpha$) constitute a fraction of $\frac{L}{2N}$ of the relation size. Distinguishing the bias between $L$ and $\theta L$ requires $\Omega(\frac{N}{L})$ samples. Since we chose $L = \Theta(1)$, this cost is $\Omega(N)$.
            \item \textbf{compute degree or access neighbor:} As in Theorem~\ref{thm:count-chain-lb-3}, the marginal degree distributions of $B_\alpha$ and $D_\alpha$ are masked by the noise tuples. Accessing a neighbor is statistically equivalent to sampling, yielding the same $\Omega(N)$ lower bound. \hfill \qedsymbol\end{itemize}\renewcommand{\qedsymbol}{}
        \end{proof}

     \subsection{Lower Bounds for Uniform Sampling over Chain Query}
     \label{sec:lb-chain-sample}

    We establish the lower bound for uniform sampling chain queries via the reduction from approximate counting. The reduction in Proposition~\ref{prop:reduction} requires that the sampling algorithm returns a query result with probability $\frac{\OUT}{W}$ for some parameter $W$, so the lower bounds derived from this reduction apply to this class of sampling algorithms. We call such algorithms \emph{$W$-uniform sampling} algorithms: given an index built during preprocessing, each invocation returns a uniform sample from $\Q(\R)$ with probability $\frac{\OUT}{W}$, and always returns the value of $W$. Note that all sampling algorithms presented in this paper fall into this class, as do all prior sampling algorithms for join-project queries that we are aware of.

    \begin{theorem}
    \label{thm:chain-sample-lb-3}
        For $\chain$ with $k =3$, given parameters $N, \OUT$ 
        , no $W$-uniform sampling algorithm can preprocess an instance $\R$ of input size $N$ and output size $\OUT$ in
        $O(\min\{N^{1-\gamma}, (\frac{N^{2}}{\OUT})^{1-\gamma}\})$ time and generate a sample in $O(\min\{N^{1-\gamma}, (\frac{N^{2}}{\OUT})^{1-\gamma}\})$ time, for any arbitrary small constant $\gamma > 0$.
    \end{theorem}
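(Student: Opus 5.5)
We argue by contradiction, reducing approximate counting to $W$-uniform sampling via Proposition~\ref{prop:reduction}, and then invoking the counting lower bound of Theorem~\ref{thm:count-chain-lb-3}. Write $T^* = \min\{N, \frac{N^2}{\OUT}\}$. Suppose some $W$-uniform sampler $\mathcal{A}$ preprocesses any instance of input size $N$ and output size $\OUT$ in $O((T^*)^{1-\gamma})$ time, and then generates each sample in $O((T^*)^{1-\gamma})$ time.

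The counting algorithm runs the preprocessing once. It then calls $\mathcal{A}$ repeatedly and returns $\frac{X\cdot W}{Y}$, which is a constant-approximation with constant probability once $Y = \Omega(\frac{W}{\OUT})$ by Proposition~\ref{prop:reduction}. The main obstacle is that $\frac{W}{\OUT}$ is not controlled a priori, so the number of invocations could be large. We handle this in two steps.

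First, we make the invocation count independent of the unknown $\OUT$. Since $\mathcal{A}$ reports $W$, we can run a geometric search over guesses $\lambda$ for $\OUT$, exactly as in Algorithm~\ref{alg:approximate-count-with-threshold}. Each guess uses $O(\frac{W}{\lambda})$ invocations, stopping when the estimate is at least $\lambda$. This keeps the expected number of invocations at $O(\frac{W}{\OUT}\log N)$.

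Second, we bound $\frac{W}{\OUT}$ on the hard instances. We restrict attention to the distributions $\D_0,\D_1$ from the proof of Theorem~\ref{thm:count-chain-lb-3}, with $L=O(1)$ and $\Delta=\Theta(\OUT)$. We argue that a $W$-uniform sampler meeting the stated time bound has a success probability $\frac{\OUT}{W}$ that cannot be too small on these instances. One route: by the reduction, any cost beyond $O(T^{*(1-\gamma)})$ per useful sample would itself be charged to the total time. So we fix the budget $Y = \Theta(\log N)$ successful samples (equivalently, an $O(\mathrm{polylog}\,N)$ number of trials in the regime of interest) and abort after $\widetilde{O}(T^{*(1-\gamma)})$ total time. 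If $\frac{W}{\OUT}$ were superpolylogarithmic, the sampler would rarely succeed, and we instead rely on the sampler's guarantee, treating $W$ as a known normalizer that yields a correct estimator. I expect this point to require the most care. The cleanest fix is to observe that the lower-bound statement quantifies the sample-generation time as the time to return a uniform sample (including repeated rejections). Under that reading, the expected number of raw invocations before success, $\frac{W}{\OUT}$, is already absorbed into the per-sample budget, so $O(\log N)$ samples suffice for a constant-factor approximation via Proposition~\ref{prop:reduction}.

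Putting it together, the total time is $O((T^*)^{1-\gamma}) + O(\log N)\cdot O((T^*)^{1-\gamma}) = \widetilde{O}((T^*)^{1-\gamma})$. For sufficiently large $N$ (and $\OUT\le N$ or the $\frac{N^2}{\OUT}$ regime, respectively), this is $o(\min\{N,\frac{N^2}{\OUT}\})$, since the polylogarithmic factor is dominated by $(T^*)^{\gamma}$. This gives a constant-approximation of $|\chain(\R)|$ with constant probability within this time, contradicting Theorem~\ref{thm:count-chain-lb-3}. The same argument applies to any instance parameters $N,\OUT$ for which that theorem's hard distributions are defined.
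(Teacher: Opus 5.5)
Your final argument is essentially the paper's proof: run the preprocessing once, draw $O(\log N)$ successful samples, read the per-sample time as already including the expected $W/\OUT$ rejected invocations, and apply Proposition~\ref{prop:reduction} to contradict Theorem~\ref{thm:count-chain-lb-3}; the paper likewise invokes the sampler $\O(1)$ times under that same implicit reading, and $O(1)$ samples already suffice for a constant-factor approximation with constant probability, which also removes the polylogarithmic loss you had to absorb. The intermediate detour (bounding $W/\OUT$ on $\D_0,\D_1$, or using $W$ as a normalizer when the sampler ``rarely succeeds'') is unnecessary and would not work as written, because with $o(W/\OUT)$ trials the estimator $XW/Y$ is not concentrated, so you should drop it.
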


    \begin{proof}
        Following Proposition~\ref{prop:reduction}, Theorem~\ref{thm:count-chain-lb-3} implies a lower bound on $W$-uniform sampling over chain queries. Specifically, if there exists a $W$-uniform sampling algorithm with preprocessing time $O(\min\{N^{1-\gamma}, (\frac{N^{2}}{\OUT})^{1-\gamma}\})$ that can generate a sample in $O(\min\{N^{1-\gamma}, (\frac{N^{2}}{\OUT})^{1-\gamma}\})$, we construct an approximate counting algorithm by invoking the sampling procedure $\O(1)$ times. The total runtime would be $O(\min\{N^{1-\gamma}, (\frac{N^{2}}{\OUT})^{1-\gamma}\})$, which contradicts the lower bound in Theorem~\ref{thm:count-chain-lb-3}.
    \end{proof}


    \begin{theorem}
    \label{thm:chain-sample-lb-4}
        For $\chain$ with $k \ge 4$, no $W$-uniform sampling algorithm can preprocess an instance $\R$ of input size $N$ in $O(N^{1-\gamma})$ time and generate a sample from $\chain(\R)$ in $O(N^{1-\gamma})$ time, for any arbitrary small constant $\gamma > 0$.
    \end{theorem}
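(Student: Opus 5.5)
We argue by contradiction, mirroring the proof of Theorem~\ref{thm:chain-sample-lb-3} but with Theorem~\ref{thm:count-chain-lb-4} as the source of hardness. Suppose some $W$-uniform sampling algorithm $\mathcal{A}$ for $\chain$ with $k\ge 4$ preprocesses in $O(N^{1-\gamma})$ time and returns each sample in $O(N^{1-\gamma})$ time. We will turn $\mathcal{A}$ into a constant-approximation counting algorithm running in $o(N)$ time, which is impossible.

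\textbf{Step 1: from $k=4$ to all $k\ge 4$.} First we extend the $\Omega(N)$ counting lower bound of Theorem~\ref{thm:count-chain-lb-4} to every $k\ge 4$ by padding. Take a hard $4$-chain instance and replace the final relation $R_4(D,E)$ by $R_4$ followed by $k-4$ one-to-one relations $R_j(A_j,A_{j+1})=\{(x,x'):x\in\adom(A_j)\}$. This preserves $\OUT$ and raises $N$ by at most a constant factor, since each identity relation has at most $|R_4|\le N$ tuples. The property-testing primitives on the padded relations are answerable locally with no information about the hidden $R_2,R_3$. Consequently the distinguishing argument between $\D_0$ and $\D_1$ carries over, and constant-approximation counting for $\chain$ needs $\Omega(N)$ time for every $k\ge 4$.

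\textbf{Step 2: counting from sampling.} Run the preprocessing of $\mathcal{A}$ once. Then invoke the sampler $Y$ times and output $X\cdot W/Y$, as in Proposition~\ref{prop:reduction}. Here we use the defining property that each invocation of a $W$-uniform sampler reports $W$, so the estimator is computable. For a constant approximation with constant probability, Proposition~\ref{prop:reduction} requires $Y=\Omega(W/\OUT)$.

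\textbf{Step 3: bounding $Y$ (main obstacle).} If $W/\OUT$ were $\O(1)$, the total cost would be $O(N^{1-\gamma})\cdot\O(1)=o(N)$, giving the contradiction with Step~1. The main obstacle is that $W/\OUT$ is not obviously bounded. I would resolve it by interpreting the sample time as the cost of producing a \emph{successful} sample, which is how the sampling time is measured in Theorem~\ref{thm:sample-matrix-ub}. Under that reading, $W/\OUT$ failed invocations are already included in the $O(N^{1-\gamma})$ budget per success, so each success costs $O(N^{1-\gamma})$. We then need only $O(1/\epsilon^2)=O(1)$ successful samples for a constant approximation, by the Chebyshev argument of Lemma~\ref{lem:approximate-count-with-threshold}.

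To make this rigorous, I would use the geometric-guessing wrapper of Algorithm~\ref{alg:approximate-count-with-threshold}: stop after a constant number of successes and estimate $\OUT$ as $W$ times the empirical success rate. The time budget then absorbs the number of trials automatically, and standard amplification with $O(\log(1/\delta))$ repetitions boosts the success probability to a constant. This yields a counting algorithm with total time $\O(N^{1-\gamma})=o(N)$, contradicting the extended lower bound of Step~1.
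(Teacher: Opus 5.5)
Your proposal is correct and takes the same route as the paper: combine Proposition~\ref{prop:reduction} with the $\Omega(N)$ counting lower bound of Theorem~\ref{thm:count-chain-lb-4}, using only $\O(1)$ successful samples, so the total cost is $\O(N^{1-\gamma})$. Two of your steps make precise what the paper's two-line proof leaves implicit: the identity-relation padding that extends the $k=4$ bound to $k>4$, and the explicit handling of the $W/\OUT$ factor by reading the sampling time as the time per successful sample, as in Theorem~\ref{thm:sample-matrix-ub}.
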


    \begin{proof}
        Following Proposition~\ref{prop:reduction}, Theorem~\ref{thm:count-chain-lb-4} automatically implies a lower bound on $W$-uniform sampling over chain queries. Specifically, if there exists a $W$-uniform sampling algorithm with preprocessing time $O(N^{1-\gamma})$ that can generate a sample in $O(N^{1-\gamma})$ time, we can construct an approximate counting algorithm by invoking the sampling procedure $\O(1)$ times. The total running time would be $\O(N^{1-\gamma})$, which contradicts the $\Omega(N)$ lower bound established in Theorem~\ref{thm:count-chain-lb-4}.
    \end{proof}

    Note that, unlike the combinatorial lower bounds for matrix and star queries, this result relies on the information-theoretic hardness of distinguishing distributions (as used in Theorem~\ref{thm:count-chain-lb-3} and Theorem~\ref{thm:count-chain-lb-4}) and thus applies to any $W$-uniform sampling algorithm, not just combinatorial ones.

       \section{Conclusion}
    \label{sec:conclusion}
    This work initiates a rigorous study into the fine-grained complexity of uniform sampling and approximate counting for join-project queries. We established the first asymptotically optimal algorithms for fundamental query classes -- specifically matrix, star, and chain queries -- marking a pivotal step towards output-optimality for general join-project queries. Our findings open several avenues for future research:
    \begin{itemize}[leftmargin=*]
    \item {\em A Unified Parameterized Framework:} A primary open problem is to synthesize the techniques developed for star and chain queries into a unified algorithm capable of handling arbitrary join-project queries. Note that the techniques introduced for star queries can be easily generalized to all join-project queries (see Appendix \ref{appendix:generalization}), but they are already not optimal for even for the length-4 chain query. Furthermore, a rigorous parameterized complexity analysis is required to characterize algorithmic performance relative to both input and output sizes. 

    \item {\em Support Selection Predicate:} So far, our algorithms have investigated only join and projection operators. A natural extension is to incorporate selection predicates to fully support general conjunctive queries and to determine the impact of selectivity on sampling complexity. Very recently, \cite{huang2026acyclic} investigated the sampling indices for acyclic joins when the point selection predicates exist. But it is still open for more complicated predicates.
    
    \item {\em Beyond Combinatorial Limits:} While our results define the boundaries of combinatorial approaches in some scenarios, the application of algebraic techniques remains largely unexplored. A compelling frontier is to determine whether methods such as fast matrix multiplication can circumvent existing combinatorial barriers. Recent efforts have shown that fast matrix multiplication can be used to accelerate counting subgraph patterns \cite{alon1995color,DBLP:journals/ipl/KloksKM00, tvetek2022approximate, censor2024fast,censor2025output,eisenbrand2004complexity, DBLP:conf/stoc/DalirrooyfardMW24,DBLP:conf/stoc/CurticapeanDM17, DBLP:journals/siamdm/KowalukLL13}, , such as triangles, cycles, and cliques. But it's still unknown how much these techniques can be extended to general join queries or even join-project queries. 
\end{itemize}
    \section*{Acknowledgement}
       Xiao Hu was supported by the Natural Sciences and Engineering Research Council of Canada (NSERC) Discovery Grant. Jinchao Huang was supported by the Overseas Research Attachment Program of The Chinese University of Hong Kong during his research visit to the University of Waterloo. We thank Sepehr Assadi for invaluable discussions and feedback on this problem, as well as the anonymous reviewers for their constructive comments. Jinchao Huang also expresses gratitude to Sibo Wang for supporting his overseas research at the University of Waterloo.
       
    \bibliographystyle{abbrvnat}
    \bibliography{paper}

    \received{December 2025} \received[accepted]{February 2026}

    \appendix
    \section{Extension to General Join-Project Queries}
    \label{appendix:generalization}

    \begin{algorithm}[t]
        \caption{\textsc{SampleJoinProject}$(\R)$}
        \label{alg:sample-join-project}
        \KwIn{An instance $\R$ for the join-project query $\Q=(\V,\E,\y)$, with some auxiliary indices built for $\R$ if needed.}
        \KwOut{A uniform sample of the query result $\Q(\R)$.}
        \While{\true}{
        $s\gets $\textsc{JoinSample}$(\Q,\R)$ \cite{zhao2018random,deng2023join, kim2023guaranteeing}\;
        \lIf{$s \neq \perp$}{\textbf{break}}
        }
        \lIf{\textup{\textsc{AcceptJoinProject}}$(\Q,\R,s) = \true$}{\Return $\pi_{\y} s$; \qquad \qquad \qquad $\blacktriangleright$ \textup{Algorithm~\ref{alg:accept-join-project}}}
        \lElse{\Return $\perp$}
    \end{algorithm}

    Our framework introduced for matrix and star queries can be easily extended to arbitrary join-project queries. We first recall the AGM bound \cite{atserias2008size}. For a join query $q=(\V,\E)$, let $\mathbf{N}: \E \to \mathbb{Z}^+$ be the size function, and $w$ be a fractional edge covering such that $\sum_{e: x \in e} w(e)\ge 1$ for each attribute $A \in \V$. The AGM bound \cite{atserias2008size} states 
    \begin{equation}
        \label{eq:AGM}
        \max_{\R:|R_e| \le \mathbf{N}_e, \forall e\in \E} |\Q_\Join(\R)|\le \prod_{e \in \E} \mathbf{N}_e^{w(e)}
    \end{equation}
    We also use $\textsf{AGM}(q, \mathbf{N})$ to denote the smallest upper bound for $q$ under the input function $\mathbf{N}$, i.e., the smallest right-hand-side of E.q. (\ref{eq:AGM}) over all possible fractional edge coverings.
    As a special case, when $\mathbf{N}_e = N$ for each $e \in \E$, we have $\textsf{AGM}(q, \mathbf{N}) = N^{\sum_{e \in \E} w(e)}$. The tightest upper bound would be the one minimizing $\sum_{e \in \E} w(e)$, i.e., $\rho^*(q)$. For a join-project query $\Q=(\V,\E,\y)$, let $\Q_{\Join} = (\V,\E)$ be the underlying full join query of $\Q$. Given an instance $\R$ for $\Q$, let $\OUT_\Join = |\Q_\Join(\R)|$ be the number of full join results. From the AGM bound, $\OUT_\Join \le N^{\rho^*(\Q_{\Join})}$.
    
    \paragraph{Algorithm} As described in Algorithm \ref{alg:sample-join-project}, we first get a sample from the full join result using the existing sampling algorithm \cite{deng2023join, kim2023guaranteeing}, say $s$ and then accept it with probability $\frac{1}{\deg(\pi_\y s)}$, where $\deg(\pi_\y s)$ is the number of full join results that $\pi_\y s$ appears in. Following exactly the same analysis, we can show that each query result $t \in \Q(\R)$ is sampled with probability $$\sum_{s \in \Q_{\Join}(\R): \pi_{\y} s = t} \frac{1}{|\Q_{\Join}(\R)|} \cdot \frac{1}{\deg(t)} = \frac{1}{|\Q_{\Join}(\R)|},$$
    it is therefore a uniform sample of the query result. 

    To implement this algorithm, we will need a primitive {\sc JoinSample} proposed by prior work \cite{zhao2018random,deng2023join, kim2023guaranteeing}, which takes as input a full join query $\Q_\Join$ and an instance $\R$, and always returns a uniform sample from the full join result $\Q_\Join(\R)$ with probability $1$ for acyclic joins and $\frac{|\Q_\Join(\R)|}{\textrm{\upshape AGM}(\Q_\Join, \{|R_e|: e \in \E\})
    }$ for cyclic joins. We will repeat this primitive until a valid join result is returned.
    
    Below, we focus on the {\sc AcceptJoinProject} primitive. Suppose $s$ is a full join result returned by the sampling procedure {\sc JoinSample}. Let $\Q_{\bar{\y}}$ be the full join query induced by non-output attributes. 
    Let $\R_s$ be the induced sub-instance of $\R$ that can be joined with $s$ on output attributes. Note that any join result of $\Q_{\bar{\y}}(\R_s)$ will form with $t$ as a full join result of $\Q_\Join(\R)$. 
    The following steps follow Section \ref{sec:sample-matrix}. As computing $\deg(t)$ is too expensive, we simulate this probability using Proposition \ref{prop:simulation}. 
    
    To apply this proposition, we first identify $S$ as the sampling universe of $\Q_{\bar{\y}}$ on $\R_s$ defined by the {\sc JoinSample} primitive, where $|S| = \textrm{AGM}(\Q_{\bar{\y}}, \{|R_e|: e \in \E\})$. Note that $\pi_{\V-\y} s$ is a valid join result of $\Q_{\bar{\y}}(\R_s)$ (a ``success'') because $s$ was sampled from the full join result $\Q_\Join(\R)$. We construct the random variable $Y$ by sampling from the \emph{other} elements in $S$. Specifically, we sample without replacement from $S - \{\pi_{\V-\y} s\}$ and count the number of ``failures'' ($F$) encountered before finding another valid join result in $\Q_{\bar{\y}}(\R_s)$. If no other valid join value exists (i.e., $\pi_{\V-\y} s$ is unique), we continue until $S - \{\pi_{\V-\y} s\}$ is exhausted. We define $Y = F+1$. It iterates through random samples from $S$ by invoking the {\sc JoinSample} primitive. The expectation of this specific construction is $\expt[Y] = \frac{|S|}{\deg(t)}$. Finally, we generate a random integer $X$ uniformly from $[|S|]$. If $X \le F+1$, we return \true{}; otherwise, we return \false{}. By Proposition~\ref{prop:simulation}, the acceptance probability is exactly $\frac{1}{|S|} \cdot \frac{|S|}{\deg(t)} = \frac{1}{\deg(t)}$. Additionally, if the residual query $\Q_{\bar{\y}}$ is acyclic, the sampling universe can be easily computed since every acyclic join has an optimal fractional edge covering that is also integral \cite{hu2021cover}. Hence, the Cartesian product of those selected relations (i.e., assigned weight 1 in the optimal fractional edge covering) forms the sampling universe. Moreover, no additional auxiliary indices are required, as relations in $\R_s$ are well indexed.

    \begin{algorithm}[t]
        \caption{\textsc{AcceptJoinProject-I}$(\Q, \R, s)$}
        \label{alg:accept-join-project}
        \KwIn{A join-project query $\Q$, an instance $\R$ for the join-project query $\Q=(\V,\E,\y)$, and a full join result $s \in \Q_\Join(\R)$ with $t = \pi_{\y} s$.}
        \KwOut{A Boolean value indicating acceptance.}
        $\Q_{\bar{\y}} \gets (\V-\y, \{e -\y: e\in \E\},\V-\y)$\;
        $\R_s \gets \{\pi_{e - \y} \left(R_e \ltimes (\pi_\y s)\right): e \in \E\}$\;
        $F \gets 0$\;
        \tcp{For acyclic queries, line \ref{step:setup} can be skipped, and line \ref{sample-universe} is the Cartesian product of a subset of relations such that every attribute of $\V-\y$ appears in at least one selected relation}
        Initialize the auxiliary indices for {\sc JoinSample} if needed\;\label{step:setup}
        $S \gets $ the sampling universe defined for $\Q_{\bar{\y}}$ on $\R_s$  \cite{deng2023join}\;\label{sample-universe}
        \tcp{We sample from $S$ using uniform sampling without replacement}
        \While{at least one element in $S$ is not visited yet}{
            $s' \leftarrow$ a random sample from non-visited elements in $S$ using $\textsc{JoinSample}(\Q_{\bar{\y}}, \R_s)$\;
        \lIf{$s' = \pi_{\V -\y} s$}{\textbf{continue}}
        \lIf{$s' \notin \Q_{\bar{\y}}(\R_s)$}{$F \gets F + 1$}
        \lElse{\textbf{break}}
        }
        \lIf{$\randint(1, \textrm{\upshape AGM}(\Q_{\bar{\y}}, \{|R_e \ltimes (\pi_\y s)|: e\in \E\})) \le F+1$}{\Return \true}
        \Return \false\;
    \end{algorithm}

      \begin{algorithm}[t]
        \caption{\textsc{AcceptJoinProject-II}$(\Q, \R, s)$}
        \label{alg:accept-join-project-alternative}
        \KwIn{A join-project query $\Q$, an instance $\R$ for the join-project query $\Q=(\V,\E,\y)$, and a full join result $s \in \Q_\Join(\R)$ with $t = \pi_{\y} s$.}
        \KwOut{A Boolean value indicating acceptance.}
        $\Q_{\bar{\y}} \gets (\V-\y, \{e -\y: e\in \E\},\V-\y)$\;
        $\R_s \gets \{\pi_{e - \y} \left(R_e \ltimes (\pi_\y s)\right): e \in \E\}$\;
        Compute $|\Q_{\bar{\y}}(\R_s)|$\;
        \Return \true with probability $\frac{1}{|\Q_{\bar{\y}}(\R_s)|}$ and \false with probability $1-\frac{1}{|\Q_{\bar{\y}}(\R_s)|}$\;
    \end{algorithm}

    \paragraph{Analysis} 
        We now analyze the time complexity of Algorithm~\ref{alg:sample-join-project}. We also distinguish our analysis for acyclic and cyclic queries separately, since there could be much more efficient implementation of primitives for acyclic queries. 
        
        \underline{\emph{Analysis for acyclic queries.}} The preprocessing step takes $O(N)$ time. After preprocessing, each invocation of {\sc JoinSample} takes $\O(1)$ time and always returns a successful sample. Hence the while-loop at 
        Lines 1-3 
        takes $O(1)$ time. The invocation of Algorithm~\ref{alg:accept-join-project} at
        Line 4
        takes $O(N+F+1)$ time. From our analysis above, the expectation $\expt[F+1] = \frac{|S|}{\deg(s)} = \frac{\textrm{\upshape AGM}(\Q_{\bar{\y}}, \{|R_e \ltimes (\pi_\y s)|: e\in \E\})}{\deg(s)}$. Let $\Q_\y = (\V, \E \cup \{\y\})$ be the full join by adding an additional relation that contains all output attributes of $\y$. 
        So, following the decomposition lemma in \cite{ngo2014skew}:
        \begin{align*}
            \sum_{t\in \Q(\R)} \textrm{\upshape AGM}(\Q_{\bar{\y}}, \{|R_e \ltimes t|: e\in \E\}) \le N^{\sum_{e\in \E} w(e)} \cdot \OUT^{w(\y)}
        \end{align*}
        for any fractional edge covering $w$ of $\Q_\y$. Hence, Algorithm~\ref{alg:sample-join-project} takes (omitting the big-$O$):
        \begin{align*}
        \expt[\text{time per invocation}] &= 1 + \sum_{t\in \Q(\R)} \frac{\deg(t)}{\OUT_\Join} \cdot \frac{\textrm{\upshape AGM}(\Q_{\bar{\y}}, \{|R_e \ltimes t|: e\in \E\})}{\deg(t)}\\
        & \le 1 + \frac{\sum_{t\in \Q(\R)} \textrm{\upshape AGM}(\Q_{\bar{\y}}, \{|R_e \ltimes t|: e\in \E\})}{\OUT_\Join} \\
        & \le 
           1 + \frac{N^{\sum_{e\in \E} w(e)} \cdot \OUT^{w(\y)}}{\OUT_\Join}
        \end{align*}
        where $w$ is an arbitrary fractional edge covering of $\Q_\y$. 
        
        As an alternative, Algorithm \ref{alg:accept-join-project-alternative} only takes $O(N)$ time to compute the join size for acyclic queries.

        Algorithm~\ref{alg:sample-join-project} succeeds with probability $\frac{\OUT}{\OUT_\Join}$, so the expected number of invocations before getting one successful sample is $\frac{\OUT_\Join}{\OUT}$. The total expected cost of exploiting the power of two strategies is $O\left(\min\left\{
        \frac{\OUT_\Join}{\OUT}
        + N^{\sum_{e\in \E} w(e)} \cdot \OUT^{w(\y)-1}, \frac{N \cdot \OUT_\Join}{\OUT}\right\}\right)$.
        
        \underline{\emph{Analysis for cyclic queries.}} The preprocessing step takes $O(N)$ time. After preprocessing, each invocation of {\sc JoinSample} takes $\O(1)$ time. As {\sc JoinSample} returns a successful sample with probability $\frac{\OUT_\Join}{N^{\rho^*(\Q_\Join)}}$ for cyclic queries, hence the while-loop at 
        Lines 1-3 
        takes $O(\frac{N^{\rho^*(\Q_\Join)}}{\OUT_\Join})$ time for cyclic queries. The invocation of Algorithm~\ref{alg:accept-join-project} at
        Line 4
        takes $O(F+1)$ time. From our analysis above, the expectation $\expt[F+1] = \frac{|S|}{\deg(s)} = \frac{\textrm{\upshape AGM}(\Q_{\bar{\y}}, \{|R_e \ltimes (\pi_\y s)|: e\in \E\})}{\deg(s)}$. Hence, Algorithm~\ref{alg:sample-join-project} takes (omitting the big-$O$):
        \begin{align*}
        \expt[\text{time per invocation}] &= \frac{N^{\rho^*(\Q_\Join)}}{\OUT_\Join} + N + \sum_{t\in \Q(\R)} \frac{\deg(t)}{\OUT_\Join} \cdot \frac{\textrm{\upshape AGM}(\Q_{\bar{\y}}, \{|R_e \ltimes t|: e\in \E\})}{\deg(t)}\\
        & \le \frac{N^{\rho^*(\Q_\Join)}}{\OUT_\Join} + N + \frac{\sum_{t\in \Q(\R)} \textrm{\upshape AGM}(\Q_{\bar{\y}}, \{|R_e \ltimes t|: e\in \E\})}{\OUT_\Join}\\
        & \le \frac{N^{\rho^*(\Q_\Join)}}{\OUT_\Join} + N + \frac{N^{\sum_{e\in \E} w(e)} \cdot \OUT^{w(\y)}}{\OUT_\Join}
        \end{align*}
        expected time for cyclic queries. 
        Algorithm~\ref{alg:sample-join-project} succeeds with probability $\frac{\OUT}{\OUT_\Join}$, so the expected number of invocations before getting one successful sample is $\frac{\OUT_\Join}{\OUT}$. The total expected cost is:
        \begin{align*}
        O\left(\frac{N^{\rho^*(\Q_\Join)}}{\OUT}
        + \frac{N\cdot \OUT_\Join}{\OUT} + N^{\sum_{e\in \E} w(e)} \cdot \OUT^{w(\y)-1}\right)
        \end{align*}

        Putting everything together, we obtain:
        \begin{theorem}
        \label{the:general-acyclic-join-project}
        For an acyclic join-project query $\Q=(\V, \E,\y)$, there is an algorithm that can take as input an arbitrary instance $\R$ of input size $N$, output size $\OUT$ and full join size $\OUT_\Join$, and builds an index in $O(N)$ time such that a uniform sample from the query result $\Q(\R)$ can be returned in $$O\left(
       \min\left\{
        \frac{\OUT_\Join}{\OUT}
        + N^{\sum_{e\in \E} w(e)} \cdot \OUT^{w(\y)-1}, \ \frac{N \cdot \OUT_\Join}{\OUT}\right\}\right)$$ expected time, where $w$ is the fractional edge covering of $\Q_\y = (\V, \E \cup \{\y\})$.
    \end{theorem}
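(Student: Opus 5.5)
The plan is to run Algorithm~\ref{alg:sample-join-project} in two variants: one with \textsc{AcceptJoinProject-I} (Algorithm~\ref{alg:accept-join-project}) and one with \textsc{AcceptJoinProject-II} (Algorithm~\ref{alg:accept-join-project-alternative}). I would then take the better of the two bounds; since both are $W$-uniform with the same $W=\OUT_\Join$, we can interleave them step by step and stop at whichever finishes first, which costs only a constant factor.

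\textbf{Preprocessing and Step 1.} Because $\Q_\Join$ is acyclic, the index of \cite{zhao2018random} is built in $O(N)$ time. After that, \textsc{JoinSample} returns a uniform full join result $s$ in $O(1)$ time without rejection. A fixed $t\in\Q(\R)$ is therefore proposed with probability $\deg(t)/\OUT_\Join$.

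\textbf{Correctness.} For variant~I, I argue exactly as in Section~\ref{sec:sampling-matrix-analysis}, with the universe $S$ taken to be the Cartesian product of the relations of $\R_s$ selected by an integral optimal edge cover of the acyclic residual query $\Q_{\bar{\y}}$ \cite{hu2021cover}. Sampling without replacement from $S-\{\pi_{\V-\y}s\}$ gives $F\sim\texttt{NHG}(1,|S|-1,\deg(t)-1)$, so $\expt[F+1]=|S|/\deg(t)$. Proposition~\ref{prop:simulation} then gives acceptance probability $1/\deg(t)$. For variant~II the acceptance probability is $1/|\Q_{\bar{\y}}(\R_s)|=1/\deg(t)$ directly. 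In both variants each $t$ is output with probability $1/\OUT_\Join$ per round, so the output is uniform and a round succeeds with probability $\OUT/\OUT_\Join$.

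\textbf{Cost of variant II.} Computing $|\Q_{\bar{\y}}(\R_s)|$ by Yannakakis-style counting on the acyclic residual query takes $O(N)$ per round. Over the expected $\OUT_\Join/\OUT$ rounds this gives $O(N\cdot\OUT_\Join/\OUT)$.

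\textbf{Cost of variant I.} The per-round cost is $O(1+F)$. I need to confirm that building $\R_s$ and the product universe can be done lazily with the existing hash indices, so that no $O(N)$ term appears; the semijoins $R_e\ltimes \pi_\y s$ are accessible through the degree and neighbor primitives. The expected cost of one round is then
\[
1+\sum_{t}\frac{\deg(t)}{\OUT_\Join}\cdot\frac{\textrm{AGM}(\Q_{\bar{\y}},\{|R_e\ltimes t|\})}{\deg(t)}.
\]
Applying the decomposition lemma of \cite{ngo2014skew} to $\Q_\y=(\V,\E\cup\{\y\})$ bounds this by $1+N^{\sum_e w(e)}\OUT^{w(\y)}/\OUT_\Join$. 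Multiplying by $\OUT_\Join/\OUT$ gives $\OUT_\Join/\OUT+N^{\sum_e w(e)}\OUT^{w(\y)-1}$.

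\textbf{Main obstacle.} I expect the main obstacle to be the decomposition step. One has to check that summing the per-$t$ AGM bounds over only the realized outputs $t\in\Q(\R)$, while treating $\y$ as a relation of size $\OUT$, is dominated by the AGM bound of $\Q_\y$. My plan is to use H\"older's inequality over the $t$'s, as in the footnote to the Cauchy--Schwarz argument for the matrix query. A secondary issue is ensuring that variant~I really avoids an $O(N)$ per-round setup, which is why it is listed as skippable for acyclic queries.
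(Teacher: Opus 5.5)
Your route is the paper's: the same two acceptance routines, the $\texttt{NHG}$ computation with Proposition~\ref{prop:simulation}, the decomposition lemma for variant~I, and Yannakakis counting for variant~II. You are also right that the per-round setup must be $O(1)$; the paper states $O(N+F+1)$ for that step and then silently drops the $N$.

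\textbf{Racing does not preserve uniformity.} Equal $W=\OUT_\Join$ only makes each run's output marginally uniform. ``Return whichever finishes first'' conditions on running time, and running time is correlated with the output: in variant~I the accepting round costs $F+1$, whose law depends on $|S_t|$ and $\deg(t)$. For example, suppose variant~I returns $t_1$ after $O(1)$ steps and $t_2$ after $L\gg N$ steps, each with probability $\frac12$, while variant~II returns a uniform sample after $\Theta(N)$ steps. Then the race outputs $t_1$ with probability $\frac34$. Since $\OUT$ is unknown, you cannot pick the variant in advance either. What works is switching inside a round, once $t$ is fixed:
\begin{itemize}
\item Run the without-replacement scan for at most $m=N$ draws. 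If it stops, decide by $U\le F+1$ as before.
\item Otherwise compute $\deg(t)=|\Q_{\bar{\y}}(\R_s)|$ exactly in $O(N)$, and accept with probability $\expt[F+1\mid F\ge m]/|S_t|=\frac{1}{|S_t|}\bigl(m+\frac{|S_t|-m}{\deg(t)}\bigr)$. This formula holds because after $m$ failures the remaining scan is $\texttt{NHG}(1,|S_t|-1-m,\deg(t)-1)$.
\end{itemize}
The overall acceptance probability is still $\expt[F+1]/|S_t|=1/\deg(t)$, and a round costs $O(\min\{F+1,N\})$. Then $\sum_t\frac{\deg(t)}{\OUT_\Join}\min\{\frac{|S_t|}{\deg(t)},N\}\le\min\{\frac{1}{\OUT_\Join}\sum_t|S_t|,\,N\}$ gives the stated minimum.

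\textbf{The universe is not of AGM size (a gap the paper's proof shares).} You sample from the product of the relations in an integral cover of $\Q_{\bar{\y}}$, but then use $\expt[F+1]=\textrm{AGM}(\Q_{\bar{\y}},\{|R_e\ltimes t|\})/\deg(t)$. Integrality of optimal covers for acyclic joins concerns the unweighted number $\rho^*$. With the data-dependent sizes $|R_e\ltimes t|$, the optimum can be strictly fractional even for acyclic $\Q_{\bar{\y}}$, and the gap can be polynomial. Concretely:
\begin{itemize}
\item Take the acyclic query $e_1=\{X_1,A,B\}$, $e_2=\{X_2,B,C\}$, $e_3=\{X_3,A,C\}$, $e_4=\{A,B,C\}$, with $\y=\{X_1,X_2,X_3\}$.
\item Let $R_{e_i}=X_i\times P_i$ for $i\le 3$ with $|X_i|=|P_i|=\sqrt N/2$, and $|R_{e_4}|=N/4$, arranged so that $P_1,P_2,P_3,R_{e_4}$ have exactly one common join result.
\item Every $t$ then has $\deg(t)=1$ and $|S_t|=\min\{|R_{e_4}|,|P_i||P_j|\}=N/4$, so both routines cost $\Theta(N)$ per sample.
\item The cover $w(e_1)=w(e_2)=w(e_3)=w(\y)=\frac12$, $w(e_4)=0$ makes the claimed bound $O(1+N^{3/2}\OUT^{-1/2})=O(N^{3/4})$.
\end{itemize}
The decomposition lemma bounds $\sum_t\prod_e|R_e\ltimes t|^{w(e)}$. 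That sum dominates $\sum_t|S_t|$ when $w$ is $\{0,1\}$-valued on $\E$, but not in general. So either restrict the theorem to such $w$, or replace the product universe by a residual sampler whose universe genuinely has the fractional AGM size.
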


    \paragraph{Remark 1} For an acyclic join-project query $\Q = (\V,\E,\y)$, let $q= (\V,\E)$ be the underlying full join query. Theorem \ref{the:general-acyclic-join-project} can always improve the state-of-the-art method \cite{kim2023guaranteeing, deng2023join, chen2020random} since $\OUT_\Join \le N^{\rho^*(\Q_\Join)}$ and $\min_w N^{\sum_{e\in \E} w(e)} \cdot \OUT^{w(\y)} \le N^{\rho^*(\Q_\Join)}$, since $w$ degenerates to any fractional edge covering for $q$ together with assigning $w(\y) = 0$.  Consider an example query $\Q$ with $\V=\{A_1,A_2,B,C,D\}$, $\E=\{\{A_1,B\}, \{A_2,B\}, \{B,C\}, \{C,D\}\}$ and $\y= \{A_1,A_2,D\}$. On $\Q$, we observe that $\OUT_\Join \le N \cdot \OUT$ and $\min_w N^{\sum_{e\in \E} w(e)} \cdot \OUT^{w(\y)} \le N \cdot \OUT$. By our loose analysis, the state-of-the-art method would take $O(\frac{N^{3}}{\OUT})$ expected time to obtain a uniform sample while Algorithm \ref{alg:sample-join-project} takes at most $O(N)$ expected time to obtain a uniform sample.

    \paragraph{Remark 2} Theorem \ref{the:general-acyclic-join-project} can be extended to support approximate counting following the reduction in \cite{chen2020random}. More specifically, each trial of Algorithm \ref{alg:sample-join-project} succeeds in returning a valid query result with probability $\frac{\OUT}{\OUT_\Join}$ and $\OUT_\Join$ can be computed exactly in $O(N)$ for acyclic queries. Putting everything together, we obtain:

        \begin{theorem}
        \label{the:general-acyclic-join-project}
        For an acyclic join-project query $\Q=(\V, \E,\y)$, there is an algorithm that can take as input an arbitrary instance $\R$ of input size $N$, output size $\OUT$ and full join size $\OUT_\Join$, and returns a constant-approximation of $\OUT$ with constant probability in $$O\left(
        N + \frac{\OUT_\Join}{\OUT} + \min_{w} N^{\sum_{e\in \E} w(e)} \cdot \OUT^{w(\y)-1}\right)$$ time, where $w$ is the fractional edge covering of $\Q_\y = (\V, \E \cup \{\y\})$.
    \end{theorem}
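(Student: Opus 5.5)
The plan is to combine three ingredients established above: the sampler of Algorithm~\ref{alg:sample-join-project}, used together with the acyclic acceptance analysis behind the first version of Theorem~\ref{the:general-acyclic-join-project}; the exact computation of $\OUT_\Join$ in $O(N)$ time by Yannakakis-style counting; and the counting-to-sampling reduction of Proposition~\ref{prop:reduction}.

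\textbf{Step 1 (preprocessing, $O(N)$).} Build the index of \cite{zhao2018random}, so that \textsc{JoinSample} returns a uniform full join result in $O(1)$ time with probability $1$. Compute $W=\OUT_\Join$ exactly by a bottom-up pass over the join tree. Each call of Algorithm~\ref{alg:sample-join-project} then returns a query result with probability exactly $\frac{\OUT}{\OUT_\Join}$, and the returned result is uniform, by the same computation as in the matrix case.

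\textbf{Step 2 (cost per trial).} Each trial uses Algorithm~\ref{alg:accept-join-project} (the variant~I acceptance), never the $O(N)$-per-trial variant~II. A trial that draws $s$ with $t=\pi_\y s$ costs $O(1+\expt[F+1])$, where $\expt[F+1]=\textrm{AGM}(\Q_{\bar\y},\{|R_e\ltimes t|\})/\deg(t)$. Averaging over $s$, the expected cost per trial is
\[1+\frac{1}{\OUT_\Join}\sum_{t\in\Q(\R)}\textrm{AGM}(\Q_{\bar\y},\{|R_e\ltimes t|\}).\]
By the decomposition lemma of \cite{ngo2014skew}, this is at most $1+N^{\sum_e w(e)}\OUT^{w(\y)}/\OUT_\Join$ for every fractional edge covering $w$ of $\Q_\y$, so we may take the minimum over $w$.

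One subtlety must be handled here. The per-trial work must avoid the $O(N)$ term that appeared in the acyclic analysis, which came from building $\R_s$. For acyclic residuals we instead argue that the sampling universe is a Cartesian product of integrally selected relations restricted by $t$. Sampling from it and testing membership then use only the property-testing primitives (access neighbor, test tuple), in $O(1)$ time per draw. This is the step I expect to need the most care. I would first try to show that each selected relation's $R_e\ltimes t$ is directly indexable via the degree and neighbor primitives, so that no materialization is needed. A membership test for $s'\in\Q_{\bar\y}(\R_s)$ then costs $O(|\E|)=O(1)$ test-tuple calls.

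\textbf{Step 3 (reduction).} Apply Proposition~\ref{prop:reduction} with $W=\OUT_\Join$. A constant approximation with constant probability needs $Y=O(\OUT_\Join/\OUT)$ trials. The guess for $\OUT$ is obtained by the same geometric halving as in Algorithm~\ref{alg:approximate-count-with-threshold}; the total cost of the halving steps is dominated by the final guess, and a Markov bound makes the time guarantee hold with constant probability. The expected total cost is then
\[O\!\left(N+\frac{\OUT_\Join}{\OUT}\Bigl(1+\frac{N^{\sum_e w(e)}\OUT^{w(\y)}}{\OUT_\Join}\Bigr)\right)=O\!\left(N+\frac{\OUT_\Join}{\OUT}+N^{\sum_e w(e)}\OUT^{w(\y)-1}\right).\]
Minimizing over $w$ gives the stated bound.
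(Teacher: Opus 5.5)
Your outer structure is the paper's own (Remark~2): Algorithm~\ref{alg:sample-join-project} with the acyclic \textsc{JoinSample} index, $\OUT_\Join$ computed exactly in $O(N)$, success probability $\OUT/\OUT_\Join$ fed into Proposition~\ref{prop:reduction}, and the per-trial cost bounded by the decomposition lemma. Your handling of the unknown $\OUT$ and of the stray $O(N)$ per trial is more careful than the paper's and is fine.

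The gap is in Step~2, and it is the size of the universe, not the indexability point you expected. A Cartesian product over an integral cover $C$ of the residual has $|S_t|=\prod_{e\in C}|R_e\ltimes t|$. You (like the paper) equate this with $\textrm{AGM}(\Q_{\bar\y},\{|R_e\ltimes t|\})$ by invoking ``acyclic joins have an integral optimal fractional edge cover.'' That fact concerns the unweighted number $\rho^*$. Under the non-uniform sizes $|R_e\ltimes t|$ it fails even for $\alpha$-acyclic residuals. Take schemas $\{A,B,C\},\{A,B,Y_1\},\{B,C,Y_2\},\{C,A,Y_3\}$ with $\y=\{Y_1,Y_2,Y_3\}$; this is acyclic, being a star around $\{A,B,C\}$. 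The residual is $\{A,B,C\},\{A,B\},\{B,C\},\{C,A\}$. If the $\{A,B,C\}$ relation has $N$ tuples and each pair relation restricted by $t$ has $\sqrt N$, every integral cover has product $N$, while weight $\frac12$ on the three pairs gives $N^{3/4}$. So the decomposition lemma, which controls $\sum_t\prod_e|R_e\ltimes t|^{w(e)}$, does not bound $\sum_t|S_t|$.

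This breaks the claimed bound, not only the bookkeeping. Take two disjoint copies of that gadget. In each, let the $\{A,B,C\}$ relation be $[N^{1/3}]^3$, and for each of $\sqrt N$ values of every $Y_i$ use a random set of $\sqrt N$ pairs. With high probability $\OUT=\Theta(N^3)$, $\deg(t)=\Theta(N)$ and $\OUT_\Join/\OUT=\Theta(N)$. Weight $\frac12$ on the six pair relations and on $\y$ gives $N^{\sum_e w(e)}\OUT^{w(\y)-1}=N^{3/2}$, so the statement promises $O(N^{3/2})$. But every integral universe has $|S_t|\ge N^2$, so $\expt[F+1]=\Theta(N)$ per trial and your algorithm spends $\Theta(N^2)$ in expectation. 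Gluing the copies through a single-valued non-output attribute keeps $\Q$ connected and still leaves an $N^{1/4}$ gap.

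The paper's argument makes the same move (the integral-cover remark before Algorithm~\ref{alg:accept-join-project}), so you inherited the gap rather than introduced it. The argument is sound when the residual's edge-cover polytope is integral and the best integral cover is chosen per $t$. That holds when the residual is a single attribute or a path, as for matrix, star and chain queries. In general you need a residual sampler with two properties: its universe has the fractional AGM size under the sizes $|R_e\ltimes t|$, and it supports draws without replacement in $O(1)$ time off the global $O(N)$ index. Neither your proposal nor the paper supplies one.
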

    
    \begin{theorem}
        \label{the:general-cyclic-join-project}
        For a cyclic join-project query $\Q=(\V, \E,\y)$, there is an algorithm that can take as input an arbitrary instance $\R$ of input size $N$, output size $\OUT$ and full join size $\OUT_\Join$, and builds an index in $O(N)$ time such that a uniform sample from the query result $\Q(\R)$ can be returned in $$O\left(
        \frac{N^{\rho^*(\Q_\Join)}}{\OUT}
        + \frac{N\cdot \OUT_\Join}{\OUT} + \min_{w} \frac{N^{\sum_{e\in \E} w(e)}}{\OUT^{1-w(\y)}}\right)$$ expected time, where $\rho^*(\Q_\Join)$ is the fractional edge covering number of $\Q_\Join = (\V,\E)$, and $w$ is the fractional edge cover of $\Q_\y = (\V, \E \cup \{\y\})$.
    \end{theorem}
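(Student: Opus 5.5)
The plan is to combine Algorithm~\ref{alg:sample-join-project}, instantiated for cyclic $\Q_\Join$, with the two acceptance routines. \textsc{AcceptJoinProject-I} (Algorithm~\ref{alg:accept-join-project}) gives the third term, and \textsc{AcceptJoinProject-II} (Algorithm~\ref{alg:accept-join-project-alternative}) gives the $N$-dependent term. The proof has three parts: correctness, cost per invocation, and the number of invocations needed for one success.

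\textbf{Correctness.} The preprocessing builds, in $O(N)$ time, the index of \cite{deng2023join,kim2023guaranteeing} for the cyclic full join $\Q_\Join$. Each call of \textsc{JoinSample} takes $\O(1)$ time. With probability $\frac{\OUT_\Join}{N^{\rho^*(\Q_\Join)}}$ it returns a uniform $s\in\Q_\Join(\R)$. I then show that for $t=\pi_\y s$ the acceptance routine returns \true{} with probability exactly $\frac{1}{\deg(t)}$, where $\deg(t)=|\Q_{\bar{\y}}(\R_s)|$.
\begin{itemize}
\item For Algorithm~\ref{alg:accept-join-project-alternative}, this holds by definition.
\item For Algorithm~\ref{alg:accept-join-project}, I argue as in Section~\ref{sec:sampling-matrix-analysis}. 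Sampling without replacement from the universe $S$, whose size is $|S|=\textrm{AGM}(\Q_{\bar{\y}},\{|R_e\ltimes t|\})$, makes $F\sim\texttt{NHG}(1,|S|-1,\deg(t)-1)$. Hence $\expt[F+1]=\frac{|S|}{\deg(t)}$, and Proposition~\ref{prop:simulation} gives acceptance probability $\frac{1}{\deg(t)}$.
\end{itemize}
Summing over the $\deg(t)$ witnesses, each $t\in\Q(\R)$ is output in one trial with probability exactly $\frac{1}{\OUT_\Join}$. The output is therefore uniform, and one trial succeeds with probability $\frac{\OUT}{\OUT_\Join}$.

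\textbf{Cost per trial.} The first stage, obtaining a valid full-join sample, takes $O(\frac{N^{\rho^*(\Q_\Join)}}{\OUT_\Join})$ expected time. For the acceptance stage there are two options.
\begin{itemize}
\item Using Algorithm~\ref{alg:accept-join-project-alternative}, the residual instance $\R_s$ is built and counted. The worst-case optimal join algorithm bounds this cost, but a cyclic residual query may not be countable in $O(N)$ time. I would instead bound it by building $\R_s$ in $O(N)$ time and counting via the Algorithm~\ref{alg:accept-join-project} route, which is the step that needs care.
\item Using Algorithm~\ref{alg:accept-join-project}, setting up the residual sampler costs $O(N)$, which gives the $\frac{N\cdot\OUT_\Join}{\OUT}$ term after multiplying by the number of trials. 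The loop costs $\expt[F+1]$ calls of $\O(1)$ each. Averaging over $s$ gives
\[\sum_{t}\frac{\deg(t)}{\OUT_\Join}\cdot\frac{\textrm{AGM}(\Q_{\bar{\y}},\{|R_e\ltimes t|\})}{\deg(t)}\le\frac{N^{\sum_e w(e)}\OUT^{w(\y)}}{\OUT_\Join}\]
for every fractional edge cover $w$ of $\Q_\y$, using the decomposition lemma of \cite{ngo2014skew}. Each residual sampling draw may itself be rejected by the cyclic join sampler. I fold that rejection into the universe $S$ by letting $S$ be exactly the AGM-sized universe, so that each draw costs $\O(1)$.
\end{itemize}

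\textbf{Total.} Multiplying the per-trial cost by the expected $\frac{\OUT_\Join}{\OUT}$ trials gives
\[O\!\left(\frac{N^{\rho^*(\Q_\Join)}}{\OUT}+\frac{N\cdot\OUT_\Join}{\OUT}+\frac{N^{\sum_e w(e)}}{\OUT^{1-w(\y)}}\right),\]
and we may minimize over $w$ because $w$ is chosen only in the analysis. The main obstacle is justifying the decomposition bound, namely $\sum_t\textrm{AGM}(\Q_{\bar\y},\{|R_e\ltimes t|\})\le N^{\sum_e w(e)}\OUT^{w(\y)}$ with the sum ranging only over $t\in\Q(\R)$. I would prove it by induction over the output attributes as in the proof of \cite{ngo2014skew}. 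Restricting the sum to actual outputs is what lets the weight $w(\y)$ be charged to $\OUT$, using H\"older's inequality at each step. A second point to check is that sampling without replacement over a cyclic universe can still be simulated by Fisher--Yates on the index space of the \cite{deng2023join} sampler.
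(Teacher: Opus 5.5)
Your proposal is correct and follows essentially the paper's own argument. For cyclic queries the paper also uses only \textsc{AcceptJoinProject-I} (the $\frac{N\cdot\OUT_\Join}{\OUT}$ term comes from the $O(N)$ per-trial setup of $\R_s$ and the residual sampler, not from \textsc{AcceptJoinProject-II}), charges $\frac{N^{\rho^*(\Q_\Join)}}{\OUT_\Join}$ per trial for the cyclic \textsc{JoinSample} rejections, bounds the acceptance loop by $\frac{1}{\OUT_\Join}\sum_{t}\textrm{AGM}(\Q_{\bar{\y}},\{|R_e\ltimes t|\})$ via the decomposition lemma of \cite{ngo2014skew}, and multiplies by $\frac{\OUT_\Join}{\OUT}$ expected trials; the two points you flag (the decomposition bound restricted to $t\in\Q(\R)$, and simulating without-replacement draws over the cyclic sampler's AGM-sized universe) are simply cited or taken for granted in the paper rather than proved.
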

   
    \paragraph{Remark 3} For cyclic join-project queries, even if we could bypass this initialization step, the leading term $\frac{N^{\rho^*(\Q_\Join)}}{\OUT}$ in our total expected time is already as costly as the state-of-the-art methods \cite{kim2023guaranteeing, deng2023join, chen2020random}, hence Theorem \ref{the:general-cyclic-join-project} does not gain any improvement for cyclic queries.

   \end{document}

%% file: def/jc-def.tex
\usepackage[normalem]{ulem}
\newcommand\redsout{\bgroup\markoverwith{\textcolor{red}{\rule[0.5ex]{2pt}{1pt}}}\ULon}

\theoremstyle{plain}
\newtheorem{theorem}{Theorem}[section]
\newtheorem{proposition}[theorem]{Proposition}
\newtheorem{lemma}[theorem]{Lemma}

\theoremstyle{definition}
\newtheorem{definition}[theorem]{Definition}

\newtheorem{problem}[theorem]{Problem}

\newcommand{\unif}[2]{\textup{{\texttt{Uniform}}}(#1,#2)}

\SetKw{Continue}{continue}
\SetKw{Break}{break}